% !BIB TS-program = biber

\documentclass{theoretics}
\pdfsuppresswarningpagegroup=1 % https://tex.stackexchange.com/q/76273
% This removes the 
% PDF inclusion: multiple pdfs with page group included in a single page
% warnings.

\title{A Discrete Analog of Tutte's Barycentric Embeddings on Surfaces}

\ThCSauthor[LIGM]{Éric Colin de Verdière}{eric.colin-de-verdiere@univ-eiffel.fr}%[Orcid goes here]
\ThCSaffil[LIGM]{LIGM, CNRS, Univ Gustave Eiffel, F-77454 Marne-la-Vallée, France}
\ThCSauthor[IUF]{Vincent Despré}{vincent.despre@loria.fr}%[Orcid goes here]
\ThCSaffil[IUF]{Institut universitaire de France (IUF), Université de Lorraine, CNRS, LORIA, France}
\ThCSauthor[ND]{Loïc Dubois}{ldubois@nd.edu}[0000-0002-0366-4959]
\ThCSaffil[ND]{University of Notre Dame, USA}

\ThCSthanks{%
    The conference version of this paper was published in SODA 2025~\cite{DBLP:conf/soda/VerdiereD025}. %
    The first and second authors were partly supported by grant ANR-25-CE40-0416 of the French National Research Agency (project SUGAR).
    The second author was also partly supported by grant ANR-23-CE48-0017 of the French National Research Agency ANR (project Abysm).
    This work was done while the third author was working at LIGM, CNRS, Univ Gustave Eiffel, F-77454 Marne-la-Vallée, France.
}
% \ThCSthanks{%
%     The conference version of this paper was published in SODA 2025~\cite{DBLP:conf/soda/VerdiereD025}. %
%     The first author was partly supported by grant ANR-25-CE40-0416 of the French National Research Agency (project SUGAR).
%     The second author was partially supported by grant ANR-23-CE48-0017 of the French National Research Agency ANR (project Abysm) and by grant ANR-25-CE40-0416 of the French National Research Agency (project SUGAR).
% }
\ThCSshortnames{É.\ Colin de Verdière, V.\ Despré, L.\ Dubois}
\ThCSshorttitle{A Discrete Analog of Tutte's Barycentric Embeddings on Surfaces}
\ThCSyear{2026}
\ThCSarticlenum{11}
\ThCSreceived{Jul 21, 2025}
\ThCSrevised{Mar 9, 2026}
\ThCSaccepted{May 5, 2026}
\ThCSpublished{Jul 3, 2026}
\ThCSdoicreatedtrue
\ThCSkeywords{Computational topology; surface; graph; algorithm; embedding; Tutte's barycentric method; reducing triangulations.}

%\usepackage[T1]{fontenc}
%\usepackage[utf8]{inputenc}
%\PassOptionsToPackage{safeinputenc}{biblatex}
\addbibresource{bib.bib}

\newcommand\emphdef[1]{\emph{\textbf{#1}}}

% For restated theorems
\makeatletter
\newcommand\IfRestateTF{%
  \ifx\label\thmt@gobble@label % or just compared to \@gobble
    \expandafter\@firstoftwo
  \else
    \expandafter\@secondoftwo
  \fi
}
\makeatother
\newcommand{\RestateRemark}{\IfRestateTF{{\normalfont\bfseries (Restated) }}{}}

\begin{document}

%\makeatletter
%\newif\ifqed
%\newcommand{\qedhere}{\global\qedtrue\qquad\vbox{\hrule height0.6pt\hbox{%
%   \vrule height1.3ex width0.6pt\hskip0.8ex
%   \vrule width0.6pt}\hrule height0.6pt
%  }}
%\def\proof{\global\qedfalse\@ifnextchar[%]            
%  {\@xproof}{\@proof}}
%\def\endproof{\ifqed\else\qquad\vbox{\hrule height0.6pt\hbox{%
%   \vrule height1.3ex width0.6pt\hskip0.8ex
%   \vrule width0.6pt}\hrule height0.6pt
%  }\fi\endtrivlist}
%\def\@proof{\trivlist\item[\hskip\labelsep{\em Proof.}]\ignorespaces}
%\def\@xproof[#1]{\trivlist\item[\hskip\labelsep{\em #1.}]
%  \ignorespaces}
%\makeatother

\maketitle

\begin{abstract}
  Tutte's celebrated barycentric embedding theorem describes a natural way to build straight-line embeddings (crossing-free drawings) of a (3-connected) planar graph: map the vertices of the outer face to the vertices of a convex polygon, and ensure that each remaining vertex is in \emph{convex position}, namely, a barycenter with positive coefficients of its neighbors.  Actually computing an embedding then boils down to solving a system of linear equations.  A particularly appealing feature of this method is the flexibility given by the choice of the barycentric weights.  Generalizations of Tutte's theorem to surfaces of non-positive curvature are known, but due to their inherently continuous nature, they do not lead to an algorithm. In fact, it is not even clear what a discrete version would be, and our first contribution is to propose a natural setup for this purpose.

  In this paper, we introduce a purely discrete analog of Tutte's theorem for surfaces (with or without boundary) of non-positive curvature, based on the recent notion of \emph{reducing triangulations}.  We prove a Tutte theorem in this setting: every drawing homotopic to an embedding such that each vertex is \emph{harmonious} (a discrete analog of being in convex position) is a \emph{weak embedding} (arbitrarily close to an embedding).  We also provide a polynomial-time algorithm to make an input drawing harmonious without increasing the length of any edge, in a similar way as a drawing can be put in convex position without increasing the edge lengths.
\end{abstract}

\section{Introduction}

\paragraph{Tutte's barycentric method and its generalizations.}

One of the most basic problems in graph drawing is that of constructing straight-line embeddings (crossing-free drawings) of planar graphs~\cite{v-pslda-06}.  Tutte's celebrated barycentric embedding theorem~\cite{tutte1963draw} (1963) lies at the root of the wide class of force-directed drawing algorithms~\cite{k-fdda-06}.  It provides a remarkably simple method to build straight-line embeddings of a (simple, 3-connected) planar graph, assuming the knowledge of a facial cycle in some (topological) planar embedding: map the vertices of that cycle to the vertices of a convex polygon, in the same order, and locate the other vertices in such a way that, if the inner edges are replaced by springs, then this physical system is at its equilibrium.

Tutte's initial proof~\cite{tutte1963draw} has been revisited many times; see, in particular, Richter-Gebert~\cite[Section~12.2]{r-rsp-96}, Thomassen~\cite{thomassen2004tutte}, or Edelsbrunner~\cite[Section~I.4]{eh-cti-09}, thus leading to more insight.  As it turns out, it suffices that each inner vertex be a barycenter of its neighbors, with some arbitrary positive coefficients~\cite{f-psast-97}.  (The coefficients need not be symmetric, so this is more general than requiring an equilibrium in a spring system, even if the springs may have different rigidities.)  For this reason, Tutte drawings are called \emph{barycentric}, and they can be computed by solving a system of linear equations.  Equivalently, it suffices that each vertex be in \emph{convex position} with respect to its neighbors: disregarding some degenerate cases, every straight line containing an inner vertex~$v$ sees edges incident to $v$ on both sides.  (The 3-connectivity assumption  avoids overlaps.)

Tutte's method has been seminal not only for graph drawing.  Floater and Gotsman~\cite{fg-hmti-99} and Gotsman and Surazhsky~\cite{gs-gifpm-01} use it to build morphings between two straight-line embeddings of the same planar graph.  In a nutshell, one can interpolate between two embeddings in convex position by moving the barycentric coefficients from those of the initial embedding to those of the final embedding; Tutte's theorem guarantees that the drawing will be an embedding at every step.  Other applications include surface parameterization and approximation~\cite{f-psast-97}.

Of special interest to us is a particular generalization of Tutte's theorem to graphs drawn on a surface~$S$ without boundary and of non-positive curvature (in particular, such surfaces are not homeomorphic to a sphere).  Consider a graph~$G$ embedded as (the 1-skeleton of) a triangulation of~$S$ (the embedding needs not be geodesic).  Y.~Colin de Verdière~\cite[Théorème~2]{de1991comment} (see also Hass and Scott~\cite[Lemma~10.12]{hs-seshm-15} and Luo, Wu, and Zhu~\cite[Theorem~1.6]{lwz-dsgtg-23}) proves that if this embedding is deformed by a homotopy (a continuous motion) in such a way that, in the resulting drawing, each edge is geodesic and each vertex is in convex position, then this drawing is actually an embedding.  A similar result holds for surfaces with boundary~\cite[Théorème~3]{de1991comment}, fixing part of the triangulation to the boundary.  Without the requirement to have a triangulation, degenerate cases could occur, so one would not always get an embedding.  The assumption on non-positive curvature seems necessary because it ensures that geodesics (locally shortest paths) cross minimally.

As a side note, for the special case of the torus, other graph embedding methods based on Schnyder woods exist~\cite{gl-tmswo-14}, but they lack the flexibility of barycentric embeddings, and to our knowledge they have not been extended to higher genus.

\paragraph{Towards a discrete version.}
In this paper, we introduce a combinatorial analog of Tutte's barycentric method, in the plane and on surfaces, leading to a purely discrete algorithm.  A first motivation comes from the instability of embeddings under small perturbations.  In the plane, Tutte embeddings can be computed by solving a system of linear equations, but even in that case, the resolution of Tutte embeddings (the maximum distance between two vertices, or between a vertex and a non-incident edge, assuming the minimum edge length is at least one) can be exponential in the number of vertices of the graph~\cite{eg-dspgt-96,df-tfgrp-21}, so rounding may easily create crossings.  The situation is worse on surfaces.  On piecewise-linear surfaces, computing shortest paths exactly requires exact arithmetic to compare sums of square roots, which is not known to be doable in polynomial time~\cite{ehs-ibssr-24}.   Obviously, the situation is even worse on hyperbolic surfaces.  For smooth surfaces of negative curvature, it is not even clear how to represent embeddings exactly.

A second motivation comes from the existing applications of Tutte's theorem to morphing.  In this setting, one cannot expect closed-form descriptions of the morph.  In contrast, in a discrete world, one builds a sequence of local moves from an embedding to the other, in which each intermediate step is an embedding, thus computing a ``discrete isotopy'' between two given embeddings. Our result is a first step towards building such a sequence in which the edge lengths do not vary too much.

More generally, when devising algorithms for topological problems on surfaces, one often faces the following challenge.  On the one hand, these problems are, in principle, easily solved by endowing the surface with a metric of negative curvature, which then enjoys useful properties (this is not possible for sporadic surfaces such as the sphere or the torus, but those can usually be dealt with separately).  On the other hand, such smooth surfaces are not suitable for discrete algorithms.  One thus has to design discrete models that retain the nice properties of smooth surfaces as much as possible.  Such a discretization generally takes the following form: one considers a fixed ``host'' graph (cellularly) embedded on a surface~$S$, and restricts all curves to lie in that graph.  The properties of the host graph, and the definition of a length or of a ``canonical''/``reduced''/shortest path, vary greatly, but a large body of the literature solves topological problems by following this pattern, and we provide a partial review now.

The earliest instance in this vein is probably Dehn's algorithm~\cite{d-tkzf-12} to  determine whether a closed curve on a surface is contractible; this algorithm lies at the root of the entire field of small cancellation theory.  Much more recent works study the same problem and extensions, using finer discrete models such as \emph{octagonal decompositions} by \'E.~Colin de Verdière and Erickson~\cite{CdVE10} to compute shortest homotopic curves, \emph{systems of quads}, introduced by Lazarus and Rivaud~\cite{lazarus2012homotopy} to test homotopy between curves, refined by Erickson and Whittlesey~\cite{ew-tcsr-13} for the same problem, and reused by Despré and Lazarus~\cite{despre2019computing} to compute the geometric intersection number of curves.  Arguably more powerful, and more related to our work, the model of \emph{reducing triangulations}, very recently introduced by \'E.~Colin de Verdière, Despré, and Dubois~\cite{de2024untangling} has been used to decide whether a given drawing of a graph is homotopic to an embedding (and to compute one if it is the case). The host graph is (the 1-skeleton of) a triangulation of a surface (homeomorphic to neither the sphere nor the torus); in this model, curves enjoy the following properties, similar to surfaces of non-positive curvature: each walk  has a unique \emph{reduced} walk homotopic to it, and the property of being reduced is stable by reversal and taking subwalks.  Intuitively, each triangle is equilateral, but there are certain consistent rules to ``break ties'' among equal-length shortest paths.

Under this viewpoint, not every graph that embeds on~$S$ will have an embedding in the host graph, because there may be ``not enough room'' in it.  It is thus natural to consider \emph{weak embeddings}, namely, drawings of graphs in the host graph that can be turned into embeddings on~$S$ by an arbitrarily small perturbation.  In this spirit, Akitaya, Fulek, and Tóth~\cite{akitaya2019recognizing} provide a near-linear time algorithm to decide if a given drawing of a graph is a weak embedding, and if so to produce an actual embedding in a neighborhood of the host graph, encoded combinatorially.

%%%%%%%%%%%%%%%%%%
\paragraph{Our contributions.}

Our discrete model of non-positively curved surfaces is that of reducing triangulations~\cite{de2024untangling}.  However, the original model consists of a triangulation with all vertices of degree at least eight, similar to surfaces of negative curvature.  We relax this assumption by allowing degree-six vertices, allowing the discrete analog of ``flat regions'' on surfaces.  The original model is quite constrained; for example, for a surface of fixed genus, there exists only finitely many reducing triangulations homeomorphic to it.  In contrast, our extension allows for finer and finer triangulations of various shapes.  More importantly, we define \emph{harmonious drawings}, which are a natural discrete analog of drawings in which each vertex is in convex position with respect to its neighbors.

Our central result is a purely discrete analog of Tutte's theorem and its generalization to surfaces.  The version for surfaces without boundary is as follows:
\begin{restatable}{theorem}{tuttetheorem}\label{tutte theorem}\RestateRemark
  Let $S$ be an orientable surface without boundary homeomorphic to neither the sphere nor the torus. Let $T$ be a reducing triangulation of $S$. Let $G$ be a graph, and let $f: G \to T^1$ be a harmonious drawing. There is an embedding homotopic to $f$ in $S$ if and only if $f$ is a weak embedding.
\end{restatable}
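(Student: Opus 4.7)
The statement is an equivalence. The easier direction --- if $f$ is a weak embedding, then $f$ is homotopic to an embedding --- is immediate from definitions: $f$ admits arbitrarily small perturbations that are embeddings, and any sufficiently small such perturbation is an ambient isotopy, hence in particular a homotopy.

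The forward direction is the substantive one, and my plan is to discretize Y.~Colin de Verdière's continuous argument for non-positively curved surfaces in the reducing-triangulation setting. Let $h : G \to S$ be an embedding homotopic to $f$. The first step is to normalize $f$ by replacing each edge by its unique reduced representative in $T^1$ homotopic to it rel endpoints, producing a drawing $f'$ that is still homotopic to $h$, still harmonious (harmoniousness is a local condition at each $G$-vertex, depending only on the initial darts of its incident edges, which interior reduction does not alter), and a weak embedding if and only if $f$ is. It therefore suffices to show that $f'$ is a weak embedding. For this I would invoke a local-to-global rotation-system criterion in the spirit of Akitaya, Fulek, and Tóth: it suffices to produce, at each $G$-vertex and at each $T$-vertex or $T$-edge visited by several strands of $f'$, compatible cyclic orderings of the local dart-types, together defining a topological rotation system realizing an embedding homotopic to $f'$. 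Harmoniousness supplies the cyclic orders at the $G$-vertices; the embedding $h$, transported back along the homotopy to $f'$, supplies the corresponding orderings at each $T$-vertex.

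The hard part, as in the continuous proof, will be to show that two distinct reduced edges of $f'$ whose images under $h$ are disjoint cannot cross in an essential, non-perturbable way. This is the discrete analog of the fact that homotopically disjoint geodesics on a non-positively curved surface are disjoint. I expect it to follow from the properties of reducing triangulations established in~\cite{de2024untangling} --- uniqueness of reduced walks and stability of reducedness under taking subwalks and reversal --- via a bigon argument: an essential crossing would bound an innermost bigon whose sides, being subwalks of reduced walks, are themselves reduced and homotopic rel endpoints, contradicting uniqueness. The main technical subtlety I anticipate is handling degree-six (``flat'') vertices, where the strict inequalities of the negatively-curved model degenerate; this is likely where one has to content with a \emph{weak} rather than strict embedding conclusion, and it will require strengthening the bigon analysis to accommodate the equality cases introduced by our relaxed model.
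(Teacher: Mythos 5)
Your proposal has the right spirit in places but contains genuine gaps that would prevent it from going through, and it diverges substantially from the paper's actual route.

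The most concrete problem is your first step.  You claim that replacing each edge by its reduced representative preserves harmoniousness because ``harmoniousness is a local condition at each $G$-vertex, depending only on the initial darts of its incident edges.''  This is false.  Strong harmoniousness at a vertex $v$ asks, for each line $L$ through $f(v)$, for a walk $W$ based at $v$ in $G$ whose image \emph{escapes} $L$; as the definition makes explicit, such a walk may travel arbitrarily far along $L$, oscillating forward and backward, before finally leaving $L$ to the prescribed side.  The condition is therefore global in $G$, and the reduction of an edge anywhere along such a witness walk $W$ can destroy the escape property (and also alter the initial darts at nearby vertices if the pre-reduction edges began with spurs or $0$-turns).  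Nor does reduction even make sense as a normalization here, because harmonious drawings need not have reduced edges at all; indeed the paper's harmoniousness definition deliberately includes the weaker case $f=C\circ f'$ with $C$ a reduced closed walk, and in that regime you cannot hope to freely reduce individual edges while keeping the factorization.  Your proposal never addresses this non-strongly-harmonious branch of the definition, but it is precisely where Lemma~\ref{higher genus problem} shows a purely strong-harmoniousness approach must fail.

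Second, the bigon argument you sketch is correct \emph{for a pair of reduced closed curves}, and the paper uses essentially that idea in Lemma~\ref{curves weak embedding}; but it does not extend to arbitrary graph drawings.  In a general drawing the strands meeting at a $T$-vertex can be numerous and need not be locally reduced there, so the crossing pattern at a vertex is not governed by reduced-walk uniqueness.  The paper instead extends $G$ to the $1$-skeleton of a full triangulation $Z$ of $S$ (Proposition~\ref{prop:extension}), proves a coherent orientation statement for the resulting simplicial map $\varphi:Z\to T$ via a delicate half-plane/cone/comb crossing-parity argument in the universal cover (Proposition~\ref{prop:tutteinplane}), converts this into an actual homeomorphism of the patch system via a degree argument (Proposition~\ref{prop:selfmaps}), and only then descends to a weak embedding of $G$ via a combinatorial swap argument on the patch system (Proposition~\ref{tutteinpatch}).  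Your ``local-to-global rotation-system criterion'' hand-waves past all of this, and in particular the claim that harmoniousness ``supplies the cyclic orders at the $G$-vertices'' is not substantiated: harmoniousness constrains only which lines can be escaped, not a full cyclic order of the strands at a $T$-vertex.

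In short: the easy direction is fine, but the substantive direction as you outline it rests on a false locality claim, omits the non-strongly-harmonious case entirely, and replaces the main technical core of the paper (the triangulation extension, the coherent-orientation argument, the degree-theoretic step, and the patch-system swaps) with an appeal to a bigon argument that is only valid for curves.
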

We emphasize surfaces without boundary as they constitute the hardest cases, but we obtain similar results on all the surfaces with boundary, attaching or not vertices of the drawing to the boundary; this includes the case of the disk, similar to Tutte's original result.  In contrast, the case of the sphere is not relevant in this context (it does not admit reducing triangulations, because it cannot be endowed with a metric of non-positive curvature).  Moreover, our results are not valid on the torus; this case is also very particular, since it admits a flat metric, but it turns out that we need non-positive curvature \emph{and} at least one point of negative curvature; see \autoref{torus problem} below.  Tutte's theorem on the special case of the torus has been studied recently by Erickson and Lin~\cite[Appendix~A and Section~1.1.2]{el-ptmme-23}, who overcome intrinsic difficulties in this case and apply it to morphing; see also Gortler, Gotsman, and Thurston~\cite{ggt-dofma-06}.

Last but not least, we provide an efficient polynomial-time algorithm to make a drawing harmonious.  Ideally, one would like not only to build a single harmonious drawing, but to have some flexibility similar to the choice of the barycentric weights.  Tutte's original method amounts to minimizing the energy of the physical system, namely, the weighted sum of the squares of the lengths of the edges.  Our algorithm actually possesses a stronger property: it  proceeds by ``local'' moves, which never increase the length of any edge of the drawing.  In this way, it allows us to build many harmonious drawings.  In detail:
\begin{restatable}{theorem}{harmonizationtheorem}\label{harmonization theorem}\RestateRemark
  Let $S$ be an orientable surface without boundary homeomorphic to neither the sphere nor the torus. Let $T$ be a reducing triangulation of $S$, with $m$ edges. Let $G$ be a graph, and let $f : G \to T^1$ be a drawing of size $n$. We can compute in $O((m+n)^2n^2)$ time a drawing $f': G \to T^1$, harmonious, homotopic to $f$ in $S$, such that for every edge $e$ of $G$, the image of~$e$ under~$f'$ is not longer than under~$f$.
\end{restatable}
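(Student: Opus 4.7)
The plan is an iterative local-improvement scheme: starting from $f$, while some vertex of $G$ violates the harmonious condition, I would perform a local move that strictly decreases a combinatorial potential, without ever lengthening any individual edge walk. The potential I would use is the total combinatorial length $\sum_{e \in E(G)} |f(e)|$, a non-negative integer bounded above by the drawing size $n$. As a preprocessing step, I replace each walk $f(e)$ by its unique reduced representative in its homotopy class relative to endpoints; by the main machinery of \cite{de2024untangling} this is possible and polynomial-time in reducing triangulations, and it does not increase any length. Thereafter I maintain the invariant that every edge walk is reduced.

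For the main loop, I pick a vertex $v$ with $u_0 := f(v)$ for which the harmonious condition fails, so that the first steps of the walks $f(e)$ for $e$ incident to $v$ all lie in a proper combinatorial sector of the link of $u_0$ in $T$. I then choose an adjacent vertex $u_1$ of $u_0$ on the crowded side of $u_0$ (the discrete analog of moving $v$ toward the barycenter of its neighbors), replace $f(v)$ by $u_1$, prepend the edge from $u_1$ to $u_0$ to each incident walk, and re-reduce. The geometric heart of the argument is the claim that, for a suitable choice of $u_1$, each incident walk strictly shortens after the prepend-and-reduce step. This should follow from the defining features of reducing triangulations --- uniqueness of the reduced walk in each homotopy class, together with the fact that reducedness is preserved by subwalks and reversal --- combined with the non-harmonious hypothesis, which forces every incident reduced walk to begin with a step that the prepended edge cancels.

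Assuming the shortening claim, termination is immediate: the potential drops by at least one per iteration, so there are at most $n$ iterations. The output is harmonious because the loop exits only when no vertex violates the condition, and homotopic to $f$ because each move is a pre-composition of every incident walk with a backtracking edge followed by reduction. No walk is ever lengthened, which is exactly the non-lengthening property claimed. For the complexity, each iteration scans the $O(n)$ vertices of $G$ to detect a non-harmonious one, then performs at most $O(n)$ walk reductions on walks of length $O(m+n)$ in a triangulation of size $m$; using the reduction procedure of \cite{de2024untangling} at cost $O((m+n)^2)$ per walk, one iteration costs $O(n(m+n)^2)$, and the total cost is $O(n^2(m+n)^2)$.

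The main obstacle is establishing the geometric shortening lemma rigorously. The move is purely discrete, so the CAT(0)-style intuition behind the continuous Tutte theorem does not apply literally; the argument must be carried out at the level of walk combinatorics in $T$. The delicate point is to pinpoint the correct neighbor $u_1$ of $u_0$ among the several candidates in the crowded sector, and to verify that every incident walk reduces by at least one step upon prepending the edge from $u_1$ to $u_0$. I expect this to require a careful case analysis on the position of each walk's first one or two edges within the link of $u_0$, leveraging the fact that reducing triangulations allow degree-six vertices (as emphasized in the contributions paragraph) to ensure that enough angular room is available for the harmonious-failure condition to produce a uniformly cancelling direction $u_1$.
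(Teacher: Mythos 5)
Your plan hinges on the claim that, whenever a vertex $v$ witnesses a failure of harmony, you can move $f(v)$ to some neighbor $u_1$ of $u_0=f(v)$ so that \emph{every} incident walk strictly shortens after prepend-and-reduce. This claim is false, and the failure is exactly the central difficulty the paper has to overcome. The paper's Section~5 distinguishes three moves: a \emph{shortening} move (all incident edges leave $\hat f(v)$ via one, two, or three consecutive edges of $T^1$ --- this is your case, and it does shorten), a \emph{balancing} move on cycles, and a \emph{flip} move. The flip is triggered when the incident edges leave $\hat f(v)$ via two edges forming a $2_r$-turn; in that configuration, moving $v$ to the opposite vertex of the bad triangle keeps the map a homomorphism and leaves every edge length unchanged, while any other displacement would \emph{increase} some edge length. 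So the total combinatorial length is not a strictly decreasing potential, and the authors note explicitly that they do not know \emph{any} potential that certifies termination outright: ``one of these moves does not decrease the length of the drawing strictly, nor, to our knowledge, any kind of potential that would immediately ensure termination.'' Consequently, your ``at most $n$ iterations'' bound, and the whole structure of the termination argument, collapses.

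There is a second, related gap. Your stopping criterion is a vertex-local condition that corresponds to \emph{strong} harmony, but the theorem is about the weaker notion of harmony. Lemma~\ref{higher genus problem} shows some drawings cannot be made strongly harmonious by homotopy at all, so a loop that only exits on strong harmony would not terminate on valid inputs. The paper handles this through the balancing move, which acts on a whole cycle following a reduced closed walk rather than on a single vertex, and through Lemma~\ref{no move implies harmonious}, which shows that exhaustion of flips, shortenings, and balancings yields a harmonious (not necessarily strongly harmonious) drawing. Finally, the real technical work of the paper is bounding the number of \emph{flips} between consecutive length-decreasing moves; this requires the proper/monotonic ordering machinery and the flat-zone Lemma~\ref{flat zone} (which is where the non-torus assumption is actually used). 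None of that appears in your sketch, so the claimed running time, while coincidentally matching, is not actually derived. To repair the approach you would need to add the flip and balancing moves, prove the analogue of Lemma~\ref{no move implies harmonious}, and supply a genuine bound on the length of flip sequences rather than a potential argument.
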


In contrast to the aforementioned works, our input graph~$G$ is arbitrary; we do not need any connectivity requirement, nor require $G$ to be the 1-skeleton of a triangulation.  As a byproduct, we remark that together, these two theorems allow us to decide whether an input drawing of a graph is homotopic to an embedding, in polynomial time, although the running time is not as good as the recent algorithm by É.~Colin de Verdière, Despré, and Dubois~\cite{de2024untangling}: simply turn it into a homotopic harmonious drawing by \autoref{harmonization theorem} and return whether it is a weak embedding using the result of Akitaya, Fulek, and T\'oth~\cite{akitaya2019recognizing}; correctness follows from \autoref{tutte theorem}.

The paper by É.~Colin de Verdière et al.\ is restricted to reducing triangulations with minimum degree at least eight, and does not provide flexibility for computing an embedding if it exists~\cite[Lemma~4.1 and Proposition~4.2]{de2024untangling}.

%%%%%%%%%%%%%%%%%%%%%%%%%%%%%%%%%
\paragraph{Overview of the techniques.}

After reviewing some preliminaries in \autoref{sec:preliminaries}, we define (our generalized version of) reducing triangulations, and introduce the notion of harmonious drawings, in \autoref{sec:harmonious}.

\autoref{sec:tutte} is devoted to the proof of our discrete analog of Tutte's theorem (\autoref{tutte theorem}).  In spirit, the proof follows some of the steps of previous proofs of Tutte's theorem~\cite{de1991comment,eh-cti-09}:  we reduce to the case where $f$ is homotopic to an embedding of a triangulation of~$S$.  We then have a continuous map $\varphi$ from a triangulated copy of~$S$ to $S$ itself.  We prove that $\varphi$ is orientation-preserving (or degenerate) on each triangle. We deduce that $f$ can be turned into an embedding $f'$ by homotopy not only in $S$, but even in the neighborhood $\Sigma$ of the 1-skeleton of the reducing triangulation. In~$\Sigma$, we transform $f'$ by isotopy into an embedding arbitrarily close to $f$, thereby proving that $f$ is a weak embedding.
We remark that, since our goal is to prove that $f$ is a weak embedding, we do not have to worry about degenerate cases, which is one of the difficulties in the continuous case.  Rather, we \emph{allow} such degeneracies, which leads to challenges with a different flavor.

In \autoref{sec:harmonizing}, we prove \autoref{harmonization theorem}.  We provide homotopy moves that can be applied to a drawing whenever it is not harmonious, and we describe a procedure that applies those moves in a specific order so that we can prove termination. Substantial technicalities appear since one of these moves does not decrease the length of the drawing strictly, nor, to our knowledge, any kind of potential that would immediately ensure termination of the algorithm.

Finally, in \autoref{boundary}, we extend our results to surfaces with boundary; we reduce to the case of surfaces without boundary by filling the boundary components with surfaces (with genus), thereby extending the reducing triangulation.

%%%%%%%%%%%%%%%%%%%%%%%%%%%%%%%%%%%%%%%%%%%%%%%%%%%%%%%%%%%%%%%%%%%%%%%%%%%%%%%%%%
\section{Preliminaries}\label{sec:preliminaries}

%%%%%%%%%%%%%%%%%%%%%%%%%%%%%%%%%%%%%%%%%%%%%%%%%%%%%%%%%%%%%%%%%%%%%%%%%%%%%%%%%%
\subsection{Graphs, drawings, and maps}\label{preliminaries graphs}

In this paper \emphdef{graphs} are undirected, but may have loops and multiple edges. Every graph considered is also topological space. We consider both finite and infinite graphs, all locally finite (each vertex has a finite degree), and we use standard notions of graph theory, such as \emphdef{walks} (finite, semi-infinite, or bi-infinite). Writing the vertices of a bi-infinite walk $W$ as $(\dots w_{-1}, w_0, w_1, \dots)$, we say that $w_0$ is the \emphdef{central vertex} of $W$. The \emphdef{non-negative part} of $W$ is the semi-infinite walk $(w_0, w_1, \dots)$. We denote by $W_0 \cdot W_1$ the concatenation of two walks $W_0$ and $W_1$, and by $W_0^{-1}$ the reversal of $W_0$. Given a closed walk $C$, and $n \geq 0$, we denote by $C^n$ the $n$ times concatenation of $C$ by itself.

A \emphdef{drawing} of a graph $G$ on a graph $H$ is a map $f : G \to H$ that sends every vertex of $G$ to a vertex of $H$, and every edge of $G$ to a walk (possibly a single vertex) in $H$. A drawing $f$ is \emphdef{simplicial} if it sends each edge of $G$ to a vertex or an edge of $H$. A (graph) \emphdef{homomorphism} is a drawing that sends every edge of $G$ to an edge of $H$. Every drawing $f : G \to H$ \emphdef{factors} uniquely as a \emph{simplicial} map $\bar f : \bar G \to H$, where $\bar G$ is a subdivision of $G$, where every edge $e$ of $G$ whose image walk $f \circ e$ has length $n \geq 2$ is subdivided into a path of length $n$ in $\bar G$, and where $e$ is not subdivided otherwise. Also, a simplicial map $f$ \emphdef{factors} uniquely as a \emph{homomorphism} $\hat f : \hat G \to H$ for some graph $\hat G$: the graph~$\hat G$ is obtained from~$G$ by contracting the edges mapped to single vertices, then $f$ corresponds naturally to a homomorphism from $\hat G$ to~$H$. 

Every drawing $f:G\to H$ maps every walk~$W$ in~$G$ to a walk in~$H$, denoted by $f\circ W$.  We also denote by $f(W)$ the image in~$H$ of~$W$ on~$H$.

Given a map $f:A\to B$, and $A'\subseteq A$, $B'\subseteq B$ such that $f(A')\subseteq B'$, we occasionally use the notation $f|_{A'}^{B'}$ to denote the restriction of~$f$ to~$A'$, corestricted to~$B'$, thus yielding a map from $A'$ to~$B'$.

%%%%%%%%%%%%%%%%%%%%%%%%%%%%%%%%%%%%%%%%%%%%%%%%%%%%%%%%%%%%%%%%%%%%%%%%%%%%%%%%%%
\subsection{Surfaces}

We use standard notions of topology, see any textbook~\cite{armstrong2013basic,s-ctcgt-93} for details.  All the surfaces we consider but the plane are connected, compact, and orientable, so we omit these adjectives in the sequel. Every surface is determined up to homeomorphism by its \emphdef{genus} and number of boundary components. A surface is \emphdef{closed} if its boundary is empty. 

On a surface $S$, a \emphdef{path} is a map $p : [0,1] \to S$; then $p(0)$ and $p(1)$ are the \emphdef{end-points} of $p$. Its \emphdef{reversal} is the map $p^{-1}:[0,1]\to S$ that satisfies $p^{-1}(t)=p(1-t)$. Similarly to walks, we denote by $p_0 \cdot p_1$ the concatenation of two paths $p_0$ and $p_1$. A \emphdef{loop} with \emphdef{basepoint} $b$ is a path whose both endpoints equal $b$. A path is \emphdef{simple} if it is injective, except, of course, that $p(0) = p(1)$ if $p$ is a loop. The \emphdef{relative interior} of a simple path~$p$ is the image of~$(0,1)$ under~$p$. A \emphdef{closed curve} is a map from the circle to $S$; it is simple if it is injective. On a surface $S$ with boundary, an \emphdef{arc} is a path that intersects the boundary precisely at its end-points. Paths and closed curves that differ only by their parameterizations are regarded as equal. 

If $S$ is a surface, when we consider a map $f:G\to S$ we implicitly consider $G$ as a topological space, and $f$ maps the vertices of~$G$ to points on~$S$ and the edges of~$G$ to paths on~$S$.  Moreover, if $W$ is a walk in~$G$, then $f\circ W$ is a curve in~$S$, while $f(W)$ is a subset of~$S$.

A \emphdef{homotopy} between two paths $p_0$ and $p_1$ with the same endpoints is a continuous family
of paths with the same endpoints between $p_0$ and $p_1$. A (free) homotopy between two closed curves $c_0$
and $c_1$ is a continuous family of closed curves between them; this time, no point is required to be fixed.
A loop or closed curve is \emphdef{primitive} if it is not homotopic to another closed curve concatenated $n \geq 2$ times with itself. Given a graph $G$ and a surface $S$, a homotopy between two maps $G \to S$ is a continuous family of maps between them. A homotopy that fixes a subset $X \subset G$ is \emphdef{relative} to $X$. A map $f$ is \emphdef{contractible} if there exists a homotopy of $f$ that turns the image of $f$ into a single point. 

An \emphdef{embedding} of a graph $G$ on a surface $S$ is a ``crossing-free drawing'', a continuous injective map from $G$ to~$S$. The \emphdef{rotation system} of an embedding of $G$ is the cyclic ordering of the edges of $G$ incident to each vertex in the embedding. The \emphdef{faces} of an embedding of $G$ are the connected components of the complement of its image. A \emphdef{triangulation} $T$ is a surface together with a graph embedding in which every face is homeomorphic to an open disk and is bounded by three sides of edges. The \emphdef{1-skeleton} $T^1$ of $T$ is the graph embedded on $T$.

The \emphdef{universal cover} $\widetilde S$ of a closed surface $S$ distinct from the sphere is the plane equipped with a \emphdef{covering map} $\pi : \widetilde S \to S$ that is a local homeomorphism.  A \emphdef{lift} of a point $x \in S$ is a point in the pre-image $\pi^{-1}(x) \subset \widetilde S$. Similarly, a lift of a path $p : [0,1] \to S$ is a path $\widetilde p : [0,1] \to \widetilde S$ such that $\pi \circ \widetilde p = p$.  More generally, we can lift drawings in~$S$ and triangulations in~$S$ in a similar manner.

%%%%%%%%%%%%%%%%%%%%%%%%%%%%%%%%%%%%%%%%%%%%%%%%%%%%%%%%%%%%%%%%%%%%%%%%%%%%%%%%%%
\subsection{Weak embeddings}\label{sec weak embeddings}

Let $H$ be a graph embedded in a surface $S$, and let $G$ be an abstract graph. A drawing $f : G \to H$ is a \emphdef{weak embedding}~\cite{akitaya2019recognizing} if there are embeddings of $G$ in $S$ that are arbitrarily close to $f$.   As noticed by Akitaya, Fulek, and T\'oth~\cite{akitaya2019recognizing}, the property for $f$ to be a weak embedding can be reformulated, revealing that it does not depend on the actual embedding of $H$ on $S$, but only on $G$, $f$, and the abstract graph $H$ together with its rotation system.  For this reformulation, we sketch the slightly different,  equivalent presentation of \'E.~Colin de Verdière, Despré, and Dubois~\cite[Section~2.2]{de2024untangling}. To ease the reading, we now assume that $H$ has no loop and no multiple edges; the following extends readily to general graphs $H$. The \emphdef{patch system} of $H$ is a surface with boundary $\Sigma$ obtained by “thickening $H$”. (This is similar to the \emph{strip system} introduced by Akitaya et al.~\cite{akitaya2019recognizing}, and to the concept of fat graph~\cite{gt-tgt-87} and ribbon graph~\cite{em-gsdpk-13}; the only difference is that, in our patch systems, each strip corresponding to a ``fat edge'' is contracted to a single path.) Each vertex $v$ of $H$ is assigned a closed disk $D_v$; the disks $D_w$ of the vertices $w$ adjacent to $v$ are attached to $D_v$, along pairwise disjoint segments on the boundary of $D_v$, in the order prescribed by the rotation system of $H$, and respecting the orientations of the disks. Every segment along which two disks $D_v$ and $D_w$ are glued becomes an arc in $\Sigma$ \emphdef{dual to} the edge of $H$ between $v$ and $w$. A drawing $f : G \to H \subset S$ is a \emphdef{weak embedding} if and only if there exists an embedding $\psi : G \to \Sigma$, in general position, that \emphdef{approximates} $f$ in the following sense: for every edge $e$ of $G$, the sequence of arcs of $\Sigma$ crossed by the path $\psi \circ e$ is dual to the sequence of edges of $H$ taken by the walk $f \circ e$.

%%%%%%%%%%%%%%%%%%%%%%%%%%%%%%%%%%%%%%%%%%%%%%%%%%%%%%%%%%%%%%%%%%%%%%%%%%%%%%%%%%
\subsection{Reducing triangulations}\label{sec:reducing triangulations}

\begin{figure}
    \centering
    \includegraphics[scale=0.5]{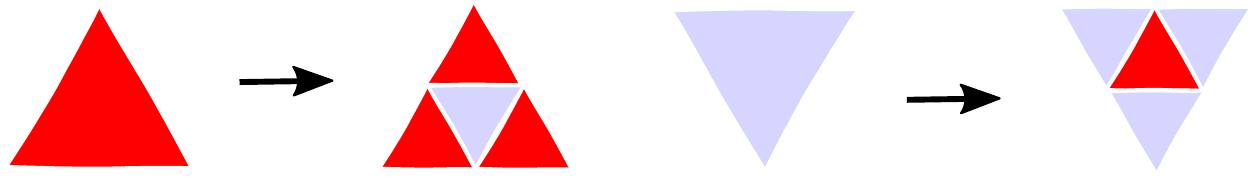}
    \caption{Subdividing the faces of a reducing triangulation.}
    \label{fig:subdivision}
\end{figure}

Our discrete model for surfaces is a slight extension of the notion of \emphdef{reducing triangulation} recently introduced by \'E.~Colin de Verdière, Despré, and Dubois~\cite{de2024untangling}.

We emphasize that the model is restricted to orientable surfaces.  A triangulation $T$ of an (orientable) surface is \emphdef{reducing} if each vertex in the interior of~$T$ has degree at least six, and the dual of~$T$ is bipartite (the faces of~$T$ can be colored red and blue so that adjacent faces receive distinct colors).  Every closed surface other than the sphere admits a reducing triangulation: a reducing triangulation for the torus is made of two triangles, one red and one blue, glued together to form a quadrilateral, whose opposite sides are then identified (\autoref{fig:torus}); for higher-genus surfaces we refer to~\cite[Figure~17]{de2024untangling}. Every surface with boundary admits a reducing triangulation, as can be seen by considering subcomplexes of reducing triangulations of closed surfaces.  (Compared to the original model, our reducing triangulations may have degree-six vertices, which allows in particular to refine an existing reducing triangulation by subdivision; see \autoref{fig:subdivision}.  Also, they may have non-empty boundary.)

Reducing triangulations on surfaces with non-empty boundary will only be considered in \autoref{boundary}, and the following definitions are relevant only for closed surfaces.  A walk $(e_1,e_2)$ of length two in the 1-skeleton~$T^1$ of a reducing triangulation~$T$ makes a \emphdef{turn} at its middle vertex~$v$ (see~\cite[Section~3.1 and Figure~3]{de2024untangling} for details). This turn is a $k$-turn, $k \geq 0$, if $e_2$ results from~$e_1$ by $k$ clockwise rotations of~$e_1$ around~$v$. It is a $-k$-turn if $e_2$ results from $k$ counter-clockwise rotations of $e_1$ around $v$. To be more precise, $(e_1,e_2)$ makes a $k_r$-turn (or $-k_r$-turn) if $e_1$ sees red on its left, and a $k_b$-turn (or $-k_b$-turn) otherwise. The $0$-turns, $1$-turns, $-1$-turns, $2_r$-turns, and $-2_r$-turns are \emphdef{bad}. All other turns are \emphdef{good}.  (We note that these notions depend on an orientation of the surface, which is why we require surfaces to be orientable.) A walk in $T$ is \emphdef{reduced} if it makes no bad turn. Then:
\begin{restatable}{lemma}{reduxwalk}\label{redux walk}\RestateRemark
Let $T$ be a reducing triangulation of the plane. If $x$ and $y$ are vertices of $T$, then there is a unique reduced walk from $x$ to $y$ in $T$.
\end{restatable}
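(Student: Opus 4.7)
The plan is to handle existence and uniqueness separately by local combinatorial arguments.

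\textbf{Existence.} Since the 1-skeleton of $T$ is connected, I start with any walk from $x$ to $y$ and eliminate bad turns iteratively via local replacements. A $0$-turn (backtrack) is removed by deleting the pair of offending edges. A $\pm 1$-turn at a vertex is replaced by the third edge of the triangle that its two edges bound. A $\pm 2_r$-turn at a vertex $v$ is replaced by the length-two detour through the opposite vertex of the skipped red face. The first two moves strictly shorten the walk; the third preserves its length but strictly decreases a suitable secondary potential (for example, processing the leftmost $\pm 2_r$-turn first and using a lexicographic argument). Termination produces a reduced walk from $x$ to $y$.

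\textbf{Uniqueness.} I argue by contradiction. Assume $W_1\ne W_2$ are two reduced walks from $x$ to $y$. After trimming the longest common prefix and suffix, they share only their endpoints, so $\gamma := W_1 \cdot W_2^{-1}$ is a nontrivial closed walk in the plane. It bounds a union of disks; let $D$ be an innermost one, whose boundary is composed of sub-walks of $W_1$ and $W_2$. Applying combinatorial Gauss--Bonnet on $D$ (using $\chi(D)=1$ and the fact that each triangle of $T$ contributes three angles summing to $\pi$) yields
$$\sum_{v\in\mathrm{int}(D)}\bigl(6-\deg_T v\bigr)\;+\;\sum_{v\in\partial D}\bigl(3-k_v\bigr)\;=\;6,$$
where $k_v$ counts the triangles of $T$ inside $D$ that are incident to $v$. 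The first sum is nonpositive since $T$ is reducing. At every boundary vertex $v$ of $D$ distinct from $x,y$, the walk passing through $v$ is reduced, so its turn, viewed from inside $D$, is good; this forces $2\le k_v\le\deg_T(v)-2$, whence $3-k_v\le 1$. Moreover, the $\pm 2_r$ bad-turn rule, combined with the bipartite coloring of faces along $\partial D$, prevents the borderline case $k_v=2$ from occurring at every boundary vertex, strictly lowering the boundary sum. A finer bookkeeping that also accounts for the two corner contributions at $x$ and $y$ shows the left-hand side cannot reach $6$, the required contradiction.

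\textbf{Main obstacle.} The crux is closing the Gauss--Bonnet gap: the raw degree-six and good-turn constraints alone bound the left-hand side only by $\#\partial D + 4$, which can meet or exceed $6$. It is the interplay with the bipartite red/blue coloring, inherent in the asymmetric definition of bad turns ($\pm 2_r$ bad versus $\pm 2_b$ good), that rules out the critical configuration where every boundary vertex realises the extremal value $k_v=2$. A secondary technicality is handling the case where $\gamma$ is not simple, which is resolved by selecting an innermost disk and, if needed, iterating the argument on nested subregions.
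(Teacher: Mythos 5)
The paper does not reprove Lemma~\ref{redux walk}: it invokes \cite[Proposition~3.1 and Lemma~3.2]{de2024untangling} with a one-word modification, so there is no in-text argument to compare against. Judged on its own, your plan has the right ingredients, but the two steps you yourself flag as ``the crux'' and as needing ``a suitable secondary potential'' are exactly the substance of the argument, and neither is supplied. On uniqueness: the Gauss--Bonnet identity and the observation that a good $2_b$-turn at $v$ forces the shared face to be red (so the next boundary vertex cannot also realise a good $2$-turn with $D$ on the same side) are correct, but they do not reach a contradiction by themselves. The pattern $(k_{v_i}) = (2,4,2,4,\dots)$ along a stretch of $\partial D$ is consistent with every local colour rule you invoke (the colour sequence blue, red, blue, red, \dots\ places each $2$ on blue), and on such a stretch $\sum_i(3-k_{v_i})$ is about $0$; together with the two corner terms at $x,y$ (each at most $2$), the crude estimate leaves $\sum_{\partial D}(3-k_v)$ able to equal $6$ when $k_x=k_y=1$. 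What actually closes the gap is the global consistency of the colour chain around the entire cycle, which forces parity constraints on $k_x$ and $k_y$ and thereby caps the corner contributions; that is precisely the ``finer bookkeeping'' you assert and do not carry out. The non-simple case is also not a side issue: a corner of an innermost region may be a crossing of $W_1$ and $W_2$ rather than one of $x,y$, and at such a corner both walks pass through with good turns, so the corner contribution obeys constraints different from those at $x,y$, which you have not analysed.

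The existence half is likewise incomplete. The $\pm2_r$-detour preserves length, and replacing $w_1\,v\,w_2$ by $w_1\,u\,w_2$ rotates the turns at the two neighbouring vertices by one step, so a previously good turn there (e.g.\ a $3_r$-turn) can become a bad $2_r$-turn and the move can propagate indefinitely. ``Process the leftmost $\pm2_r$-turn'' and ``lexicographic argument'' are not made concrete enough to exclude an infinite sequence of detours. A correct argument would need an explicit monotone potential (for instance, the number of triangles enclosed between the current walk and a fixed reference walk homotopic to it) and a proof that it strictly decreases under the chosen processing order; as written, this is asserted rather than established.
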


Note that reducing triangulations of the plane cannot have loops or multiple edges (this would otherwise contradict \autoref{redux walk}), but reducing triangulations of more complex surfaces may. \autoref{redux walk} is proved in~\cite[Proposition~3.1]{de2024untangling} on reducing triangulations that do not have degree six vertices. The proof in that paper extends verbatim to our case, only replacing the word “eight” by “six” in the second paragraph of the proof of~\cite[Lemma~3.2]{de2024untangling}. 

\begin{lemma}\label{redux walk 2}
Let $T$ be a reducing triangulation of the plane. Let $C$ be a simple closed walk in $T$, not a single vertex. Orient $C$ so that the bounded side of $C$ lies on its left. Then at least three of the turns of $C$ are a $1$-turn or a $2_r$-turn.
\end{lemma}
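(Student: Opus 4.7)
My plan is a discrete Gauss--Bonnet argument sharpened by the bipartite structure of $T$.

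First, let $D$ be the closed disk bounded by $C$, with $n$ boundary edges and $V_i$ interior vertices of $T$. Euler's formula, together with the fact that each triangle has three edges and each interior edge of $D$ is shared by two triangles, gives $F = 2V_i + n - 2$ faces of $T$ inside $D$. Writing $\lambda_v \in \{1,\dots,\deg(v)-1\}$ for the number of sectors of $v$ lying in $D$ (so that $\lambda_v = k$ for a $k$-turn with $k > 0$ and $\lambda_v = \deg(v)+k$ for $k<0$), a double count of the $3F$ triangle corners, once by triangle and once by vertex, gives $\sum_{v \in V_i} \deg(v) + \sum_{v \in C} \lambda_v = 3F$. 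Since $\deg(v) \geq 6$ at every interior vertex (the reducing condition), this rearranges into the discrete Gauss--Bonnet inequality
$$\sum_{v \in C} (3 - \lambda_v) \;=\; 6 + \sum_{v \in V_i}(\deg(v) - 6) \;\geq\; 6.$$

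Let $a$ denote the number of $1$-turns of $C$ and let $b_r$, $b_b$ denote the numbers of its $2_r$- and $2_b$-turns. The per-vertex contributions to the above inequality are $+2$ for $\lambda_v = 1$, $+1$ for $\lambda_v = 2$ (regardless of subscript), $0$ for $\lambda_v = 3$, and non-positive for $\lambda_v \geq 4$, so one immediately obtains the weak bound $2a + b_r + b_b \geq 6$. The task is to upgrade this to the lemma's assertion $a + b_r \geq 3$.

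The extra input comes from the bipartite coloring. I would label each edge $e$ of $C$ by the color ($R$ or $B$) of the face of $T$ inside $D$ adjacent to $e$. Since sectors alternate in color around every vertex of $T$ (in particular every such vertex has even degree), a short case analysis at each turn $v$ shows that this left-color is preserved when $\lambda_v$ is odd and flipped when $\lambda_v$ is even. Cyclic closure of the color sequence around $C$ then forces the number of $R\to B$ flips (which occur precisely at even $r$-type turns) to equal the number of $B\to R$ flips (at even $b$-type turns). Writing $y_r^{(k)}$ and $y_b^{(k)}$ for the number of $k_r$- and $k_b$-turns with $k \geq 3$, this gives the identity
$$b_r + \sum_{k \geq 4,\; k\text{ even}} y_r^{(k)} \;=\; b_b + \sum_{k \geq 4,\; k\text{ even}} y_b^{(k)}.$$
Using it to eliminate $b_b$ from the Gauss--Bonnet bound and collecting terms, the residual contributions from $k$-turns with $k \geq 4$ combine into coefficients $k-4$, $k-2$, and $k-3$ (for $y_r^{(k)}$ even, $y_b^{(k)}$ even, and $y_r^{(k)},y_b^{(k)}$ odd, respectively), all non-negative, so one concludes $2(a + b_r) \geq 6$ and hence $a + b_r \geq 3$.

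The main obstacle is exactly this upgrade from $2a + b_r + b_b \geq 6$ to $a + b_r \geq 3$: any Gauss--Bonnet calculation based on interior angles alone is blind to the $r/b$ distinction (a $2_r$- and a $2_b$-turn have equal interior angle in every Euclidean metric on $T$), so one must bring in extra information from the bipartite structure via the parity analysis of color flips around $C$. This explains the asymmetric treatment of the $\pm 2$ turn classes in the list of bad turns, and shows that the proof genuinely relies on $T$ being reducing, not merely of minimum degree six.
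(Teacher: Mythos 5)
Your proof is correct and takes a genuinely different route from the paper's. The paper proves Lemma~\ref{redux walk 2} by a short reductio from Lemma~\ref{redux walk}: if all $1$- and $2_r$-turns of $C$ concentrated at two vertices $x,y$, one could discard $T$ outside the disk $D$ and re-extend to a new plane reducing triangulation in which $C$ makes no $-1$- or $-2_r$-turn either, making the two subwalks of $C$ between $x$ and $y$ distinct reduced walks with the same endpoints, contradicting uniqueness. Your argument is instead self-contained: a discrete Gauss--Bonnet identity ($\sum_{v\in C}(3-\lambda_v)\geq 6$ via Euler's formula and a corner double-count), strengthened by a parity observation on the bipartite face coloring --- the color of the inner face adjacent to the current edge of $C$ flips exactly at the turns with $\lambda_v$ even, so the numbers of $r$- and $b$-type such turns coincide --- which eliminates $b_b$ and, after checking that each $\lambda_v\geq 4$ vertex contributes nonpositively, yields $a+b_r\geq 3$. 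I verified the coefficient bookkeeping and the relation $\lambda_v=k$ for positive $k$-turns; it all checks out. The paper's proof is shorter but leans on Lemma~\ref{redux walk} as a black box, while yours is elementary and isolates exactly where the bipartite dual condition (as opposed to mere minimum degree six) enters, which is a nice conceptual bonus and explains the asymmetry between the $2_r$ and $2_b$ turn classes. One small imprecision: your $y_r^{(k)}$, $y_b^{(k)}$ should be read as counts of $r$-/$b$-type turns with $\lambda_v=k$, rather than literally of ``$k_r$-/$k_b$-turns'' as written (for negative turn numbers the superscript $k$ and the value $\lambda_v$ differ); with that reading the identity and the final bound are correct.
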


\autoref{redux walk 2} can be seen as a consequence of \autoref{redux walk}:

\begin{proof}
By contradiction, assume $C$ has two vertices $x$ and $y$ such that $C$ never makes a $1$-turn or a $2_r$-turn except possibly at $x$ and $y$. Consider the disk $D$ bounded by $C$, and let $T'$ be the reducing triangulation of $D$ that is the restriction of $T$ to $D$. Delete all of $T$ outside $D$. Then extend $T'$ to a new (infinite) reducing triangulation $T''$ of the plane (by adding triangles one by one), in which $C$ does not make any $-1$-turn, nor any $-2_r$-turn. Then the two sub-walks of $C$ between $x$ and $y$ are  distinct reduced walks in $T''$, contradicting \autoref{redux walk}.
\end{proof}

\begin{figure}
    \centering
    \includegraphics[width=0.5\linewidth]{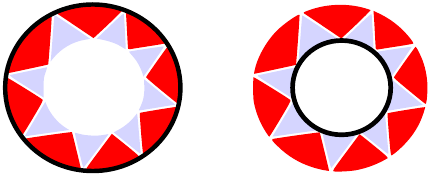}
    \caption{In a reducing triangulation, two homotopic reduced closed walks.  Figure reproduced from~\cite{de2024untangling}.}
    \label{fig:homotopicredux}
\end{figure}

In this paper we say that a \emph{closed} walk in $T$ is reduced if it makes no bad turn. Note that this definition of reduced closed walks departs from that in~\cite{de2024untangling} since, in our setting, a closed walk that makes only $3_r$-turns is reduced. A consequence is that reduced closed walks may not be unique in their free homotopy class: we see the walks depicted in \autoref{fig:homotopicredux} as two distinct freely homotopic reduced closed walks. 

%%%%%%%%%%%%%%%%%%%%%%%%%%%%%%%%%%%%%%%%%%%%%%%%%%%%%%%%%%%%%%%%%%%
%%%%%%%%%%%%%%%%%%%%%%%%%%%%%%%%%%%%%%%%%%%%%%%%%%%%%%%%%%%%%%%%%%%
%%%%%%%%%%%%%%%%%%%%%%%%%%%%%%%%%%%%%%%%%%%%%%%%%%%%%%%%%%%%%%%%%%%
\section{Harmonious drawings}\label{sec:harmonious}

In this section, we provide the key definition of \emph{harmonious drawings} of graphs in a reducing triangulation, used in the statements of our theorems.  We actually start with the notion of strongly harmonious drawings, which are the discrete analog of barycentric drawings, in which each inner vertex is drawn in \emph{convex position}: intuitively, and at a local scale,
every straight line $I$ containing $v$ sees edges incident to $v$ on both sides, here understood in the \emph{strong sense} that some edges incident to $v$ enter the two \emph{open} half-planes separated by $I$ (this is when the surface is homeomorphic to a plane, general surfaces are then handled via their universal cover, as we shall see). Harmonious drawings are a slightly relaxed notion that is suitable for our results.

\begin{cfigure}
    \centering
    \includegraphics[scale=1]{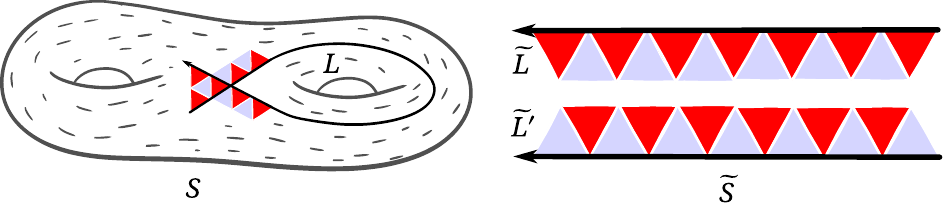}
    \caption{(Left) The closed surface $S$ of genus two, equipped with a reducing triangulation $T$, and a left line $L$ in $T$. (Top Right) The universal covering space $\widetilde S$ of $S$, i.e., the plane, and a left line $\widetilde L$ that lifts $L$ in $\widetilde S$. (Bottom Right) A right line $\widetilde L'$ in $\widetilde S$.}
    \label{fig:line}
\end{cfigure}

\begin{cfigure}
    \centering
    \includegraphics[scale=1]{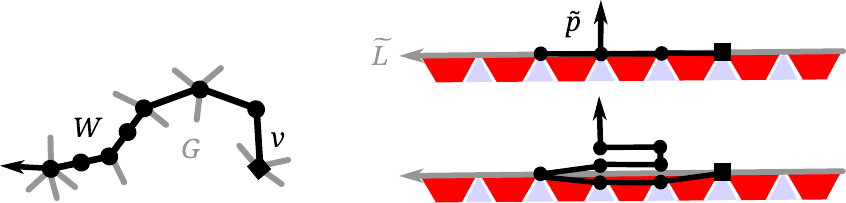}
    \caption{(Left) The vertex $v$ and the walk $W$ in the definition of strong harmony. (Top right) The lifts $\widetilde L$ and $\widetilde p$, when $L$ makes only $3_r$-turns. (Bottom right) The path $\widetilde p$ slightly unpacked to illustrate that $\widetilde p$ can go back and forth, and stagnate.}
    \label{fig:star}
\end{cfigure}

%%%%%%%%%%%%%%%%%%%%%%%%%%%%%%%%%%%%%%%%%%%%%%%%%%%%%%%%%%
\subsection{Preliminary definitions}

Let $S$ be a closed surface not homeomorphic to the sphere, and let $T$ be a reducing triangulation of $S$.  Let $\widetilde T$ be the (infinite) reducing triangulation that lifts~$T$ in the universal cover~$\widetilde S$ of~$S$.  In $\widetilde T$, we consider a left (resp.\ right) \emphdef{line} to be a bi-infinite walk $\widetilde L$ that makes only $3_r$-turns (resp.\ $-3_r$-turns). See Figures \ref{fig:line}~and~\ref{fig:star}. Note that $\widetilde L$ is reduced, and is thus simple by \autoref{redux walk}. A path $\widetilde p$ starting from the central vertex (\autoref{preliminaries graphs}) of $\widetilde L$ \emphdef{escapes} $\widetilde L$ if $\widetilde p$ enters the right (resp.\ left) side of $\widetilde L$ at some point, and if the prefix of $\widetilde p$ before this point is contained in the non-negative part of $\widetilde L$.

On $S$, the \emphdef{lines} are again the bi-infinite walks (not simple this time, since $T$ is finite) that make only $3_r$-turns or only $-3_r$-turns; note that the lines on $S$ lift to the lines on $\widetilde S$. On $S$, a path $p$ \emphdef{escapes} a line $L$ if there are a lift $\widetilde L$ of $L$, and a lift $\widetilde p$ of $p$ starting from the central vertex of $\widetilde L$, such that $\widetilde p$ escapes $\widetilde L$.

\begin{cfigure}
    \centering
    \includegraphics[scale=1]{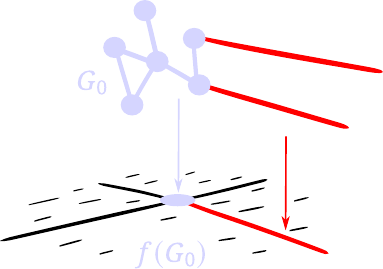}
    \caption{A spur.}
    \label{fig:spurs}
\end{cfigure}

Let $G$ and $M$ be graphs, and let $f : G \to M$ be simplicial.  A \emphdef{cluster} of $f$ is a connected subgraph $G_0$ of $G$ whose edges are all mapped by $f$ to a single vertex of $M$, maximal under this condition. A cluster $G_0$ of $f$ is a \emphdef{spur} if $G_0$ is not a connected component of $G$, and if the directed edges from $G_0$ to $G \setminus G_0$ are all mapped by $f$ to the same directed edge of $M$; see \autoref{fig:spurs}. (Note that the direction of the edges matters when $M$ has loops.)

%%%%%%%%%%%%%%%%%%%%%%%%%%%%%%%%%%%%%%%%%%%%%%%%%%%%%%%%%%%%%%%
\subsection{Strongly harmonious drawings}

A simplicial map $f : G \to T^1$ is \emphdef{strongly harmonious} if for every vertex~$v$ of~$G$, and for every line (left or right) $L$ whose central vertex is $f(v)$ in~$T$, there is a walk $W$ based at $v$ in $G$ whose image path $f \circ W$ escapes $L$.  See \autoref{fig:star}. 

Intuitively, $f$ is strongly harmonious if for every vertex $v$ of~$G$, and for every bi-infinite walk~$L$ in~$T^1$ starting at~$f(v)$ making only $3_r$-turns (resp.\ $-3_r$-turns), there is a walk $W$ in~$G$ whose image by $f$ may go ``forward'' on~$L$, sometimes ``backward'' on~$L$, but not to the point it goes before its central point, and then leaves~$L$ to its right (resp.\ left).  But $L$ is non-simple, and periodic since $T$ is finite; $W$ is allowed to wrap, say, 2.3 times forward along~$L$, then 1.8 times backward, and then to leave~$L$ to its right (resp.\ left).  Formalizing this phenomenon seems to be more easily captured using the universal covering space, as above.

%%%%%%%%%%%%%%%%%%%%%%%%%%%%%%%%%%%%%%%%%%%%%%
\subsection{Harmonious drawings}

\begin{figure}
    \centering
    \includegraphics[scale=1]{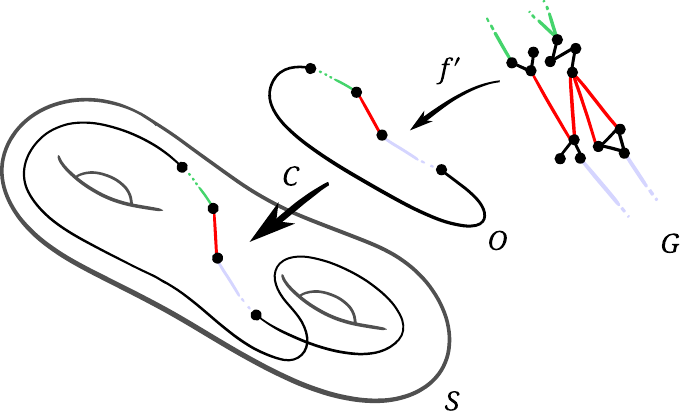}
    \caption{A graph $G$, a cycle graph $O$, and a surface $S$ equipped with a reducing triangulation $T$ (not represented). A simplicial map $f' : G \to O$ without spur, and a reduced closed walk $C : O \to T^1$. The composed drawing $C \circ f'$ is harmonious.}
    \label{fig:harmonicity}
\end{figure}

Harmony is a slightly relaxed version of strong harmony.  On a connected graph $G$, a simplicial map $f : G \to T^1$ is \emphdef{harmonious} if $f$ is strongly harmonious, or if $f = C \circ f'$ for some cycle graph $O$, some reduced closed walk $C : O \to T^1$, and some simplicial map $f' : G \to O$, without spur.  See \autoref{fig:harmonicity}.  In general, a simplicial map $f : G \to T^1$ is harmonious if $f$ is harmonious on every connected component of $G$. 

Finally, recall that every drawing $f:G\to T^1$ factors into a unique simplicial map $\bar f: \bar G \to T^1$. We say that $f$ is harmonious (resp.\ strongly harmonious) if $\bar f$ is.

%%%%%%%%%%%%%%%%%%%%%%%%%%%%%%%%%%%%%%%%%%%%%%%%%%%%%%%%%%%%%%%%%%%%
\subsection{Remarks}

We conclude this section with two remarks. First, we illustrate why strong harmony is not sufficient for our purposes.  The reason is that some drawings of graphs in reducing triangulations cannot be made strongly harmonious by homotopy:  

\begin{restatable}{lemma}{highergenusproblem}\label{higher genus problem}\RestateRemark
Let $S$ be a closed surface, not the sphere nor the torus. There are a reducing triangulation $T$ of $S$, and a closed walk $C$ in $T$, such that every closed walk freely homotopic to $C$ is not strongly harmonious.
\end{restatable}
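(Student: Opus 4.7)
The plan is to exhibit a reducing triangulation $T$ of $S$ containing a simple closed walk $C$ that is itself a closed left line (i.e., $C$ makes only $3_r$-turns), and then show that this rigidity is inherited by every closed walk in the free homotopy class of~$C$. For the construction, I would start from any standard reducing triangulation of $S$ (for example \cite[Figure~17]{de2024untangling} for genus $g\geq 2$) and, applying the subdivision move of Figure~\ref{fig:subdivision} if necessary, arrange a simple non-contractible cycle $C\subseteq T^1$ with exactly three red triangles on one chosen side at each of its vertices. Oriented so that the red side is on the left, $C$ then makes a $3_r$-turn at every vertex and is a closed left line.

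That $C$ itself is not strongly harmonious is immediate. For any vertex $v$ of the cycle-graph domain $O$ of $C$, the bi-infinite periodic repetition of $C$ centered at $C(v)$ is a left line $L$ of $T$. Every walk $W$ in $O$ based at $v$ has image $C\circ W$ contained in the image of $C$, hence in $L$, so its lift starting at the central vertex of any lift $\widetilde L$ of $L$ stays inside $\widetilde L$ and never enters its right side. Therefore no walk based at $v$ escapes $L$.

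For the main step, I would take any closed walk $C'$ freely homotopic to $C$ and work in the universal cover $\widetilde S$. Let $\alpha$ be the deck transformation corresponding to the conjugacy class of $C$, so that $\alpha$ translates $\widetilde T$ along the bi-infinite line $\widetilde C$. Any $\alpha$-invariant lift $\widetilde{C'}$ of $C'$ projects to $C'$, hence has image confined, by $\alpha$-periodicity together with the finiteness of $C'$, to a bounded-width strip around $\widetilde C$. I would then pick a vertex $\widetilde x$ of $\widetilde{C'}$ realizing the maximal lateral displacement from $\widetilde C$ on one side, and build through it a line $\widetilde{L}_x$ of $\widetilde T$ by extending the two edges of $\widetilde{C'}$ incident to $\widetilde x$ on the outward side into a bi-infinite reduced walk through $3_r$-turns (or $-3_r$-turns, if needed). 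By extremality of $\widetilde x$ in the strip, $\widetilde{C'}$ stays on the inward side of $\widetilde{L}_x$, so no sub-walk of $\widetilde{C'}$ based at $\widetilde x$ enters its outward side, and in particular does not escape $\widetilde{L}_x$. Projecting back, this yields the bad vertex and bad line obstructing strong harmony of $C'$.

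The main obstacle is this last step, specifically constructing the separating line $\widetilde{L}_x$ through the extremal vertex $\widetilde x$ as a genuine combinatorial line of $\widetilde T$ that $\widetilde{C'}$ does not cross. This amounts to showing that a $3_r$-extension of the outward edges at $\widetilde x$ forms a well-defined bi-infinite reduced walk separating $\widetilde x$ from the extremal side of $\widetilde{C'}$; I would lean on both the local rotation system at $\widetilde x$ (which is controlled because $T$ is reducing and hence vertices have degree at least six) and the uniqueness of reduced walks in the plane supplied by Lemma~\ref{redux walk}, which prevents $\widetilde{C'}$ from sneaking back across $\widetilde{L}_x$ without creating two distinct reduced walks between a common pair of vertices.
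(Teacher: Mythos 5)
Your approach departs genuinely from the paper's. The paper fixes a reducing triangulation with all vertex degrees at least eight, observes that a strongly harmonious closed walk must be reduced (a short local argument at any bad turn), and then invokes the uniqueness of reduced closed walks from \cite{de2024untangling} to conclude that the only reduced closed walks in the free homotopy class are the two bounding lines $C_1$ and $C_2$, each of which is itself a line and hence trivially not strongly harmonious. You instead attempt a direct obstruction in the universal cover: pick an extremal vertex $\widetilde x$ of an $\alpha$-invariant lift $\widetilde{C'}$, and build a line through $\widetilde x$ that no walk along $\widetilde{C'}$ based at $\widetilde x$ can escape.

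The gap you flag is substantive, not a technicality. First, lines through $\widetilde x$ are rigid combinatorial objects of $\widetilde T$: a line centered at $\widetilde x$ must make a $3_r$- or $-3_r$-turn there, while the two edges of $\widetilde{C'}$ at $\widetilde x$ need not form such a turn, so ``extending them into a line'' is not a well-defined operation. Second, extremality of $\widetilde x$ is measured by graph distance from $\widetilde C$, and the level curve at that distance supplied by Lemma~\ref{lem:bad left} avoids bad left turns but is \emph{not} a line (it may make $2_b$-turns), so it cannot serve as $\widetilde L_x$; and the distance data alone does not tell you which of the finitely many lines through $\widetilde x$ (if any) leaves both edges of $\widetilde{C'}$ at $\widetilde x$ strictly on the non-escape side. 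Third, if one of those edges happens to lie on the chosen line, you must in addition control the side on which $\widetilde{C'}$ first leaves its non-negative part, which extremality does not give you directly. Making this argument rigorous would essentially require rebuilding the half-plane/cone calculus of Lemmas~\ref{lempush} and~\ref{lemescape}, a non-trivial detour. Finally, the subdivision step you propose is both unnecessary (a closed walk making only $3_r$-turns always exists, by finiteness) and counterproductive: it creates degree-six vertices, which is precisely the regime where uniqueness of reduced closed walks breaks down and why the paper deliberately takes $T$ with minimum degree eight for this lemma.
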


Second, we give one reason (among others) why our results do not extend to the torus:

\begin{restatable}{lemma}{torusproblem}\label{torus problem}\RestateRemark
There are a reducing triangulation $T$ of the torus, and a closed walk $C$ in $T$, such that every closed walk freely homotopic to $C$ is not reduced.
\end{restatable}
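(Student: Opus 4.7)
The plan is to exhibit one specific reducing triangulation $T$ of the torus together with a closed walk $C$, and show by a direct case analysis of the turns at the unique vertex of $T$ that no closed walk freely homotopic to $C$ can be reduced.

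I would take $T$ to be the minimal reducing triangulation of the torus mentioned in Section~\ref{sec:reducing triangulations}: a square with opposite sides identified and subdivided by a single diagonal, with the two resulting triangles receiving opposite colors. This gives one vertex $v$ of degree six, three loops $a, b, c$ at $v$ (the horizontal side, the vertical side, and the diagonal), and two faces, so $T$ is indeed reducing. Since $\pi_1$ of the torus is abelian, free homotopy classes of closed curves coincide with elements of $\mathbb{Z}^2$; in the basis $([a], [b])$, the six directed edges $a, b, c, a^{-1}, b^{-1}, c^{-1}$ realize the classes $(1,0)$, $(0,1)$, $(1,1)$, $(-1,0)$, $(0,-1)$, $(-1,-1)$.

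The main step is a short turn enumeration at $v$: for each directed edge $e_1$ arriving at $v$, list the six possible continuations $e_2$, read off the turn value $k$ from the cyclic order of half-edges around $v$, and attach the subscript $r$ or $b$ from the color on the left of $e_1$. The outcome is clean: the three directed edges that see red on their left, namely $a^{-1}, b, c^{-1}$, admit a unique good continuation, the $3_r$-turn, which traverses the same directed edge again. Hence any reduced closed walk that uses one of these three "absorbing" edges must repeat it indefinitely, and so its free homotopy class is one of $(-k, 0)$, $(0, k)$, $(-k, -k)$ for some $k \geq 1$. Any other reduced closed walk uses only the edges $a, b^{-1}, c$, and its class is therefore a nonnegative integer combination of $(1, 0)$, $(0, -1)$, $(1, 1)$; every such combination lies in the cone $\{(x, y) \in \mathbb{Z}^2 : x \geq 0,\ y \leq x\}$.

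I would then take $C = ab^2$, whose free homotopy class is $(1, 2)$. Since $y = 2 > 1 = x$, this class lies outside the cone above, and it is none of the three absorbing-edge classes listed. Therefore no reduced closed walk is freely homotopic to $C$, proving the lemma. The only nontrivial ingredient is the finite turn enumeration at the single vertex $v$; everything else is elementary arithmetic in $\mathbb{Z}^2$.
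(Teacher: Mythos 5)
The overall plan --- use the minimal one-vertex reducing triangulation of the torus, enumerate turns at the unique degree-six vertex, and argue about realizable free homotopy classes in $\mathbb{Z}^2$ --- matches the paper in spirit. However, there is a genuine gap, stemming from an incorrect determination of which directed edges see red on their left, and the gap is not merely a sign convention.

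With the cyclic order of half-edges at $v$ being $a, c, b, a^{-1}, c^{-1}, b^{-1}$ (east, northeast, north, west, southwest, south) and faces alternating between the six sectors, each of the three ``straight lines'' $a a$, $b b$, $c c$ crosses $v$ between sectors of matching colors, so the three red-left directed edges are either $\{a,\ b,\ c^{-1}\}$ or $\{a^{-1},\ b^{-1},\ c\}$. Your set $\{a^{-1},\ b,\ c^{-1}\}$ cannot occur: it would force the sector between $a^{-1}$ and $c^{-1}$ and the sector between $b$ and $a^{-1}$ to both be red, but these are adjacent sectors and must alternate. Once the red-left set is corrected, your concluding step collapses: the three blue-left classes are $\{(1,0),\ (0,1),\ (-1,-1)\}$ (or their negatives), which sum to $(0,0)$, so the cone of their nonnegative integer combinations is all of $\mathbb{Z}^2$, and the bound no longer excludes $(1,2)$ or any other class.

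The observation you are missing (and that the paper uses implicitly) is that the $\pm 2_b$-turns from any blue-left directed edge always land on a red-left directed edge, after which the walk is absorbed. Hence a reduced closed walk that avoids red-left edges can only take $3_b$-turns, and is therefore again a power of a single directed edge. Combining both cases, every reduced closed walk on this triangulation is of the form $e^k$ for a single directed edge $e$, so the realizable nonzero classes are exactly $\{k(1,0),\ k(0,1),\ k(1,1) : k \in \mathbb{Z} \setminus \{0\}\}$. Any $C$ whose class is off these three lines, e.g.\ $a b^{-1}$ of class $(1,-1)$, then works. With that repair your argument becomes essentially the paper's proof; as written, it is not correct.
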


The proofs of Lemmas~\ref{torus problem}~and~\ref{higher genus problem} are deferred to Appendix~\ref{problem proof}.

%%%%%%%%%%%%%%%%%%%%%%%%%%%%%%%%%%%%%%%%%%%%%%%%%%%%%%%%%%%%%%%%%%%
%%%%%%%%%%%%%%%%%%%%%%%%%%%%%%%%%%%%%%%%%%%%%%%%%%%%%%%%%%%%%%%%%%%
%%%%%%%%%%%%%%%%%%%%%%%%%%%%%%%%%%%%%%%%%%%%%%%%%%%%%%%%%%%%%%%%%%%
\section{A Tutte theorem for harmonious drawings: proof of \autoref{tutte theorem}}\label{sec:tutte}

In this section we prove \autoref{tutte theorem}, which we restate for convenience:

\tuttetheorem*

The ``if'' part is trivial, so we focus on the ``only if'' part. If \autoref{tutte theorem} holds on simplicial drawings, then it holds on general drawings by definition. Thus, we assume in this section that drawings are simplicial, with the noticeable exception of some part of \autoref{harmonious and patch systems} where general drawings will be momentarily required: this will be clarified in due course. 

%%%%%%%%%%%%%%%%%%%%%%%%%%%%%%%%%%%%%%%%%%%%%%%%%%%%%%%%%%%%%%%%%%%
\subsection{Harmonious drawings on patch systems}\label{harmonious and patch systems}

In this section all graphs and drawings are finite, without further mention.  We prove the following, which is roughly a Tutte result for patch systems:

\begin{proposition}\label{tutteinpatch}
Let $M$ be a graph embedded on a surface. Let $G$ be a graph, and let $f : G \to M$ be simplicial  and without spur. If there is an embedding homotopic to $f$ in the patch system of $M$, then $f$ is a weak embedding.
\end{proposition}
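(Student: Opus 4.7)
Let $\psi: G \to \Sigma$ be an embedding homotopic to $f$ in the patch system $\Sigma$. To prove $f$ is a weak embedding we construct, starting from $\psi$, an embedding $\psi': G \to \Sigma$ approximating $f$. Put $\psi$ in general position with respect to the family of dual arcs $A = \{\alpha_{vw}\}_{vw \in E(M)}$; the complement of $A$ in $\Sigma$ is the disjoint union of the open interiors of the patches $D_v$. Because $f$ is simplicial, for each edge $e = uu'$ of $G$ the walk $f \circ e$ has length $0$ or $1$; hence approximating $f$ requires each $\psi'(e)$ either to stay inside $D_{f(u)}$, or to cross the single arc $\alpha_{f(u)f(u')}$ exactly once and no other arc of $A$. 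Equivalently, $\psi'$ must attain the minimum number of transverse intersections of each edge with each dual arc.

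The plan is a bigon-elimination procedure on $\psi$. Since $\psi \circ e \simeq f \circ e$ in $\Sigma$ for every edge $e$, and the minimum geometric intersection number of a path with a two-sided simple arc in a surface is a homotopy invariant attained by removing bigons, whenever $\psi \circ e$ crosses some $\alpha_{vw}$ more often than $f \circ e$ does there exists a \emph{bigon} $B \subset \Sigma$: a disk whose boundary is a subarc of $\psi(e)$ concatenated with a subarc of $\alpha_{vw}$. An \emph{innermost} such $B$ (with interior disjoint from $\psi(G) \cup A$) admits a local isotopy of $\psi$ across $B$ that reduces $|\psi(e) \cap \alpha_{vw}|$ by two, preserves embeddedness, and leaves every other intersection count unchanged. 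After finitely many iterations all minima are attained, and then a final small isotopy inside each patch moves every vertex $\psi(u)$ into the interior of $D_{f(u)}$, producing the desired $\psi'$.

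The main obstacle is guaranteeing the existence of innermost bigons. A non-innermost candidate bigon $B \subset D_v$ traps a nonempty part of $\psi(G)$, which is the $\psi$-image of a subgraph $G_0 \subset G$. Edges of $G$ leaving $G_0$ cannot cross the subarc of $\psi(e)$ on $\partial B$ (since $\psi$ is an embedding), so every such outgoing edge must exit $B$ through the subarc of $\alpha_{vw}$ on $\partial B$, in the direction $v \to w$. Consequently $G_0$ is a cluster of $f$ mapping to $v$ whose outgoing directed edges all map to the single directed edge $vw$---i.e., a spur of $f$, unless $G_0$ is a whole connected component of $G$. The no-spur hypothesis rules out the former case, and trapped whole connected components can be pushed out of $B$ by an auxiliary isotopy in $\Sigma$ before the bigon is eliminated. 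Hence innermost bigons always exist, the procedure terminates in finitely many moves, and the resulting $\psi'$ approximates $f$, proving that $f$ is a weak embedding.
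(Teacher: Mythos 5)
There is a genuine gap: you implicitly assume that reducing the crossing count of $\psi$ with the dual arcs will make $\psi$ approximate $f$ itself, but this is not true. Two problems arise.

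First, your claim that ``whenever $\psi\circ e$ crosses some $\alpha_{vw}$ more often than $f\circ e$ does, there exists a bigon'' is false when the vertices are allowed to drift during the homotopy. Bigon elimination is a fixed-endpoint phenomenon. As a toy example, take $G$ to be a single edge $e=uu'$, and $M$ a path $a$--$b$--$c$; let $f$ map $e$ to the edge $ab$, and suppose $\psi$ places $u$ in $D_b$ and $u'$ in $D_c$ with $\psi(e)$ crossing $\alpha_{bc}$ once. Then $\psi$ is homotopic to $f$ in $\Sigma$ and has one more crossing with $\alpha_{bc}$ than $f\circ e$, but there is no bigon---and indeed $\psi$ is crossing-minimal.

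Second, even after pushing $\psi$ to be crossing-minimal, what you obtain is an embedding that approximates \emph{some} drawing $f'\colon G\to M$ homotopic to $f$, not $f$ itself. The vertices of $G$ may have migrated to different patches along the way, so $f'\neq f$ in general. Closing this gap is precisely the heavy lifting in the paper's proof: Lemma~\ref{lem:patch2} produces such an $f'$ (a crossing-minimal weak embedding, without spurs), Lemma~\ref{lem:patch3} connects $f'$ to $f$ by a finite sequence of ``swaps'' (sliding a cluster along one edge of $M$), and Lemma~\ref{lem:patch4} shows that each swap preserves the weak-embedding property---this is where the no-spur hypothesis is really exploited, by an argument close in spirit to yours (the stuff trapped in a ``bigon''-like disk would be a spur or a cluster), but applied to a known approximation rather than to a free drawing $\psi$. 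Relatedly, your claim that a trapped subgraph $G_0$ with $\psi(G_0)\subset B\subset D_v$ is a spur \emph{of $f$} is not justified: since $\psi$ is only freely homotopic to $f$, the fact that $\psi(G_0)$ lies in $D_v$ does not tell you where $f$ sends $G_0$. Finally, you do not address the contractible components of $G$, which the paper handles separately via Lemmas~\ref{lem:patch1} and~\ref{lem:planar part}; these components are mapped to single vertices and require an embedded-disk argument.
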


The strategy to prove \autoref{tutteinpatch} is to find a sequence of moves (swaps) between the drawing $f$ and a weak embedding, and to prove that applying a move to a weak embedding results in a weak embedding.

In a graph $M$, we say that a (closed) walk $W$ is \emphdef{canonical} if $W$ does not use an edge of $M$ and its reversal consecutively.  We shall use the three following classical facts, see, e.g., Stillwell~\cite[Chapter~2]{s-ctcgt-93}.  (1) If two canonical walks are homotopic in (the topological space associated to)~$M$ relatively to their end-vertices, then they are equal. (2) If two canonical closed walks are freely homotopic, then they differ by a cylic permutation. (3) If two loops $a$ and $b$ based at the same point of $M$ commute, i.e., if there is a loop $c$ such that $a$ is homotopic to $c \cdot b \cdot c^{-1}$ (where the homotopy fixes the basepoint), then $a$ and $b$ are homotopic to powers of a common loop.

\begin{lemma}\label{lem:patch1}
Let $G$ and $M$ be graphs, and let $f : G \to M$ be simplicial. If $G$ is connected, if $f$ is contractible, and if $f$ has no spur, then $f(G)$ is a single vertex of $M$.
\end{lemma}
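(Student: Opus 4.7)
The plan is to reduce to the case where $f$ is a graph homomorphism by contracting clusters, lift the result to the universal cover of the relevant component of~$M$, and exploit the fact that the universal cover of a connected graph is a tree: a leaf of the image subtree will force a spur at one of its preimages, contradicting the hypothesis.

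Concretely, I would factor $f$ through the quotient map $q : G \to \hat G$ that collapses each cluster of~$f$ to a single vertex, obtaining a simplicial map $\hat f : \hat G \to M$ which is in fact a homomorphism (each edge of~$\hat G$ comes from an edge of~$G$ joining two different clusters, so it is not mapped to a vertex). The graph $\hat G$ is connected since $G$ is. If $\hat G$ has no edges, then $\hat G$ is a single vertex, $G$ is a single cluster, and $f(G)$ is a single vertex of~$M$, as desired. Otherwise, every vertex of~$\hat G$ has degree at least one, no cluster of~$f$ is a connected component of~$G$, and the no-spur hypothesis on~$f$ translates, via the bijection between clusters of~$f$ and vertices of~$\hat G$, into the following statement: at every vertex~$v$ of~$\hat G$, the directed edges leaving~$v$ are sent by~$\hat f$ to at least two distinct directed edges of~$M$ out of $\hat f(v)$.

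Because each cluster is connected, the induced map $q_* : \pi_1(G) \to \pi_1(\hat G)$ is surjective: every loop in~$\hat G$ lifts to a loop in~$G$ by filling cluster crossings with connecting paths inside the clusters. Combined with the factorization $f_* = \hat f_* \circ q_*$, the contractibility of~$f$ forces $\hat f_* = 0$, so $\hat f$ is itself contractible. Letting $M_0$ denote the connected component of~$M$ containing $\hat f(\hat G)$, the map $\hat f$ therefore lifts to a graph homomorphism $\tilde{\hat f} : \hat G \to \tilde M_0$, where $\tilde M_0$ is the universal cover of~$M_0$. Since the covering map is a local homeomorphism around each vertex of~$\tilde M_0$, distinct outgoing directed edges at~$v$ for~$\hat f$ remain distinct for~$\tilde{\hat f}$, so the no-spur property transfers to the lift.

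To conclude, I would use that the universal cover of a connected graph is a tree: $\tilde{\hat f}(\hat G)$ is a connected subgraph of~$\tilde M_0$, hence a subtree, containing at least one edge. Pick a leaf~$\ell$ of this subtree, and any vertex~$v$ of~$\hat G$ with $\tilde{\hat f}(v) = \ell$; every edge of~$\hat G$ incident to~$v$ is mapped to an edge of~$\tilde M_0$ incident to~$\ell$ and lying in the image subtree, of which there is only one by the choice of~$\ell$ as a leaf. All outgoing directed edges at~$v$ thus share the same image, contradicting the no-spur property of~$\tilde{\hat f}$. The technical content of the argument lies in the translation of the no-spur assumption through the contraction and through the lift to the universal cover; once these transfers have been checked, the leaf argument is a one-line conclusion.
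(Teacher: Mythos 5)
Your proof is correct, but it follows a genuinely different route from the paper's. The paper contracts clusters so that $f$ becomes a homomorphism, then argues purely combinatorially: the no-spur hypothesis lets one greedily extend a walk in $G$ whose image in $M$ is canonical (no immediate backtracking); since $G$ is finite, some subwalk $W'$ starts and ends at the same vertex, and $f\circ W'$ is then a non-constant canonical walk, hence non-contractible by the uniqueness of canonical representatives (fact (1)), contradicting contractibility of $f$. You instead pass to $\pi_1$: surjectivity of $q_*:\pi_1(G)\to\pi_1(\hat G)$ together with contractibility of $f$ forces $\hat f_*=0$, so $\hat f$ lifts to the universal cover $\tilde M_0$, which is a tree; the image is a finite subtree with an edge, hence has a leaf, and any preimage of that leaf violates the lifted no-spur condition. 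The two arguments are close in spirit (fact (1) for graphs is essentially the tree-ness of the universal cover), but the paper's version avoids any explicit appeal to $\pi_1$, universal covers, or lifts, and is noticeably shorter; yours is more structural and makes the role of the universal cover visible, at the cost of a couple of extra transfers (surjectivity of $q_*$, and that the no-spur condition survives lifting) that must each be checked. One minor remark: you should state explicitly that $\hat G$ is finite (it is, since the paper fixes all graphs to be finite in this section), because the leaf argument relies on the image subtree being a finite tree.
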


\begin{proof}
We may assume without loss of generality that $f$ is a homomorphism by contracting the edges that belong to clusters of $f$. Assume by contradiction that $f(G)$ is not a single vertex of $M$. Since $G$ is connected, some edge of $G$ is mapped by $f$ to an edge of $M$. Since $f$ has no spur, there is a semi-infinite walk $W$ in $G$ such that $f \circ W$ is canonical. Since $G$ is finite, there is a subwalk $W'$ of $W$, not a single vertex, that starts and ends at the same vertex of $G$. And the loop $f \circ W'$ is non-contractible in $M$ by (1), contradicting the assumption that $f$ is contractible.
\end{proof}

Until the end of this section, we need to consider general drawings (instead of simplicial ones). Recall from \autoref{sec:preliminaries} that every drawing $f : G \to M$ factors as a simplicial map $\bar f : \bar G \to M$, where $\bar G$ is a subdivision of $G$. Observe that every cluster of $f$ corresponds to a cluster in $\bar f$. In addition, the vertices of $\bar G$ inserted in the edges of $G$ are clusters of $\bar f$. There is no other cluster in $\bar f$.

\begin{lemma}\label{lem:patch2}
Let $M$ be a graph embedded on a surface. Let $G$ be a graph, and let $f : G \to M$ be a drawing. If there is an embedding homotopic to $f$ in the patch system of $M$, then there is a weak embedding $f' : G \to M$, homotopic to $f$, that has no spur.
\end{lemma}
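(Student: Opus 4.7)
The plan is to eliminate spurs of $\bar f$ by iterated length-reducing homotopies, and then to show that once $\bar{f'}$ is spur-free, the given embedding of $G$ in the patch system $\Sigma$ can be isotoped into one that approximates $f'$, making $f'$ a weak embedding.

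First I would introduce a \emph{spur-push} move. Let $G_0$ be a spur of $\bar f$, and let $(v,w)$ be the common initial directed edge in $M$ of every walk $\bar f \circ e$ where $e$ leaves $G_0$, with $v = f(G_0)$. The move replaces $f$ by $f_1$ obtained by sliding $f(G_0)$ along $(v,w)$ to~$w$: concretely, $f_1$ coincides with $f$ outside $G_0$, maps $G_0$ to $w$, drops the leading $(v,w)$ step from every edge of~$G$ going out of $G_0$, and drops the trailing $(w,v)$ step from every edge going into $G_0$. This is realized by a homotopy of $f$ inside $M\subset\Sigma$. Since $G_0$ is a spur it is not a connected component of $G$, so at least one edge of $G$ leaves $G_0$, and the sum $\sum_{e\in E(G)}|f\circ e|$ strictly decreases. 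The process terminates, since total length is a nonnegative integer, and produces a drawing $f'$ homotopic to $f$ with no spur. By composition of homotopies, $f'$ is still homotopic to the given embedding $\psi:G\to\Sigma$.

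The core task, and main obstacle, is to argue that this spur-free $f'$ is already a weak embedding. A first observation: the absence of spurs forces $\bar{f'}\circ e$ to be canonical in $M$ for every edge $e$, because a backtrack $\ldots u_{i-1}u_iu_{i-1}\ldots$ would yield a degree-two interior vertex of the subdivision $\bar G$ that is itself a $1$-vertex spur of $\bar{f'}$. I would then modify $\psi$ by isotopy in $\Sigma$, preserving embeddedness, into an approximation of $f'$. Using the vertex tracks of the homotopy between $\psi$ and $f'$, I would first slide each $\psi(v)$ into $\mathrm{int}(D_{f'(v)})$, dragging the incident $\psi$-edges along; this is possible because $\Sigma$ is a surface and each $D_u$ is a closed disk. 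After putting $\psi$ in general position with respect to the arcs dual to edges of $M$, the successive arcs crossed by $\psi\circ e$ encode a walk $w_\psi(e)$ in $M$. I would then remove every pair of immediately consecutive crossings of the same arc by an innermost-bigon isotopy inside the relevant disk, making $w_\psi(e)$ canonical. Since $\psi\circ e$ is homotopic rel endpoints to $\bar{f'}\circ e$ in $\Sigma$ and $M$ is a deformation retract of $\Sigma$, the walks $w_\psi(e)$ and $\bar{f'}\circ e$ are homotopic rel endpoints in $M$; being both canonical, fact~(1) gives $w_\psi(e)=\bar{f'}\circ e$, so $\psi$ approximates $f'$.

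The delicate point will be the bigon-removal step: one must justify that the innermost-bigon isotopies can be iterated globally without interference, and keep $\psi$ an embedding throughout. I expect a standard induction on the number of crossings of $\psi$ with the dual arcs, picking an innermost offending bigon at each step, since such a bigon lies entirely inside one disk $D_u$ and can be removed by a local isotopy supported in $D_u$ that does not touch other edges of $\psi$.
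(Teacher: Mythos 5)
Your proof takes a genuinely different route from the paper's, and the route has real gaps. The paper never normalizes $f$ at all: it picks an embedding $g$ homotopic to $f$ in $\Sigma$ with the \emph{minimum} number of crossings with the arcs, reads off the drawing $f'$ that $g$ approximates (so $f'$ is a weak embedding by construction), and then argues by contradiction that if $f'$ had a spur one could isotope $g$ to reduce its arc-crossings — a short, self-contained argument. You instead first make $f$ spur-free by length-decreasing moves (this part is fine, and is essentially a special case of the ``swap'' moves used later in the paper), and then try to show that \emph{that particular} spur-free drawing is already a weak embedding. This second part is, in effect, a direct proof of something close to Proposition~\ref{tutteinpatch}, which the paper proves \emph{using} this lemma; so you are attempting a harder statement than the one needed.

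The concrete gaps are in your second step. First, the sentence ``Since $\psi\circ e$ is homotopic rel endpoints to $\bar{f'}\circ e$ in $\Sigma$'' is doing all the work of letting you invoke fact~(1), but it is not established. The homotopy between $\psi$ and $f'$ as maps $G\to\Sigma$ moves vertices; after retracting to $M$, the walks $w_\psi(e)$ and $\bar{f'}\circ e$ are a priori only homotopic up to conjugation by the vertex-track loops $\beta_u,\beta_v$, and for a loop edge $e$ in $G$ this conjugation is genuinely nontrivial — free homotopy of based loops does not give based homotopy. Sliding $\psi(v)$ ``along the vertex track'' is the right idea to kill these $\beta_v$, but your finger-push along $\alpha_v$ is an ambient isotopy: it will drag other vertices $\psi(w)$ and other strands of $\psi$ through which $\alpha_v$ passes, so after performing all the slides you cannot simply assume the tracks have become trivial; this needs an explicit general-position argument. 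Second, the innermost-bigon removal is not as local as you claim: an innermost bigon between two consecutive crossings of the same arc $a$ is a disk lying inside one patch $D_u$, but it may well contain \emph{other} components of $\psi(G)$ in its interior (nothing prevents a whole connected component of $G$ from being embedded inside it). Pushing the bigon across $a$ then drags those pieces into the neighboring disk and introduces new crossings with $a$. Termination can still be argued (total crossing number drops), but the statement that the move ``does not touch other edges of $\psi$'' is false as written, and once other edges move, the ``canonical and homotopic rel endpoints $\Rightarrow$ equal'' conclusion must be re-derived for each edge. The paper's choice of a globally minimal-crossing $g$ is precisely what sidesteps all of this.
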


\begin{figure}
    \centering
    \includegraphics[scale=1]{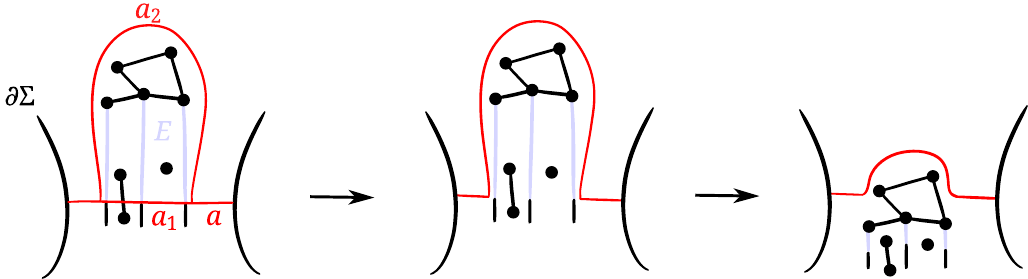}
    \caption{In the proof of \autoref{lem:patch2}, if $f'$ had a spur, then the number of crossings of $g$ with the arcs of~$\Sigma$ could be decreased.}
    \label{fig:no spur}
\end{figure}

\begin{proof}
Let $\Sigma$ be the patch system of $M$.  We regard $f$ as a map from $G$ to~$\Sigma$.  There is by assumption an embedding $g:  G \to \Sigma$ homotopic to $f$. We can assume that $g$ crosses the arcs of~$\Sigma$ as few times as possible subject to the constraint that it is an embedding homotopic to~$f$.  Let $f' : G \to M$ be the drawing of which $g$ is an approximation. Then $f'$ is a weak embedding homotopic to $f$.  

There remains to prove that $f'$ has no spur. By contradiction, assume that it does. See \autoref{fig:no spur}. Let $G_0$ be the cluster of the spur, and let $a$ be the arc of $\Sigma$ dual to the directed edge of the spur. Let $E \subset \Sigma$ contain, for every edge $e$ directed from $G_0$ to $G \setminus G_0$, the prefix of the image path $g(e)$ that leaves $g(G_0)$ to reach its first crossing with $a$. Let $a_1$ be the subpath of~$a$ that starts just before its first crossing with~$E$ and ends just after its last crossing with~$E$. 
Because $g$ is an embedding, there exists a simple path~$a_2$ in $\Sigma$ with the same endpoints as~$a_1$, otherwise disjoint from the arcs of $\Sigma$ and from $g(G)$, such that the disk bounded by $a_1$ and $a_2$ contains $g(G_0)$. We consider a self-homeomorphism~$h$ of~$\Sigma$ that affects only a neighborhood of the disk bounded by~$a_1$ and~$a_2$, and pushes $a_2$ to~$a_1$.  Then $h\circ g$ is an embedding of~$G$ homotopic to~$g$, with fewer crossings with the arcs of~$\Sigma$, a contradiction.
\end{proof} 

\begin{figure}
    \centering
    \includegraphics[scale=1]{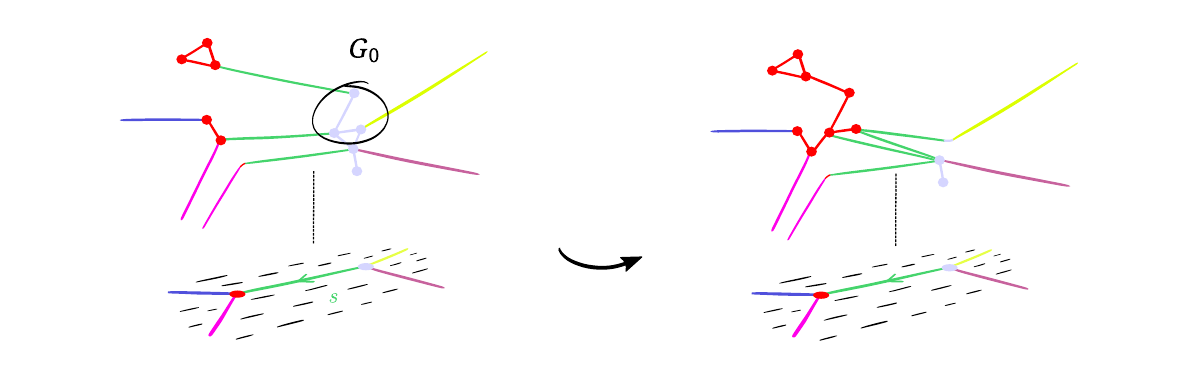}
    \caption{Swap of the subgraph $G_0$ along the directed edge $s$.}
    \label{fig:swaps}
\end{figure}

A drawing $f : G \to M$ is \emphdef{essential} if there is no connected component of $G$ on which $f$ is contractible. Let $f : G \to M$ be an essential drawing. Let $G_0$ be an induced subgraph of $G$, connected and mapped to a single vertex of $M$ by $f$. Let $s$ be a directed edge of $T$ based at the vertex $f(G_0)$. Consider the following operation that transforms $f$ homotopically into another essential drawing $f' : G \to M$ (\autoref{fig:swaps}). First slide $f(G_0)$ along $s$. Then, among the image walks of the edges directed from $G_0$ to $G \setminus G_0$, those that initially admitted $s$ as a prefix now start with the concatenation of $s^{-1}$ and $s$: shorten those walks by removing this prefix. In the particular case where $f$ and $f'$ both have no spur (in addition of being essential), we call this operation a \emphdef{swap}. A key but trivial observation is that, if $f'$ results from a swap of $f$, then $f$ results from a swap of $f'$.

Note that if $f$ has no spur, if $G_0$ is a cluster of $f$, and if $s$ is the image by $f$ of some edge directed from $G_0$ to $G \setminus G_0$, then $f'$ has no spur, so the operation is indeed a swap in this case. Every swap constructed in this section will either be such a swap, or the “inverse” of such a swap.

\begin{lemma}\label{lem:patch3}
Let $G$ and $M$ be graphs, and let $f,f' : G \to M$ be drawings. If $f$ and $f'$ are essential and without spur, and if they are homotopic, then there is a sequence of swaps between $f$ and $f'$. 
\end{lemma}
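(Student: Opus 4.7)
The plan is to reduce to a base case and then inductively apply swaps. First, since swaps act on connected subgraphs, I would reduce to the case where $G$ is connected, handling each connected component separately and concatenating the resulting swap sequences. The crux of the base case is the claim: if $f, f'$ are essential, spurless, homotopic, and satisfy $f(v)=f'(v)$ for every $v\in V(G)$, then $f=f'$. Indeed, the spurlessness of $f$ forbids any midpoint cluster of $\bar f$ from being a spur, which forces each edge walk $f(e)$ to be canonical (backtrack-free); the same holds for $f'(e)$. Fact (1) (canonical walks homotopic rel endpoints are equal) then gives $f(e)=f'(e)$ for every edge, hence $f=f'$. Thus it suffices to produce a sequence of swaps equalizing the vertex images.

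To drive an induction, I encode the homotopy $H: G\times[0,1]\to M$ combinatorially. For each vertex $v$ of $G$, $H(v,\cdot)$ is a path from $f(v)$ to $f'(v)$ in $M$; its canonicalization gives a canonical walk $\gamma_v$ in $M$, and these walks satisfy the compatibility $f(e)\cdot\gamma_v\simeq \gamma_u\cdot f'(e)$ rel endpoints for every edge $e=(u,v)$. Define the potential
\[
\Phi(f,f') \;=\; \sum_{v\in V(G)} |\gamma_v|,
\]
where $|\cdot|$ denotes walk length. The case $\Phi=0$ reduces exactly to the base case, so the plan is to drive $\Phi$ to $0$ by swaps.

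For the inductive step, assume $\Phi>0$. Pick a vertex $v$ with $|\gamma_v|$ maximum, let $s$ be the first edge of $\gamma_v$, and let $G_0$ be the cluster of $f$ containing $v$. The candidate move is to swap $G_0$ along $s$: the image of each $w\in G_0$ moves from $f(v)$ to the other endpoint of $s$, and $\gamma_w$ either loses its leading $s$ (when it begins with $s$) or gains a leading $s^{-1}$ (otherwise); vertices outside $G_0$ are unaffected. The contribution to $\Phi$ is therefore controlled by the fraction of $w\in G_0$ whose $\gamma_w$ begin with $s$. The \textbf{main obstacle} is legality: the swap might create a spur if every directed edge from $G_0$ to $G\setminus G_0$ happens to share a common first letter in the new drawing. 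I would handle this by choosing $G_0$ and $s$ extremally — e.g., among vertices with maximum $|\gamma_v|$, picking one whose cluster maximizes the number of $\gamma_w$ that agree on the leading $s$ — and, when no valid swap of $f$ is available, performing the symmetric swap on $f'$ via the ``inverse of a cluster swap'' mechanism allowed by the definition. Showing that at least one legal, strictly $\Phi$-decreasing swap always exists (possibly by enlarging or shrinking $G_0$ to avoid the spur condition, or by appealing to the symmetric side) is where I expect the proof to require the most technical care. Iterating drives $\Phi$ to $0$, at which point the base case produces $f=f'$ and closes the induction.
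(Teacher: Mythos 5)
Your base case is correct: if the chosen homotopy has trivial vertex traces~$\gamma_v$, then spurlessness forces each edge walk to be canonical, so fact~(1) yields $f(e)=f'(e)$ edge by edge. The rest of the plan, however, is materially different from the paper's proof, and the single ``main obstacle'' you flag turns out to be the whole content of the lemma, not a technicality to be patched at the end. The issue is not only that a greedy swap along the first letter of some $\gamma_v$ might create a spur; more fundamentally, your swap may simply fail to be a swap in the paper's sense, because the edge~$s$ you pick from $\gamma_v$ need not be the image of any edge leaving the cluster~$G_0$. The paper's legality guarantee (``if $s$ is the image by~$f$ of some edge directed from $G_0$ to $G\setminus G_0$, then $f'$ has no spur'') simply does not apply to a swap dictated by the homotopy trace rather than by the drawing itself, and nothing in your sketch replaces it. Moreover, the vertices of a single cluster $G_0$ can carry genuinely different traces (for $e=(u,v)$ inside $G_0$ we only get $\gamma_v\simeq\gamma_u\cdot f'(e)$, and $f'(e)$ need not be trivial), so ``choose the cluster whose traces mostly agree on a leading $s$'' does not obviously give a strict decrease of~$\Phi$ \emph{and} legality at the same time.

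The paper avoids a potential argument altogether. It first collapses a rooted spanning tree of $G$ by swaps chosen so that $s$ is always the last letter of the image of an edge already leaving the cluster (which makes legality automatic), reducing to a bouquet of loops with pairwise distinct non-trivial canonical images. It then aligns $f\circ e$ and $f'\circ e$ as canonical \emph{closed} walks up to cyclic permutation (fact~(2)), makes them equal by a few more swaps, and finally slides the basepoint around $f\circ e$. The correctness of this last step rests on fact~(3): the trace of the basepoint under the free homotopy commutes with $f\circ e$, hence is a power of the same primitive walk, hence the sliding happens along the already-canonical closed walk $f\circ e$. This is the structural input your $\Phi$-argument does not exploit, and it is exactly what explains \emph{why} there is always a legal, trace-shortening swap in the loop case. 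Without an analogue of this commutativity step, your induction has no mechanism that prevents it from stalling at a configuration where every vertex trace is a non-trivial power of some loop's image. So the proposal is a plausible high-level strategy, but as written it contains a genuine gap where the paper instead deploys facts~(1)--(3) together with a carefully staged choice of swaps.
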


\begin{proof}
We shall perform swaps on $f$ and $f'$ so that in the end $f = f'$. Assume without loss of generality that $G$ is connected, and fix a spanning tree $Y$ of $G$. We claim that there is a sequence of swaps that modify $f$ so that in the end $f(Y)$ is a single vertex of $M$. To prove the claim, we say that an edge $e$ of $Y$ is contracted if $f(e)$ is a single vertex of $M$. We choose an arbitrary root for~$Y$, and given any edge $e$ of $Y$, we refer to the edges of $Y$ distinct from $e$ and separated from the root by $e$ as the descendants of $e$. We prove the claim by considering an edge $e$ of $Y$, by assuming that $e$ is not contracted and that every descendant of $e$ is contracted, and by exhibiting a swap that shortens $f \circ e$ while keeping the descendants of $e$ contracted. Direct $e$ so that the tail vertex of $e$ is the closest to the root of $Y$. Let $G_0$ be the cluster containing the head vertex of $e$. Since the descendants of $e$ are all contracted, they all belong to $G_0$. Since $e$ is not contracted, we may consider the last directed edge $s$ of the walk $f \circ e$. The swap of $f(G_0)$ along the reversal of $s$ shortens $f \circ e$, and keeps the descendants of $e$ contracted.  This proves the claim.

We use the claim immediately on both $f$ and $f'$, then contract the spanning tree $Y$ in both drawings. Every edge $e$ of $G$ not in $Y$ becomes a loop. The image loop $f \circ e$ is contractible if and only if $f' \circ e$ is contractible, and in that case $f(e) = f(Y)$ and $f'(e) = f'(Y)$ are single vertices of $M$ by (1). Also, given any two edges $e_1$ and $e_2$ of $G$ not in $Y$, we have $f \circ e_1 \simeq f \circ e_2$ if and only if $f' \circ e_1 \simeq f' \circ e_2$, where $\simeq$ denotes the homotopy of loops relatively to their basepoint. In that case $f \circ e_1 = f \circ e_2$ and $f' \circ e_1 = f' \circ e_2$ by (1). By contracting the contractible loops and identifying the homotopic loops in both $f$ and $f'$, we may assume that every connected component of~$G$ has a single vertex (all its edges are loops), and that each of $f$ and $f'$ maps the edges of $G$ to pairwise distinct non-trivial walks in $M$.

The end of the proof is adapted from the proof of~\cite[Lemma~8.6]{de2024untangling}, to which we refer for details. Let $v$ be a vertex of $G$. Since $f$ and $f'$ are essential, the graph~$G$ is not a single vertex, so $G$ has a loop $e$.  At this point, $f\circ e$ and $f'\circ e$ are canonical walks, but not necessarily canonical closed walks; however, by performing swaps on $f$ and $f'$ at $v$ a few times if needed, we enforce that $f \circ e$ and $f' \circ e$ are canonical closed walks. Then by (2), and since $f \circ e$ and $f' \circ e$ are freely homotopic, $f \circ e$ is a cyclic permutation of $f' \circ e$.  By performing swaps on $f$ again a few times, we enforce that $f \circ e$ and $f' \circ e$ are actually equal (not up to cyclic permutation). At this point, we claim that we can slide  $f(v)$ around $f \circ e$ by performing swaps on $f$, so that in the end $f$ and $f'$ are homotopic relatively to $v$. This claim implies $f = f'$ by (1), which proves the lemma.  There remains to prove the claim.  For this, consider the path followed by $f(v)$ during some \emph{free} homotopy from $f$ to $f'$. This path is a loop based at $f(v)$.  Let $W$ be the canonical walk homotopic to it. We prove the claim by showing that $f \circ e$ and $W$ are equal to powers of the same walk. Indeed, let $C$ be a primitive canonical closed walk such that $f \circ e$ is freely homotopic to a power of $C$. Without loss of generality $f \circ e$ is equal to a power of $C$ by (2). Also, $W$ commutes with $f \circ e$ since $f \circ e = f' \circ e$, and since $f \circ e$ is homotopic to the walk $W \cdot (f' \circ e) \cdot W^ {-1}$. Thus, the walk $W$ is homotopic to a power of $C$ by (3). And so $W$ is equal to this power of $C$ by (1). 
\end{proof}

\begin{lemma}\label{lem:patch4}
Let $M$ be a graph embedded on a surface. Let $G$ be a graph, and let $f,f' : G \to M$ be drawings. If $f$ and $f'$ are essential, if $f'$ results from a swap of $f$, and if $f$ is a weak embedding, then $f'$ is a weak embedding.
\end{lemma}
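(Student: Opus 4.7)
The plan is to build, from an embedding $\psi : G \to \Sigma$ approximating $f$ (which exists since $f$ is a weak embedding), an embedding $\psi' : G \to \Sigma$ approximating $f'$, by a local modification of $\psi$ near the arc $\alpha \subset \Sigma$ dual to the directed edge $s$ of the swap. Write $v := f(G_0)$ and let $w$ be the other endpoint of $s$, so that $\alpha \subset \partial D_v \cap \partial D_w$.

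First I pick the region in which the modification will occur. Since $\psi$ approximates $f$, the image $\psi(G_0)$ is a compact connected subset of $\mathrm{int}(D_v)$. For every directed edge $e$ from $G_0$ to $G \setminus G_0$ with $f \circ e$ beginning with $s$, the path $\psi \circ e$ first exits $D_v$ through $\alpha$ at some point $x_e$, via a sub-arc $P_e \subset D_v$ joining $\psi(G_0)$ to $x_e$; the $P_e$'s are pairwise disjoint away from $\psi(G_0)$ because $\psi$ is an embedding. I take $N$ to be a closed regular neighborhood in $\Sigma$ of
\[
K \;:=\; \psi(G_0) \;\cup\; \bigcup_{e:\,f\circ e\text{ starts with }s} P_e,
\]
chosen so small that $N$ is a disk, $N \cap \partial D_v$ is a single sub-arc $\alpha_0 \subset \alpha$, and $N$ meets $\psi(G) \setminus K$ only along the continuations in $D_w$ of the paths $\psi \circ e$ beyond $x_e$.

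Second, I set $\psi' = \psi$ outside $N$ and modify $\psi$ inside $N$ by an ambient isotopy of $N$ rel $\partial N \setminus \alpha_0$ that pushes $\psi|_{G_0}$ across $\alpha_0$ into the $D_w$-side of $N$, simultaneously re-routing, inside $N$, the initial portions of every $\psi \circ e$ for $e$ from $G_0$, so that they reconnect the new $\psi'(G_0) \subset D_w$ to the part of $\psi \circ e$ lying outside $N$. For $e$ with $f \circ e$ starting with $s$, the re-routing removes the crossing of $\alpha$ at $x_e$, so the arc-crossing sequence of $\psi' \circ e$ is exactly $f(e)$ with the leading $s$ stripped, i.e.\ $f'(e)$. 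For $e$ with $f \circ e$ not starting with $s$, the re-routed path crosses $\alpha$ once inside $N$ before picking up $\psi \circ e$, giving the sequence $s^{-1} \cdot f(e) = f'(e)$. By construction $\psi'$ is in general position, still an embedding (since the modification is carried out within the disk $N$ which is disjoint from the rest of $\psi(G)$), and approximates $f'$, showing that $f'$ is a weak embedding.

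The main obstacle is producing $N$ with the listed properties. Two subtleties must be handled. First, if $\psi(G_0)$ has non-trivial topological cycles inside $D_v$, their interiors could contain other parts of $\psi(G)$ and a naive regular neighborhood of $K$ would fail to be simply connected; this is handled by either arguing that essentiality of $f$ forbids $\psi$-image of an entire connected component of $G$ from being trapped inside such a face of $\psi(G_0)$, or by enlarging $N$ across these bounded faces together with their contents (which are then pushed along with $G_0$). Second, one needs at least one edge $e$ from $G_0$ with $f \circ e$ beginning with $s$, so that $K$ actually reaches $\alpha$. The absence of such an $e$ is excluded by the swap hypotheses: if every edge from $G_0$ to $G \setminus G_0$ had $f\circ e$ not starting with $s$, then every such edge would have $f' \circ e$ starting with $s^{-1}$, making the cluster of $f'$ containing $G_0$ either a spur of $f'$ (contradicting that a swap is spur-free) or an entire connected component of $G$ (forcing contractibility in $f'$, contradicting essentiality). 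Once $N$ is in place, the local push is a routine ambient isotopy inside a disk.
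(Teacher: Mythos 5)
Your overall strategy matches the paper's: start from an embedding $\psi$ approximating $f$ in the patch system, and modify it locally near the arc dual to $s$ so that the result approximates $f'$. But there is a genuine gap in how you set up the region of modification, and it is exactly where the paper has to work.

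You ask for a regular neighborhood $N$ of $K=\psi(G_0)\cup\bigcup P_e$ that is a disk and meets $\partial D_v$ in a \emph{single} sub-arc $\alpha_0\subset\alpha$. A genuinely small regular neighborhood of $K$ cannot have this property: $K$ touches $\alpha$ only at the finitely many points $x_e$, so a thin neighborhood meets $\alpha$ in disjoint little arcs around those points. To make $N\cap\alpha$ connected you must enlarge $N$ to include the regions of $D_v$ bounded by consecutive prefixes $P_{e_1},P_{e_2}$, a piece of $\psi(G_0)$, and the sub-arc of $\alpha$ between $x_{e_1}$ and $x_{e_2}$. But nothing in your argument rules out that such a region contains part of $\psi(G\setminus G_0)$ — e.g.\ a vertex $u$ of $G\setminus G_0$ whose incident edges all cross $\alpha$ into $D_w$ between $x_{e_1}$ and $x_{e_2}$. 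Pushing $\psi(G_0)$ across $\alpha_0$ would then either cross $\psi(u)$ (destroying injectivity) or drag it along (destroying the arc-crossing sequence of edges at $u$, so $\psi'$ no longer approximates $f'$). The ``first subtlety'' you address — cycles in $\psi(G_0)$ trapping an entire component, ruled out by essentiality — is a different and strictly weaker concern: the configurations I just described are \emph{not} entire components (they can reach the rest of $G$ through $\alpha$), so essentiality of $f$ does not exclude them.

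This missing step is the heart of the lemma. The paper's proof takes the region $D$ bounded by the relevant sub-arc $a_1$ of $a$ and a path $a_2$ passing ``behind'' $F(G_0)$, and then \emph{proves} that $D$ contains nothing of $F(G)$ besides $F(G_0)$ and the prefix set $E$. The proof of that claim uses the hypothesis that $f'$ (post-swap) has no spur: if a vertex $v$ of $G\setminus G_0$ lay inside $D$, then every walk from $v$ in $F(G)$ would either stay inside $D$ without hitting an arc, or reach $a_1$ (hence cross $a$), or reach $F(G_0)$ — and this forces the cluster of $v$ in $f'$ to be either a constant component (contradicting essentiality) or a spur (contradicting that the move is a swap). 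You invoke spur-freeness of $f'$ only to show that at least one edge out of $G_0$ starts with $s$ (your second subtlety, which is correctly argued), but you need it again, in this stronger form, to clean up the push region. Without it your $N$ cannot be shown to have the properties you list, and the isotopy may fail to produce an embedding approximating $f'$.
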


\begin{proof}
Let $\Sigma$ be the patch system of $M$, and let $F : G \to \Sigma$ be an embedding that approximates $f$ in $\Sigma$. Let $G_0$ be the subgraph of the swap, and let $a$ be the arc of $\Sigma$ dual to the directed edge of the swap. Let $E \subset \Sigma$ contain, for every edge $e$ directed from $G_0$ to $G \setminus G_0$, the prefix of the image path $F(e)$ that leaves $F(G_0)$ to reach either a vertex of $F(G \setminus G_0)$ in the same face of $\Sigma$, or its first crossing with an arc of $\Sigma$. Let $E_1 \subset E$ contain the paths that reach $a$. Then $E_1 \neq \emptyset$, for otherwise $G_0$ would either be a spur in $f'$, or it would be a cluster mapped to a single vertex in $f'$. Let $a_1$ be the subpath of~$a$ that starts just before its first crossing with~$E_1$ and ends just after its last crossing with~$E_1$. Because $F$ is an embedding, there exists a simple path~$a_2$ with the same endpoints as~$a_1$, otherwise disjoint from the arcs of $\Sigma$, and disjoint from $F(G)$ except for each path in $E \setminus E_1$ that $a_2$ may cross at most once, such that the disk $D$ bounded by $a_1$ and $a_2$ contains $F(G_0)$.

We claim that $D$ does not contain any part of $F(G)$ other than $F(G_0)$ and $E$. By contradiction assume that it does. Then $D$ contains the image $F(v)$ of a vertex $v$ of $G \setminus G_0$, since $G_0$ is an induced subgraph of $G$, and since any edge of $F(G \setminus G_0)$ intersecting $D$ must have an endpoint in $D$. If a path based at $F(v)$ in $F(G)$ does not intersect $a_1$ nor $F(G_0)$, then this path stays in the interior of $D$, and so it does not intersect any arc of $\Sigma$. Therefore, in $f'$, the cluster containing $v$ is either mapped to a single vertex, or is a spur, which is a contradiction.

Now consider a self-homeomorphism~$h$ of~$\Sigma$ that affects only a neighborhood of $D$, and pushes $a_2$ to~$a_1$ (similar to the one depicted in \autoref{fig:no spur}). Then $h\circ F$ is an embedding of~$G$, that approximates $f'$ by the preceding claim.
\end{proof}

The following lemma is easy and might be folklore, but we could not find a reference, so we provide a proof for completeness:

\begin{lemma}\label{lem:planar part}
Let $S$ be a surface. Let $G$ be a graph, and let $f : G \to S$ be a map. If $f$ is contractible, and if there is an embedding homotopic to $f$, then $G$ is planar.
\end{lemma}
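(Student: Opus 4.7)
The plan is to pass to the universal cover. Let $g : G \to S$ be an embedding homotopic to $f$; since being contractible is a homotopy invariant, $g$ is also contractible, i.e., null-homotopic. It suffices to prove that $G$ is planar under these hypotheses on $g$, so from now on we reason about $g$ alone.

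I would first reduce to the case where $G$ is connected: the restriction of a null-homotopy of $g$ to any connected component remains a null-homotopy, so each component of $G$ is embedded null-homotopically into $S$. Given planar embeddings of each component, we recover a planar embedding of $G$ by placing each one inside its own disk of a pairwise disjoint family of disks in $\mathbb{R}^2$.

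Assuming then that $G$ is connected, since $g$ is null-homotopic, its induced homomorphism on fundamental groups is trivial, so the lifting criterion for covering spaces produces a continuous lift $\widetilde{g} : G \to \widetilde{S}$ with $\pi \circ \widetilde{g} = g$, where $\pi : \widetilde{S} \to S$ is the universal cover. This lift is injective: if $\widetilde{g}(x) = \widetilde{g}(y)$, then applying $\pi$ gives $g(x) = g(y)$, so $x = y$ by injectivity of $g$. Because $G$ is a compact graph and $\widetilde{S}$ is Hausdorff, the continuous injection $\widetilde{g}$ is an embedding.

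Finally, I would invoke the classification of simply connected surfaces: $\widetilde{S}$ is either homeomorphic to $S^2$ (this happens only when $S$ is $S^2$ or $\mathbb{R}P^2$), or it is a simply connected surface (possibly with boundary) embeddable in $\mathbb{R}^2$. In the second case $\widetilde{g}$ directly gives an embedding of $G$ into the plane. In the first case, $G$ embeds in $S^2$, and we obtain a planar embedding by puncturing the sphere inside a face. Either way $G$ is planar, which concludes the proof.

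The main obstacle is bookkeeping rather than a substantive difficulty: one must make sure $\widetilde{g}$ really is a topological embedding (compactness of $G$ takes care of this), and one must handle the surfaces-with-boundary case in the classification step (here every simply connected surface with non-empty boundary embeds in $\mathbb{R}^2$, so no extra case analysis is needed beyond the closed case).
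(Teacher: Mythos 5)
Your proof is correct, and it takes a genuinely different route from the paper's. The paper argues directly on the surface $S$: it fixes a spanning tree, isotopes the embedded image of that tree into a small disk around a base point, observes that the remaining edges become simple contractible loops, and then invokes a theorem of Epstein (each simple contractible loop bounds a disk containing only contractible loops) to push those loops into the small disk by isotopy, so the whole image lands in a disk. You instead pass to the universal cover: a null-homotopic embedding $g$ lifts to $\widetilde S$ because $g_*$ is trivial on $\pi_1$, the lift is automatically injective (since $\pi\circ\widetilde g=g$ and $g$ is injective), compactness of $G$ upgrades this to an embedding, and simply connected surfaces other than $S^2$ are planar. Your argument is shorter and more conceptual; it sidesteps the isotopy bookkeeping and Epstein's theorem entirely. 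The one place to be a little careful is the final classification step: that every simply connected surface with non-empty boundary embeds in $\mathbb{R}^2$ is true but less universally quoted than the closed-surface statement; in the paper's setting $S$ is compact and orientable, so the universal cover is $S^2$, the plane, or a simply connected (planar) surface with boundary, and the claim is unproblematic. The isotopy approach of the paper, on the other hand, has the minor virtue of producing an embedding of $G$ inside an arbitrarily small disk of $S$ itself (which is in fact how the lemma is used in the proof of Proposition~\ref{tutteinpatch}), rather than merely establishing abstract planarity of $G$; your argument would need a small additional step to recover that, though it is not stated as part of the lemma.
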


\begin{proof}
Without loss of generality $G$ is connected. Fix a vertex $r$ and a spanning tree $T$ of $G$. There is an embedding $f' : G \to S$ homotopic to $f$. In $f'$, almost contract the image of $T$ by isotopy, without changing the image of $r$, in order to push the image of $T$ inside a small neighborhood $N$ of $f(r)$. Every edge $e$ of $G$ not in $T$ is mapped by $f'$ to a simple contractible loop. Those loops can be pushed inside $N$ by isotopy since, by a result of Epstein~\cite[Theorem~1.7]{e-c2mi-66}, each of them bounds a disk with only (possibly) contractible loops inside it.
\end{proof}

Finally, we prove \autoref{tutteinpatch}:

\begin{proof}[Proof of Proposition~\ref{tutteinpatch}]
  If $G'$ is a connected component of $G$ on which $f$ is contractible, then $f(G')$ is a single vertex of $M$ by \autoref{lem:patch1}, and $f|_{G'}$ can made an embedding in an arbitrarily small disk in the patch system of $\Sigma$ by~\autoref{lem:planar part}. So we can assume that $f$ is essential. By \autoref{lem:patch2}, there is a weak embedding $g:G\to M$, homotopic to~$f$, without spur. By \autoref{lem:patch3}, there is a sequence of swaps from $g$ to~$f$.  By \autoref{lem:patch4}, all the maps from $G$ to~$M$ in this sequence are weak embeddings, thus $f$ is itself a weak embedding, as desired.
\end{proof}

%%%%%%%%%%%%%%%%%%%%%%%%%%%%%%%%%%%%%%%%%%%%%%%%%%%%%%%%%%%%%%%%%%%%%%
\subsection{A property of the coherently oriented maps homotopic to the identity}

Let $S$ be a surface, and let $Y \subset S$ be finite. A map $\varphi : S \to S$ is \emphdef{coherently oriented} at $Y$ if $\varphi ^{-1}(Y)$ is finite and if $\varphi$ is, locally, an orientation-preserving homeomorphism around every point of~$\varphi^{-1}(Y)$, or an orientation-reversing homeomorphism around every point of~$\varphi^{-1}(Y)$.

The following proposition is proved using the topological notion of the degree of a self-map $\varphi:S\to S$.  It will be used in the case where $Y$ contains one point per face of $T$, and we will regard $S \setminus Y$ as the patch system of $T^1$. The cardinality of a set $Z$ is denoted by $\# Z$.

\begin{restatable}{proposition}{propselfmaps}\label{prop:selfmaps}
Let $S$ be a closed surface. Let $\varphi  : S \to S$ be a map, homotopic to the identity map of $S$. Let $Y \subset S$ be finite. If $\varphi$ is coherently oriented at $Y$, then $\# \varphi^{-1}(Y) = \# Y$ and $\varphi|_{S \setminus \varphi^{-1}(Y)}^{S \setminus Y}$ is homotopic to a homeomorphism $S \setminus \varphi^{-1}(Y) \to S \setminus Y$.
\end{restatable}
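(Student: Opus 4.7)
My plan is to establish $\#\varphi^{-1}(Y) = \#Y$ from a degree-theoretic count, and then to deduce the homotopy to a homeomorphism from the fact that the restricted map is a homotopy equivalence between two aspherical surfaces of the same topological type.

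For the counting, since $\varphi \simeq \mathrm{id}_S$, its topological degree equals $1$. The coherent-orientation hypothesis makes each $y\in Y$ a regular value of $\varphi$, so $1 = \deg(\varphi) = \sum_{y'\in\varphi^{-1}(y)} \epsilon_{y'}$ with each local sign $\epsilon_{y'}\in\{\pm 1\}$; by coherence, all these signs are equal across $\varphi^{-1}(Y)$. The negative case would force a nonpositive degree, contradicting $\deg\varphi = 1$, so all signs are $+1$, each $y \in Y$ has exactly one preimage, and $\#\varphi^{-1}(Y) = \#Y$. In particular, $\varphi$ is locally orientation-preserving near $Y' := \varphi^{-1}(Y)$.

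Write $f := \varphi|_{S \setminus Y'}^{S\setminus Y}$; this is proper since $\varphi$ is continuous on compact $S$ with $\varphi^{-1}(Y)=Y'$. Both $S \setminus Y$ and $S \setminus Y'$ are open orientable surfaces of the same topological type (same genus, same number of punctures), hence homeomorphic; they deformation retract onto graphs, so are aspherical $K(\pi,1)$'s with finitely generated free $\pi_1$. I plan to show $f$ is a homotopy equivalence by proving that $f_* : \pi_1(S \setminus Y') \to \pi_1(S \setminus Y)$ is an isomorphism: the Hopfian property of finitely generated free groups of equal rank reduces this to surjectivity of $f_*$. The commutative square relating the inclusions $S \setminus Y' \hookrightarrow S$ and $S \setminus Y \hookrightarrow S$, whose induced $\pi_1$-maps are surjections with ``puncture-loop'' kernels, reduces surjectivity of $f_*$ to two ingredients: the induced map $\varphi_*:\pi_1(S)\to\pi_1(S)$ is the identity (since $\varphi \simeq \mathrm{id}$), and each small puncture loop about $y' \in Y'$ is sent by $f$ to a small puncture loop about $\varphi(y')$ with the correct orientation, which follows from $\varphi$ being an orientation-preserving local homeomorphism at $y'$.

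Once $f$ is known to be a homotopy equivalence between two aspherical surfaces of the same topological type, the Dehn--Nielsen--Baer theorem in its version for surfaces of finite type with punctures implies $f$ is homotopic to a homeomorphism, as desired. The most delicate step is making the $\pi_1$-surjectivity argument fully rigorous, in particular keeping careful track of basepoints and tethering paths to the punctures (puncture loops are only defined in $\pi_1$ up to conjugation). A cleaner fallback, should this bookkeeping become too fiddly, is first to modify $\varphi$ by homotopy so that it is a homeomorphism on pairwise disjoint small disks around $Y'$, and then to invoke Edmonds' theorem that a degree-one map between compact orientable surfaces with boundary of equal Euler characteristic is homotopic rel boundary to a homeomorphism, gluing back the disk homeomorphisms to produce the homeomorphism homotopic to $f$.
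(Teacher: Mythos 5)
Your degree count in the first paragraph is exactly the paper's argument. Of your two routes, the fallback \emph{is} the paper's proof: homotope $\varphi$ so it carries disjoint small disks around $\varphi^{-1}(Y)$ homeomorphically onto disjoint small disks around $Y$, remove their interiors to get a degree-one map between compact surfaces with boundary restricting to a boundary homeomorphism, and apply Edmonds (the paper uses his Theorem~4.1, deduced from his Theorem~3.1 by noting that the branched coverings and pinch maps occurring in that decomposition must be trivial here). So taken as a whole the proposal reaches a correct proof.

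The main route is a genuinely different outline, but, as you already suspect, the $\pi_1$-surjectivity reduction does not close as stated. The commutative square shows that $\mathrm{Im}(f_*)$ surjects onto $\pi_1(S)$ and that it contains, for each $y\in Y$, \emph{some conjugate} of a puncture loop about $y$; but the conjugating element need not itself lie in $\mathrm{Im}(f_*)$, so one cannot conclude that $\mathrm{Im}(f_*)$ contains the kernel of $\pi_1(S\setminus Y)\to\pi_1(S)$, which is the \emph{normal closure} of those loops, not the subgroup they generate. To get surjectivity one would sooner use the degree-one hypothesis directly (factor $f$ through the covering of $S\setminus Y$ corresponding to $\mathrm{Im}(f_*)$ and compare degrees), and running that cleanly on the noncompact $S\setminus Y$ already nudges you toward the compact-with-boundary picture, i.e., your fallback. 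You would also need to treat separately the sporadic low-complexity cases where the punctured Dehn--Nielsen--Baer theorem has exceptions; the Edmonds route handles these uniformly, which is one reason the paper takes it.
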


\begin{proof}
Let $y \in Y$. We claim that $\varphi^{-1}(y)$ contains only one point, and that $\varphi$ is orientation-preserving around this point. To prove this claim, let $n^+$ and $n^-$ denote the number of points of $\varphi^{-1}(y)$ around which $\varphi$ is respectively orientation-preserving and orientation-reversing. The difference $n^+ - n^-$ does not depend on the choice of $y$ (as long as $y$ is chosen so that $\varphi$ is locally a homeomorphism around every point of $\varphi^{-1}(y)$) and is known as the degree of $\varphi$.

The degree of a map is invariant by homotopy, and $\varphi$ is homotopic to the identity of $S$, so $n^+ - n^- = 1$. We assumed $n^- = 0$ or $n^+=0$. Thus, $n^- = 0$ and $n^+ = 1$.

Using our claim, for every $y \in Y$, there is a closed disk $B_y \subset S$ containing $y$ in its interior, such that $\varphi^{-1}(B_y)$ is a closed disk $A_y$, and such that $\varphi|_{A_y}^{B_y}$ is an orientation-preserving homeomorphism. Without loss of generality, the disks $\{B_y\}_{y \in Y}$ are pairwise disjoint. Let $N$ be the surface obtained from $S$ by removing the interiors of the disks $\{B_y\}_{y \in Y}$, and $M$ be obtained by removing the interiors of the disks $\{A_y\}_{y \in Y}$. The map $\varphi' := \varphi|_M^N$ is defined. Since $\varphi$ is a degree one map, $\varphi' : M  \to N$ is a degree one map. By construction, $\varphi'$ maps the boundary of $M$ to the boundary of $N$ and the interior of $M$ to the interior of $N$, and the restriction and corestriction of $\varphi'$ to the boundaries of $M$ and $N$ is a homeomorphism. It follows from a result by Edmonds~\cite[Theorem~4.1]{edmonds1979deformation} that $\varphi'$ is homotopic to a homeomorphism $M \to N$, where the homotopy is relative to~$\partial M$. (More precisely, this result follows from \cite[Theorem~3.1]{edmonds1979deformation} by noting that (1) each branch covering of degree $\pm1$ is a homeomorphism, and that (2) each pinch map from $M$ to~$M$ is homotopic to the identity, because the simple closed curve defining the pinch must bound a disk.)
\end{proof}

%%%%%%%%%%%%%%%%%%%%%%%%%%%%%%%%%%%%%%%%%%%%%%%%
\subsection{From graphs to triangulations}

Let $T$ be a reducing triangulation, and let $Z$ be a triangulation. A map $\varphi : Z \to T$ is \emphdef{simplicial} if $\varphi$ maps $Z^1$ to $T^1$ simplicially, and if $\varphi$ sends every face of $Z$ to a vertex, an edge, or a face of $T$. It is \emphdef{strongly harmonious} if $\varphi|_{Z^1}^{T^1}$ is strongly harmonious.

In this section, we prove the following proposition, which (essentially) allows us to consider simplicial drawings of entire triangulations, instead of graphs.

\begin{restatable}{proposition}{propextension}\label{prop:extension}
Let $S$ be a closed surface distinct from the sphere. Let $T$ be a reducing triangulation of $S$. Let $G$ be a finite graph embedded in $S$, and let $f: G \to T^1$ be simplicial. Assume that $f$ is strongly harmonious, and that $f$ is homotopic to the inclusion map $G \hookrightarrow S$. There are a triangulation $Z$ of $S$ whose 1-skeleton contains a subdivision of $G$ as a subgraph, and a simplicial map $\varphi : Z \to T$ with $\varphi|_G^{T^1} = f$, such that $\varphi$ is strongly harmonious and homotopic to the identity map of $S$.
\end{restatable}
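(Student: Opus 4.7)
The plan is to build $Z$ and $\varphi$ face by face, using van Kampen--style disk diagrams. First, I would cellularize the embedding of $G$ in $S$: for each face that is not a disk, insert simple arcs cutting it into disks, drawn as walks in $T^1$ obtained by pushing along the homotopy from $f$ to the inclusion, and further subdivide so that $f$ becomes a graph homomorphism sending each edge of $G$ to a single edge of $T$. This preserves strong harmoniousness at the original vertices of $G$ (the escape walks are still available in the subdivided graph) as well as the homotopy to the inclusion, and costs nothing because $Z$ is allowed to contain a subdivision of $G$.

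Next, for each face $F$ of the cellular embedding, the boundary $\partial F$ is contractible in $S$, so $f\circ \partial F$ is a null-homotopic closed walk in $T^1$. By standard disk (van Kampen) diagram theory applied to $T$, there exists a triangulated disk $D_F$ together with a simplicial map $D_F \to T$ whose boundary walk is exactly $f\circ \partial F$; concretely, one can lift $\partial F$ to a closed walk in the universal cover $\widetilde T$ bounding a simply connected subcomplex, and take $D_F$ to be this subcomplex. Gluing the diagrams $D_F$ to (a subdivision of) $G$ along boundaries yields a triangulation $Z$ of $S$ and a simplicial extension $\varphi:Z \to T$ with $\varphi|_G=f$. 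A homotopy from $\varphi$ to $\mathrm{id}_S$ is then assembled face by face: on each disk $F$, the maps $\varphi|_F$ and $F \hookrightarrow S$ agree on $\partial F$ up to the given homotopy from $f$ to the inclusion, and any two disk fillings of a loop are homotopic.

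It remains to verify strong harmoniousness of $\varphi$ at every vertex of $Z$. At original vertices of $G$, it is inherited from $f$: every walk in $G$ witnessing strong harmoniousness of $f$ is still a walk in $Z$ and has the same $\varphi$-image. At a vertex $v$ strictly interior to some $D_F$, the star of $v$ in $Z$ maps isomorphically onto the full star of $\varphi(v)$ in $T$, so for any line $L$ through $\varphi(v)$ one escapes $L$ via a single edge to a neighbor lying strictly on one side of $L$. At a subdivision vertex $v$ on an edge $e$ of $G$ shared by two faces $F,F'$, the star of $v$ in $Z$ is the union of two wedges, one contributed by $D_F$ and one by $D_{F'}$; I would choose the diagrams so that these wedges exactly cover the two complementary arcs of the star of $\varphi(v)$ in $T$ determined by the two edges of $\varphi(e)$, giving a full star.

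The principal obstacle is the consistency condition at subdivision vertices. I would enforce it by first refining the subdivision of $G$ so that each image walk $f\circ e$ is reduced in $T^1$, guaranteeing that the two $G$-edges at any subdivision vertex map to distinct edges of $T$ and cut the star of $\varphi(v)$ into two well-defined complementary arcs. The van Kampen disk diagram $D_F$ on each side can then be constructed to sweep precisely the arc on its side, and the two wedges glue into a full star around $\varphi(v)$. Truly degenerate cases --- for instance edges of $G$ that $f$ collapses to a single vertex of $T$, producing clusters, or disk diagrams that necessarily contain collapsed triangles --- require localized surgeries in $Z$ (contracting the cluster, or surrounding it with additional triangles mapping to the appropriate region of $T$) to preserve both the simplicial structure of $\varphi$ and the strong harmoniousness property; this is where the bulk of the technical work is expected to lie.
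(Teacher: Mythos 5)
Your approach via van Kampen disk diagrams is genuinely different from the paper's. The paper does not build disk fillings of faces; instead it incrementally adds single edges to the embedded graph, one at a time, mapping each new edge to the unique \emph{reduced} walk in $T^1$ homotopic to it, and proving (Lemma~\ref{lem:tight extension}) that adding an edge mapped to a reduced walk preserves strong harmoniousness. It reduces to the situation where every face of the graph (after suppressing degree-two vertices) is a triangle with at most one subdivided side, and then invokes a combinatorial fact about reduced walks in reducing triangulations (Lemma~\ref{lem:flathyperbolic}) to triangulate that last triangle by edges mapped simplicially to $T$, with no disk-diagram machinery at all.

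There is a genuine gap in your argument, located exactly where you expect the claims to be easy. You assert that (a) the interior vertices of each diagram $D_F$ have star mapping isomorphically onto the full star of $\varphi(v)$ in $T$, and (b) at a subdivision vertex the two wedges from adjacent diagrams can be chosen to ``exactly cover the two complementary arcs'' of the star. Neither of these is delivered by van Kampen theory. The boundary walk $f\circ\partial F$ is null-homotopic in $T$, so it lifts to a closed walk in $\widetilde T$, but that lift need not be simple --- it may backtrack, wind, or even be constant on a segment where $f$ collapses edges of $G$. When the lift is not simple, ``the simply connected subcomplex it bounds'' is not well defined, and generic disk diagrams will have interior vertices whose link maps to a proper subwalk of the vertex link in $T$, which can be a bad turn --- destroying strong harmoniousness. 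Similarly, making the two wedges at a subdivision vertex complementary is a global constraint on the two fillings that you would need to construct, not just assert. You already flag that degenerate cases require ``localized surgeries'' and that this is ``where the bulk of the technical work is expected to lie,'' but the non-degenerate case is also not established: you have not shown the disk fillings can be chosen strongly harmonious. The paper sidesteps all of this by never producing large disk fillings; it relies instead on the uniqueness of reduced walks (Lemma~\ref{redux walk}) to pick the single right image walk for each new edge, and on Lemma~\ref{lem:flathyperbolic} to guarantee that the final small faces admit simplicial fillings. To repair your approach you would need, at minimum, a lemma saying that every null-homotopic walk in a plane reducing triangulation bounds a disk diagram each of whose interior vertices maps with full star --- and that lemma is false without further hypotheses.
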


The proof of \autoref{prop:extension} relies on two lemmas:

\begin{lemma}\label{lem:tight extension}
Let $S$ be a closed surface distinct from the sphere. Let $T$ be a reducing triangulation of $S$. Let $G$ be a graph obtained from another graph $G'$ by inserting a path graph $P$ between (possibly equal) vertices of $G'$. Let $f : G \to T^1$ be simplicial. If $f|_{G'}$ is strongly harmonious, and if $f$ maps $P$ to a reduced walk in $T$, then $f$ is strongly harmonious.
\end{lemma}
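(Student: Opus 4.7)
The proof plan is to verify strong harmony at every vertex of $G$. Every vertex of $G'$ inherits the property from the hypothesis on $f|_{G'}$, since any escape walk in $G'$ is automatically a walk in $G$. The substantive case is an interior vertex $v = p_i$ of the inserted path $P = p_0 \cdots p_n$, whose only neighbors in $G$ are its two path-neighbors. Fix a line $L$ with central vertex $f(v)$ (say a left line; the right-line case is symmetric) and lift to a line $\widetilde L$ in $\widetilde T$ with center $\widetilde v$. Let $\widetilde R_-, \widetilde R_+$ be the lifts at $\widetilde v$ of $f \circ W_-, f \circ W_+$, where $W_-, W_+$ are the subwalks of $P$ from $v$ to $p_0$ and $p_n$; let $e_-, e_+$ denote their first edges at $\widetilde v$. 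The crucial observation is that $\widetilde R_-^{-1} \cdot \widetilde R_+$ is the lift of $f \circ P$ through $\widetilde v$, hence reduced, so the turn $(e_-, e_+)$ at $\widetilde v$ is good, while the line makes the $3_r$-turn $(\ell_-, \ell_+)$ there.

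The construction of an escape walk then proceeds by case analysis on how $e_-, e_+$ sit relative to $\ell_-, \ell_+$ in the cyclic order around $\widetilde v$. If either $e_+$ or $e_-$ enters the right half-plane of $\widetilde L$, the single-edge prefix of $W_+$ or $W_-$ is already an escape walk. If one of them equals $\ell_+$, the corresponding $\widetilde R_\pm$ starts along $\widetilde L_+$; since reducedness forbids a $0$-turn along the way, this walk either reaches its endpoint $\widetilde q_n$ or $\widetilde q_0$ while still tracing $\widetilde L_+$, or departs from $\widetilde L_+$ at some vertex into the right half-plane (immediate escape) or into the left half-plane (the hard case below). When the walk reaches the endpoint still on $\widetilde L_+$, we recenter $L$ at $p_n$ or $p_0$ (the same bi-infinite walk, same right half-plane, different central vertex) and invoke strong harmony of $f|_{G'}$ at that endpoint to obtain an escape walk in $G'$; concatenating with $W_\pm$ yields a walk whose lift stays on $\widetilde L_+$ until its final jump into the right half-plane, as required.

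The main obstacle is the remaining configuration, in which both $e_-$ and $e_+$ lie in the left half-plane of $\widetilde L$ or along $\ell_-$, so the naive walks from $v$ step off the non-negative part of $\widetilde L$ immediately and cannot be completed to an escape by a straightforward concatenation. The plan here is to combine the goodness of $(e_-, e_+)$ with the $3_r$ structure of $(\ell_-, \ell_+)$ and the alternating red/blue coloring of faces around $\widetilde v$, in order to restrict which angular configurations of $e_\pm$ are simultaneously realizable, and then to invoke the uniqueness of reduced walks in the plane (Lemma~\ref{redux walk}) together with strong harmony at the endpoints $p_0, p_n$ along appropriately rechosen lines, so as to either reduce back to one of the earlier cases or assemble an escape walk via a detour through $G'$. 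Handling this case exhaustively, without leaving undetected angular configurations, is where I expect the proof to become genuinely technical.
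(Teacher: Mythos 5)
Your setup coincides with the paper's: reduce to an interior vertex $v$ of $P$, lift to $\widetilde T$, and use that $(e_-, e_+)$ is a good turn (as $f \circ P$ is reduced) while $\widetilde L$ makes a $3_r$-turn at $\widetilde v$; then trace along $\widetilde L$, recenter at an endpoint of $P$, and invoke strong harmony of $f|_{G'}$.

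What you are missing is that both of your ``hard cases'' are vacuous, and a short turn count---using exactly the ingredients you list (goodness of $(e_-, e_+)$, the $3_r$ structure, the alternating face colors)---establishes this. Around a vertex of $\widetilde L$, label the edges clockwise as $E_0, E_1, E_2, E_3, \ldots, E_{d-1}$, with $E_0$ the line's incoming edge and $E_3$ its outgoing edge; since the line makes a $3_r$-turn, the directed edge entering along $E_0$ sees red on its left. Then: (1) the turns from the incoming line edge into $E_1$ and $E_2$ are a $1$-turn and a $2_r$-turn, both bad, so a reduced walk travelling forward along $\widetilde L$ can never leave into the two-edge arc $\{E_1, E_2\}$, which is the non-escape side; it can only continue on $\widetilde L$ or step into the escape side $\{E_4, \ldots, E_{d-1}\}$. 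Your sub-case ``departs into the left half-plane'' therefore cannot occur. (2) Every turn $(e_-, e_+)$ with both edges in $\{E_0, E_1, E_2\}$ is a $0$-, $\pm 1$-, $2_r$-, or $-2_r$-turn, hence bad; so if neither $e_-$ nor $e_+$ lies on the escape side, one of them must be $E_3 = \ell_+$, and the corresponding $\widetilde P_i$ traces the non-negative part of $\widetilde L$ and, by (1), either stays on it or escapes. Your ``remaining configuration'' therefore never arises either. This two-line count is precisely the unstated justification behind the paper's one-sentence claim that one of $\widetilde P_0, \widetilde P_1$ either escapes $\widetilde L$ or stays in its non-negative part, and it closes your plan with no technical residue.
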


\begin{proof}
If $P$ is a single edge, then it does not affect strong harmony. So assume that $P$ has an interior vertex $v$, and let $L$ be a left line (the right line case being similar) in $T$ whose central vertex is $f(v)$. We will exhibit a walk based at $v$ in $G$ whose image path escapes $L$. This will prove the lemma. Consider the universal covering triangulation $\widetilde T$ of $T$, and a lift $\widetilde L$ of $L$ in $\widetilde T^1$. Consider also the reduced walk $X := f|_P$. There are a lift $\widetilde P$ of $P$, a lift $\widetilde v$ of $v$ in $\widetilde P$, and a lift $\widetilde X : \widetilde P \to \widetilde T^1$ of $X$, such that $\widetilde X(\widetilde v)$ is the central vertex of $\widetilde L$. Let $\widetilde w_0$ and $\widetilde w_1$ be the two end-vertices of $\widetilde P$, and let $\widetilde P_0$ and $\widetilde P_1$ be the two sub-walks of $\widetilde P$ that go from $\widetilde v$ to respectively $\widetilde w_0$ and $\widetilde w_1$. Since $\widetilde X$ is a reduced walk, and since $\widetilde L$ makes only $3_r$-turns, one of $\widetilde P_0$ and $\widetilde P_1$, say $\widetilde P_0$ without loss of generality, is such that $\widetilde X \circ \widetilde P_0$ either escapes $\widetilde L$, or stays in the non-negative part of $\widetilde L$. In the first case, considering the vertex $w_0$ of $G$ lifted by $\widetilde w_0$, the portion $P_0$ of $P$ from $v$ to $w_0$ is such that $f \circ P_0$ escapes $L$. In the latter case, replace the central vertex of $\widetilde L$ by $\widetilde X(\widetilde w_0)$, and project the resulting line onto the surface, obtaining a line $L'$ which is a shift of $L$. Since $w_0$ belongs to $G'$, and since $f|_{G'}$ is strongly harmonious, there is a walk $Q$ based at $w_0$ in $G$ such that $f \circ Q$ escapes $L'$. Then the concatenation $P_0'$ of $P_0$ and $Q$ is such that $f \circ P_0'$ escapes $L$.
\end{proof}

\begin{lemma}\label{lem:flathyperbolic}
Let $w,v_0,v_1$ be pairwise distinct vertices of a plane reducing triangulation $T$. If $v_0$ and $v_1$ are adjacent in $T$, then there is $i \in \{0,1\}$ such that, along the reduced walk from $v_i$ to $w$, the vertex consecutive to $v_i$ is adjacent to or equal to $v_{1-i}$.
\end{lemma}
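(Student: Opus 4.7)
The plan is to compare the reduced walk $R_i$ from $v_i$ to $w$ (which exists and is unique by Lemma~\ref{redux walk}) with the walk obtained by prepending the edge $e = v_0v_1$ to $R_{1-i}$. Concretely, set $W_0 := (v_0,v_1) \cdot R_1$ and $W_1 := (v_1,v_0) \cdot R_0$; since $w \ne v_0, v_1$, both $R_0$ and $R_1$ have length at least one. Because $R_{1-i}$ is itself reduced, $W_i$ is reduced at every turn except possibly the single turn at $v_{1-i}$ where the prepended edge meets $R_{1-i}$. The first observation is that if the turn of $W_0$ at $v_1$ is good, then $W_0$ is fully reduced, and uniqueness of reduced walks forces $W_0 = R_0$; in particular the vertex of $R_0$ consecutive to $v_0$ is $v_1$, giving the conclusion with $i = 0$. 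Symmetrically, a good turn of $W_1$ at $v_0$ yields the conclusion with $i = 1$. So I may assume both these turns are bad.

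Next I would dispose of the mildly bad cases using the local geometry around $e$. Recall that $e$ is shared by exactly two triangles $v_0v_1a$ and $v_0v_1b$, so $a$ and $b$ are the two common neighbors of $v_0$ and $v_1$, and the edges $(v_1,a), (v_1,v_0), (v_1,b)$ are consecutive in the cyclic order around $v_1$. Consequently a $0$-turn of $W_0$ at $v_1$ means the vertex of $R_1$ consecutive to $v_1$ equals $v_0$, and a $\pm 1$-turn means it equals $a$ or $b$, which is adjacent to $v_0$; in either case the lemma holds with $i = 1$. Symmetric statements hold for $W_1$ at $v_0$ with $i = 0$.

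The only remaining case, and the real content of the lemma, is when the turn of $W_0$ at $v_1$ and the turn of $W_1$ at $v_0$ are both of type $\pm 2_r$. Here the reducing hypothesis enters decisively: by definition, the subscript $r$ at $v_1$ means the edge $(v_0,v_1)$, directed $v_0 \to v_1$, sees a red face on its left; and the subscript $r$ at $v_0$ means the edge $(v_1,v_0)$, directed $v_1 \to v_0$, sees a red face on its left. But these two ``lefts'' are the two distinct sides of $e$, namely the triangles $v_0v_1a$ and $v_0v_1b$. Since $T$ is reducing, the dual is bipartite, so these two triangles have different colors and cannot both be red, a contradiction. This final step is the main obstacle to the argument, and it is precisely where the bipartiteness of the dual of $T$ (rather than the degree-six condition) is used.
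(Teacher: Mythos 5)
Your proof is correct and uses essentially the same ingredients as the paper's: uniqueness of reduced walks, the local structure of $0$- and $\pm1$-turns at a triangulation edge, and bipartiteness of the dual to rule out the $\pm 2_r$-turn case. The paper packages the color argument as an upfront ``without loss of generality'' choice of direction for $e$ (picking the direction of $e$ that sees blue on its left and working only with $W_0 = e \cdot R_1$), whereas you treat both $W_0$ and $W_1$ symmetrically and invoke bipartiteness only at the end; the two organizations are interchangeable.
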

\begin{proof}
  Let $e$ be the edge of $T$ between $v_0$ and $v_1$; direct~$e$ so that it sees blue at its left, and assume without loss of generality that it is directed from~$v_0$ to~$v_1$. Let $W_1$ be the reduced walk from $v_1$ to $w$. Assume that the vertex consecutive to $v_1$ in $W_1$ is neither equal nor adjacent to $v_0$, for otherwise there is nothing to do. Let $W_0$ be the concatenation of $e$ with $W_1$.  By assumption, $W_0$ does not make a $0$-turn, $1$-turn, or $-1$-turn at~$v_1$, and by our choice of direction of~$e$, it also does not make a $2_r$-turn, so it is reduced. And $W_0$ starts with edge~$e$, as desired.
\end{proof}

\begin{figure}
    \centering
    \includegraphics[scale=0.3]{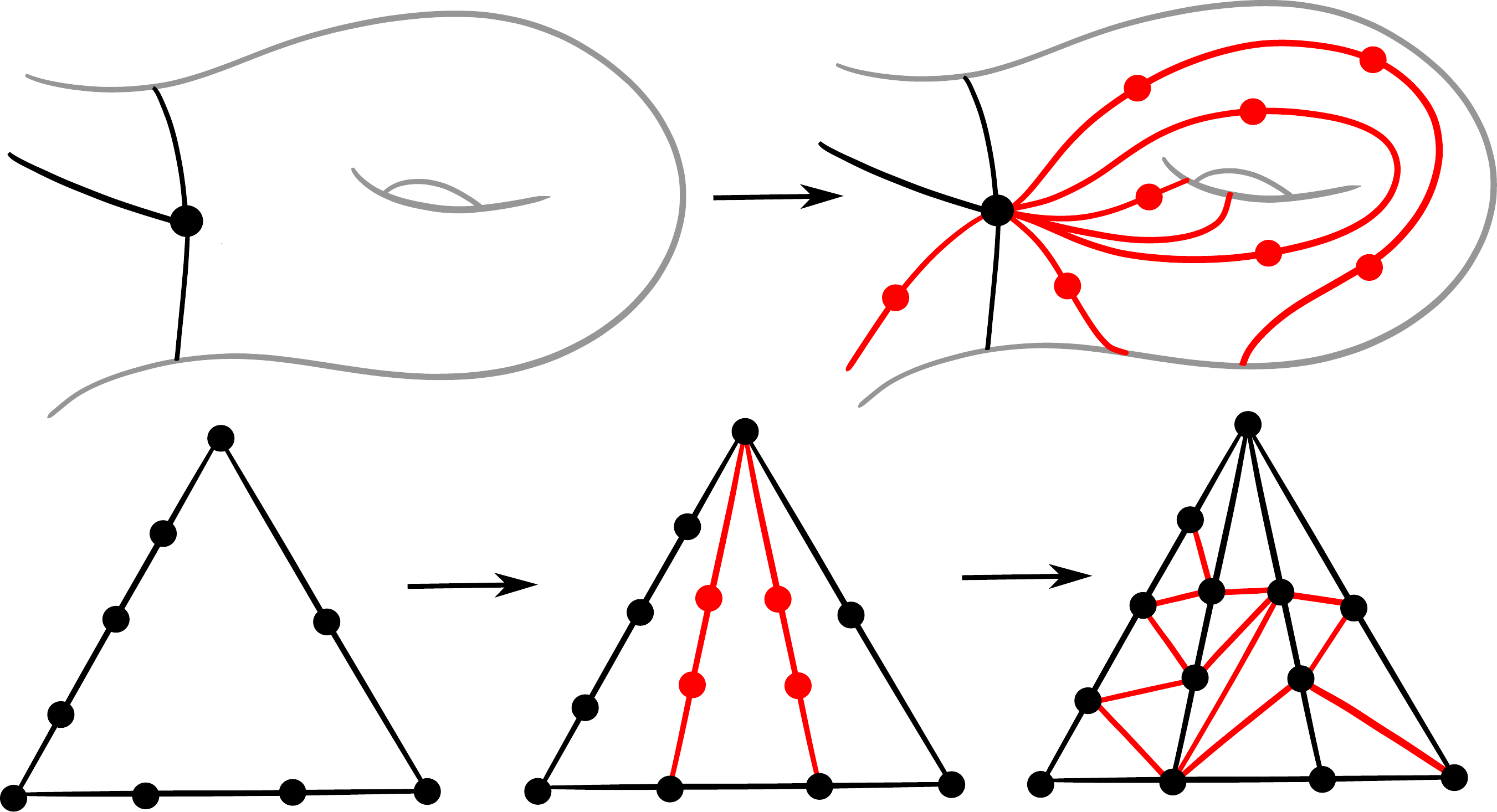}
    \caption{In the proof of \autoref{prop:extension}, the embedded graph $G$ is extended to a triangulation.}
    \label{fig:refinement}
\end{figure}

\begin{proof}[Proof of Proposition~\ref{prop:extension}]
  In this proof, we will use the following standard topological fact: Every graph embedded on a surface can be extended to a triangulation by adding edges (not vertices). This can be done by repeatedly inserting an edge~$e$ inside a face~$F$, where $e$ “differs” from every edge $e'$ on the boundary of $F$ in the sense that the concatenation of $e$ and $e'$ does not bound a disk in $F$.

  At any time we denote by $G^\lozenge$ the graph derived from $G$ by removing its degree two vertices.  The proof is in three steps; see \autoref{fig:refinement}.  In the first step, as long as some face of $G^\lozenge$ is not a triangle, we insert an edge $e$ in this face, as described in the preceding paragraph. Doing so, we consider the homotopy from the inclusion map $G \hookrightarrow S$ to the map $f$, and apply this homotopy to the two end-vertices of $e$, thus extending $e$ to a path $e'$ between vertices of $T$. Let $p$ be the unique reduced path homotopic to $e'$ (\autoref{redux walk}). If the length $n$ of $p$ is greater than one, then we insert $n-1$ vertices in $G$ along $e$. We also extend $f$ to $e$ by mapping $e$ to $p$. In this way, $f$ remains homotopic to the inclusion map $G \hookrightarrow S$, and $f$ remains strongly harmonious by \autoref{lem:tight extension}.  Now, every face of $G^\lozenge$ is a triangle.

  In the second step (see \autoref{fig:refinement}, bottom center), for every (triangular) face~$m$ of $G^\lozenge$, we consider an edge $e$ incident to $m$ in $G^\lozenge$. If $e$ is subdivided in $G$, then we insert a path in $m$ from each interior vertex of $e$ to the vertex of $m$ opposite to $e$.  As in the previous paragraph, we map each of the new paths to a reduced path in~$T$, so that $f$ remains homotopic to the inclusion map $G \hookrightarrow S$, and so that $f$ remains strongly harmonious. Now, every (triangular) face~$m$ of $G^\lozenge$ is incident to an edge that is not subdivided in $G$.
  
  In the last step, \autoref{lem:flathyperbolic} ensures that we can triangulate $m$ by inserting new edges in $m$ (see \autoref{fig:refinement}, bottom right), and by mapping in $f$ each such new edge to \emph{a vertex or an edge} of $T$, keeping $f$ strongly harmonious and homotopic to the inclusion map $G \hookrightarrow S$. Now $G$ is the 1-skeleton of a triangulation $Z$, and $f$ trivially extends to a simplicial map $\varphi : Z \to T$.
\end{proof}

%%%%%%%%%%%%%%%%%%%%%%%%%%%%%%%%%%%%%
\subsection{A property of the maps homotopic to the identity and strongly harmonious}

Here is another key step towards the proof of \autoref{tutte theorem}: we prove that our simplicial drawings of triangulations orient the (non-degenerate) triangles coherently.

\begin{restatable}{proposition}{proptighttriang}\label{prop:tighttriang}
Let $S$ be a closed surface distinct from the sphere. Let $T$ be a reducing triangulation of $S$. Let $Z$ be a triangulation of $S$, and let $\varphi : Z \to T$ be simplicial. If $\varphi$ is strongly harmonious, and if $\varphi$ is homotopic to the identity map of $S$, then there cannot be two faces $z_+$ and $z_-$ of $Z$ for which $\varphi|_{z_+}$ is positive and $\varphi|_{z_-}$ is negative.
\end{restatable}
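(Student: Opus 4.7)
The plan is to argue by contradiction: assume $\varphi$ has both a positive face $z_+$ and a negative face $z_-$, and produce a vertex $v$ of $Z$ and a line $L$ in $T$ with central vertex $\varphi(v)$ such that no walk $W$ based at $v$ in $Z^1$ has image $\varphi \circ W$ escaping $L$, contradicting the strong harmony of $\varphi|_{Z^1}^{T^1}$. The first step is to lift everything to the universal cover: since $\varphi$ is homotopic to the identity of $S$ and $S$ is not the sphere, $\varphi$ has a canonical deck-equivariant lift $\widetilde\varphi : \widetilde Z \to \widetilde T$ between the universal covers, both of which are the plane. This lift is at bounded distance from the identity, hence proper, and by equivariance still has positive and negative faces. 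By connectedness of $\widetilde Z$, some vertex $\widetilde v$ lies on the boundary between the open unions of positive and negative faces of $\widetilde Z$, meaning that its star contains faces of both signs (possibly separated by degenerate faces).

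The next step is to find a line that bounds the star. Set $\widetilde y_0 := \widetilde\varphi(\widetilde v)$ and enumerate the link of $\widetilde v$ in $\widetilde Z$ cyclically. Each face of the star sends consecutive edges of the link to consecutive edges around $\widetilde y_0$ in $\widetilde T$: a positive face by a one-step counter-clockwise rotation around $\widetilde y_0$, a negative face by a one-step clockwise rotation (degenerate faces leaving the edge unchanged, or collapsing it to $\widetilde y_0$). A star with both signs therefore forces the cyclic sequence of edges around $\widetilde y_0$ to oscillate. By choosing $\widetilde v$ extremally, at a ``tip'' of the positive/negative boundary so that the angular sequence does not wrap around $\widetilde y_0$, the range of that sequence fits inside a closed proper sub-arc of the link of $\widetilde y_0$, bounded by two edges that extend to a left or right line $\widetilde L$ of $\widetilde T$ through $\widetilde y_0$. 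The image of the star of $\widetilde v$ then lies in one closed half delimited by $\widetilde L$.

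The last step, and the main obstacle, is to upgrade this local one-sidedness to every walk from $\widetilde v$ in $\widetilde Z^1$, not just its single edges: a long walk could travel along the non-negative part of $\widetilde L$ in $\widetilde T$ and then turn off to the strict opposite side. To close the argument I would iterate the fold analysis along the preimage under $\widetilde\varphi$ of the non-negative part of $\widetilde L$: the same extremality should imply that each vertex of $\widetilde Z$ reached by such walks is itself folded against $\widetilde L$, so the walk cannot leave $\widetilde L$ to the forbidden side while still on its non-negative part. Combining this iteration with the uniqueness and simpleness of reduced walks in $\widetilde T$ given by Lemma~\ref{redux walk}, and with the properness of $\widetilde\varphi$ precluding escape through infinity, yields the desired contradiction with strong harmony at the projection $v$ of $\widetilde v$.
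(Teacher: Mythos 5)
Your opening move — lifting $\varphi$ to a deck-equivariant $\widetilde\varphi:\widetilde Z\to\widetilde T$ of bounded displacement — is the same reduction the paper makes (the intermediate statement is Proposition~\ref{prop:tutteinplane}, and the paper calls the bounded-displacement hypothesis ``uniformly homotopic to the identity''). After that the two arguments diverge, and your version has genuine gaps.

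First, the vertex $\widetilde v$ you want need not exist. The positive faces and the negative faces of $\widetilde Z$ may be separated by a band of \emph{degenerate} faces (collapsed to a single vertex or a single edge of $\widetilde T$) that is many triangles thick, so that no single vertex's star contains both a positive and a negative face, even ``separated by degenerate faces.'' Strong harmony does constrain such bands, but establishing the needed control is precisely the content of Lemma~\ref{lem:disconnect} in the paper, and it yields something weaker and differently shaped than your claim: a dual path $p$ in $Z$ from a positive face to a negative one, disjoint from $\varphi^{-1}(T^0)$, along which all crossed edges map to a single directed edge $s$ of $T$. That path can be arbitrarily long; it is not the star of any one vertex. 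Second, ``choosing $\widetilde v$ extremally, at a tip of the positive/negative boundary'' has no meaning in the non-compact plane $\widetilde Z$; there is no maximal or minimal vertex, and the angular non-wrapping you want does not follow from any stated hypothesis.

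The larger problem is the one you flag yourself: the ``iterate the fold analysis'' step is the entire theorem, and you do not carry it out. Going from ``the star of $\widetilde v$ lies on one closed side of a line $\widetilde L$'' to ``no walk from $\widetilde v$ escapes $\widetilde L$'' is not an induction on folds in any obvious sense — a walk can run a long time along $\widetilde L$ through vertices whose stars are individually two-sided before turning off. The paper's proof of Proposition~\ref{prop:tutteinplane} replaces this with a global topological argument: using the dual path $p$ and the half-plane $H$ bounded by the line through $s$, it invokes Corollaries~\ref{corplane1} and~\ref{corplane2} (repeated applications of Lemmas~\ref{lempush} and~\ref{lemescape}) to build, from the two faces $z_+,z_-$ and the two sides of $s$, two simple arcs $Q$ and $Q'$ in $Z^1$ that cross exactly once and whose images are driven far to opposite sides of $\partial H$; the uniform bound $\kappa$ together with Lemma~\ref{lem:bad left} then lets one close $Q$ and $Q'$ into closed curves that still cross exactly once, contradicting the mod-2 intersection parity in the plane. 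If you want to keep your local-to-global strategy you would need, at minimum, a correct substitute for Lemma~\ref{lem:disconnect} and a precise inductive invariant on the preimage of $\widetilde L$; as written, the proposal does not establish the proposition.
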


In this section only, it is convenient to consider the plane $P$, and an (infinite) reducing triangulation $T$ of $P$; the notion of strong harmony immediately extends to that setting. And again, if $Z$ is a triangulation of $P$, a simplicial map $\varphi : Z \to T$ is strongly harmonious if $\varphi|_{Z^1}^{T^1}$ is strongly harmonious. 

In the 1-skeleton $T^1$ of $T$, the minimum number of edges of a path between two given vertices defines a distance on the vertex set of $T$. A map $\varphi : P \to P$ is \emphdef{uniformly homotopic} to the identity map $1_{P \to P}$ if it is homotopic to the identity and there is $\kappa > 0$ such that every point $x \in P$ mapped to a vertex $v$ of $T$ by $\varphi$ lies in a vertex, edge, or face of $T$ whose incident vertices are at distance less than $\kappa$ from $v$. We shall prove the following:

\begin{restatable}{proposition}{proptutteinplane}\label{prop:tutteinplane}
Let $P$ be the plane. Let $T$ be a reducing triangulation of $P$. Let $Z$ be a triangulation of $P$, and let $\varphi : Z \to T$ be simplicial. If $\varphi$ is strongly harmonious, and if $\varphi$ is uniformly homotopic to the identity map of $P$, then there cannot be two faces $z_+$ and $z_-$ of $Z$ for which $\varphi|_{z_+}$ is positive and $\varphi|_{z_-}$ is negative.
\end{restatable}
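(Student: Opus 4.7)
The plan is to argue by contradiction: assuming that both $z_{+}$ and $z_{-}$ exist, I will construct a vertex $v \in Z$ at which strong harmony fails. The argument adapts the classical extremal-vertex proof of Tutte's theorem, replacing Euclidean extremality by the combinatorial extremality furnished by Lemma~\ref{redux walk 2}.

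First, I would confine the fold. Since $\varphi$ is homotopic to the identity it has degree one, so for a generic $y \in T$ the signed count of preimages satisfies $n^{+}(y) - n^{-}(y) = 1$; the existence of $z_{-}$ therefore forces the set $R := \overline{\{y \in T : n^{-}(y) \geq 1\}}$ to be nonempty. The uniform-homotopy parameter $\kappa$ bounds the distance from any vertex of $Z$ to its $\varphi$-image in $T$, so only finitely many vertices of $Z$ can be incident to a face on which $\varphi$ is not positive, and $R$ is bounded. Outside $R$, the map $\varphi$ restricts to a one-sheeted, orientation-preserving simplicial cover of $T \setminus R$.

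Second, I would extract a good boundary turn. An outer boundary component $\gamma$ of $R$ is a simple closed walk in $T^{1}$; oriented so that $R$ lies on its left, Lemma~\ref{redux walk 2} guarantees that at least three turns of $\gamma$ are either $1$-turns or $2_r$-turns. Fix such a vertex $w$ of $\gamma$: the interior side of $\gamma$ at $w$ occupies only one or two triangles of $T$, while, because every vertex of a reducing triangulation has degree at least six, the exterior occupies at least four. This leaves enough room to choose a left or right line $L$ through $w$ whose three-sector ``red'' side strictly contains the interior wedge of $\gamma$ at $w$, so that the opposite side of $L$ sits entirely in the exterior of $\gamma$.

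Third, I would pick a preimage $v \in \varphi^{-1}(w)$ whose star in $Z$ witnesses the fold on the interior side of $\gamma$; a local sheet count over the triangles of $T$ incident to $w$ shows that such a $v$ exists, because the narrow interior of $\gamma$ at $w$ must host an extra pair of positive/negative sheets. I then claim that no walk in $Z^{1}$ starting at $v$ has an image in $T^{1}$ that escapes $L$ to the prescribed side: any walk whose image immediately leaves the preimage of $R$ enters the wrong side of $L$, while walks whose images remain within $\varphi^{-1}(R)$ are controlled by an innermost-disk descent on the bounded region $R$, using that each boundary component of a sub-disk of $R$ still has at least three good turns. This contradicts strong harmony at $v$ and completes the argument.

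The principal obstacle is the last step. The escape condition of Section~\ref{sec:harmonious} permits a walk to travel arbitrarily far along the non-negative part of $L$ before deviating, so a purely local analysis at $w$ does not suffice; ruling out escape requires a global argument combining the sheet structure of $\varphi$ over $T \setminus R$, the reducing structure of $T$ that constrains lines against boundary turns, and an induction on the combinatorial area of $R$ driven by the three guaranteed good turns per boundary component. Setting up this descent so that each re-entry of the image walk into $R$ is absorbed by a strictly smaller sub-disk is expected to be the most delicate point.
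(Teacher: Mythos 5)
Your approach is genuinely different from the paper's. You try to locate a vertex $v$ of $Z$ where strong harmony fails directly, by examining the ``fold region'' $R$ and using Lemma~\ref{redux walk 2} to find a pinched boundary vertex $w$; this is a discrete port of the classical extremal-vertex argument for Tutte's theorem. The paper instead uses harmony \emph{constructively}: from a dual path $p$ joining $z_+$ to $z_-$ (obtained after first proving, via Lemma~\ref{lem:disconnect}, that $\varphi^{-1}(T^0)$ does not disconnect the plane), it builds a half-plane $H$ of $T$ on one side of the fold edge, applies Corollaries~\ref{corplane1} and~\ref{corplane2} to grow walks in $Z$ escaping far into $H$ and far out of $\mathring s\cup H$, then assembles two closed curves in the plane that cross exactly once, contradicting the parity of crossings of planar closed curves. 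The uniform homotopy bound $\kappa$ and Lemma~\ref{lem:bad left} are used there to close the open arcs into closed curves without introducing new crossings.

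Unfortunately your approach has a genuine gap, precisely where you flag the ``principal obstacle''. The escape condition in Section~\ref{sec:harmonious} allows a walk in $Z$ to travel arbitrarily far along the non-negative part of $L$ before leaving to the prescribed side. The line $L$ exits your region $R$ after finitely many edges, and once the image of the walk leaves $R$ it enters the region where $\varphi$ is a one-sheeted orientation-preserving cover; there the walk in $Z$ simply shadows the walk in $T$, so nothing prevents it from following $L$ into that region and escaping. Choosing $L$ so that the escape side avoids the narrow interior wedge of $\gamma$ at $w$ is a local condition at $w$ and does not constrain what happens after the walk travels along $L$ past $\partial R$. Your proposed ``innermost-disk descent on $R$'' does not address this: the problematic walks are those whose images \emph{leave} $R$, not those that remain inside it, so a descent on sub-disks of $R$ cannot absorb them. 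In fact this is exactly the difficulty that the paper's parity argument is designed to sidestep, by never trying to pin down a single non-harmonious vertex. There are also smaller issues you would need to fix: the degree argument ($n^+-n^-=1$) is delicate for self-maps of the plane (the paper only invokes degree for closed surfaces, in Proposition~\ref{prop:selfmaps}); the boundedness of $R$ does not follow from uniform homotopy without further argument, since the $\kappa$-bound controls preimages of individual vertices but not the global extent of the negatively-oriented faces; and the outer boundary of $R$ need not be a simple closed walk without additional care (it can pinch at vertices). Finally, the claim that some $v\in\varphi^{-1}(w)$ has its entire star mapped into the interior wedge is not justified by a ``local sheet count'': $v$'s neighbors in $Z$ map to arbitrary neighbors of $w$ in $T$, and nothing prevents some of $v$'s edges from immediately mapping to the exterior of $\gamma$, at which point the escape is immediate.
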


\autoref{prop:tighttriang} (concerning surfaces) easily follows from \autoref{prop:tutteinplane} (concerning the plane) by lifting:

\begin{proof}[Proof of Proposition~\ref{prop:tighttriang}, assuming Proposition~\ref{prop:tutteinplane}]
The universal covering space $\widetilde S$ of $S$ is the plane. Also, $T$ lifts to a reducing triangulation $\widetilde T$ of $\widetilde S$, $Z$ lifts to a triangulation $\widetilde Z$ of $\widetilde S$, and $\varphi$ lifts to a simplicial map $\widetilde \varphi : \widetilde Z \to \widetilde T$. Since $\varphi$ is strongly harmonious, $\widetilde \varphi$ is strongly harmonious. We claim that $\widetilde \varphi$ is uniformly homotopic to the identity map of $\widetilde S$. This claim implies by \autoref{prop:tutteinplane} that there cannot be two faces $z_+$ and $z_-$ of $\widetilde Z$ for which $\widetilde \varphi|_{z_+}$ is positive and $\widetilde \varphi|_{z_-}$ is negative. Then the same holds for $Z$ and $\varphi$, proving the proposition.

To prove the claim, we use the assumption that there is a homotopy $H$ between $\varphi$ and the identity map of $S$, and we lift $H$ to a homotopy $\widetilde H$ between $\widetilde \varphi$ and the identity map of $\widetilde S$. For every lift $\widetilde x \in \widetilde S$ of a point $x \in S$, the path $\widetilde p$ followed by the image of $\widetilde x$ in $\widetilde H$, and the path $p$ followed by the image of $x$ in $H$ are such that the number of vertices, edges, and faces of $\widetilde T$ crossed by $\widetilde p$ is equal to the number of vertices, edges, and faces of $T$ crossed by $p$. The latter is uniformly bounded since $S$ is compact.
\end{proof}

\begin{figure}
    \centering
    \includegraphics[scale=1]{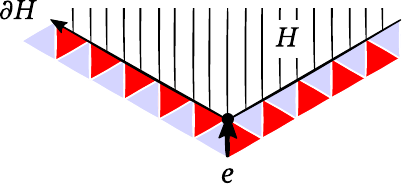}
    \caption{A cone.}
    \label{fig:cone}
\end{figure}

The rest of this section is devoted to the proof of \autoref{prop:tutteinplane}. We need some definitions and lemmas. See \autoref{fig:cone}. Let $T$ be a reducing triangulation of the plane. The part of $T$ on a given side of a bi-infinite reduced path $\partial H$, and not on $\partial H$, is a \emphdef{half-plane} $H$ of $T$, and $\partial H$ is the boundary of $H$. We emphasize that half-planes are open. A half-plane $H$ is \emphdef{nested} in another half-plane $H'$ if $H \subset H'$ and $\partial H \cap \partial H' = \emptyset$. Let $e$ be a directed edge $e$ of $T$ that sees blue on its left, and let $L$ be the bi-infinite reduced walk in $T$ that contains the head vertex $v$ of $e$, and makes only $3_b$- and $3_r$-turns, except at $v$ where it makes a $4_b$-turn whose middle edge is $e$. The \emphdef{cone} of $e$ is the half-plane $H$ on the right of $L$, and $v$ is the \emphdef{tip} of $H$.

\begin{lemma}\label{lempush}
Let $T$ be a reducing triangulation of the plane. If $H$ is a half-plane of $T$, and if $v \in H$ is a vertex of $T$, then $v$ is the tip of a cone $H'$ nested in $H$.
\end{lemma}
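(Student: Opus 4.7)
The plan is to construct $H'$ explicitly by choosing an appropriate directed edge $e$ (with head $v$, seeing blue on its left) so that the line $L$ bounding the cone of $e$ is disjoint from $\partial H$ and the cone (on the right of $L$) lies in $H$. Since $H$ is open and $v \in H$, we already know $v \notin \partial H$, and a small neighborhood of $v$ in $T$ lies in $H$.

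The central step is selecting the right $e$. I would first identify the ``directions at $v$ toward $\partial H$'': for each vertex $u \in \partial H$, Lemma~\ref{redux walk} gives a unique reduced walk from $v$ to $u$, whose first edge at $v$ is one such ``forbidden'' direction. Since $\partial H$ is a single bi-infinite reduced walk, one expects these forbidden directions to occupy only a limited range of the rotation at $v$. On the other hand, the cones at $v$ collectively cover the full $2\pi$ angular neighborhood of $v$. Thus I would argue that some choice of $e$ gives a cone whose two edges of $L$ incident to $v$ (namely $e$ itself and the edge obtained by the prescribed $4_b$-turn) are both \emph{not} forbidden directions, and for which the right side of $L$ at $v$ opens into $H$ rather than into $H^c$.

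Given such an $e$, I would prove $L \cap \partial H = \emptyset$ by contradiction: if $L$ met $\partial H$ at some vertex $u \ne v$, the subwalk of $L$ from $v$ to $u$ would be reduced, hence by Lemma~\ref{redux walk} it would equal the unique reduced walk from $v$ to $u$ in $T$. Its first edge at $v$ is one of the two edges of $L$ at $v$, namely $e$ or the $4_b$-turn's outgoing edge. But by the choice of $e$, neither is a direction toward $\partial H$, a contradiction.

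With $L \cap \partial H = \emptyset$, the cone is nested in $H$ by a topological argument: the plane minus $L$ has two open components; the connected set $\partial H$ lies entirely in one of them, so the other component — which is the cone, by our choice of orientation — is disjoint from $\partial H$. A connectedness argument, using that a neighborhood of $v$ in $H$ meets the cone, then shows the cone lies entirely in $H$. This gives $H' \subset H$ with $\partial H' = L$ disjoint from $\partial H$, as required.

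The main obstacle is rigorously establishing the existence of a good $e$ in the central step: verifying that among the finitely many cones at $v$, at least one has its two edges at $v$ avoiding the forbidden directions, and simultaneously opens to the correct side. This requires a careful analysis of how the first edges of reduced walks from $v$ to $\partial H$ are distributed in the rotation at $v$, exploiting the bipartite structure of $T$ and the fact that $\partial H$ is a single reduced bi-infinite walk rather than a more complicated set.
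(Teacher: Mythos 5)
Your approach is genuinely different from the paper's, and the obstacle you yourself flag at the end is a real gap, not a detail.

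Your plan hinges on finding a cone whose two boundary edges at $v$ both lie outside the set of \emph{all} ``forbidden directions'' (first edges, at $v$, of reduced walks from $v$ to some vertex of $\partial H$). That is a much stronger requirement than what is actually needed, and it is far from clear that such a choice always exists. The cone boundary $L$ makes a $4_b$-turn at $v$, so its two edges at $v$ are four rotations apart (with $e$, the ``middle edge'', sitting \emph{between} them — note that $e$ itself is not on $L$, which is a small slip in your step~5). Since interior vertices of a reducing triangulation may have degree as low as six, the forbidden arc may leave too few free rotations for such a $4_b$-turn to fit entirely in the complement. Moreover, avoiding all forbidden directions is strictly stronger than ``$L$ misses $\partial H$'': an edge $a$ at $v$ can be the first edge of the reduced walk to some $u'\in\partial H$ even though the reduced bi-infinite walk starting with $a$ (i.e., $L$) diverges from that walk and never reaches $\partial H$. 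So even if the paper's cone works, its boundary edges at $v$ may well be forbidden in your sense, and the existence you need is not established by the rest of the paper either.

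The paper sidesteps this entirely. It fixes a \emph{single} reduced path $P$ from $v$ to one vertex $u \in \partial H$, with $P$ internally inside $H$, and sets $g$ to be the edge of $P$ at $v$ (directed toward $v$). If $g$ sees blue on its left, the candidate cone is the cone of $g$. Disjointness $\partial H' \cap \partial H = \emptyset$ is then proved by contradiction: if $\partial H'$ met $\partial H$ at $w$, concatenating $P^{-1}$ with the $\partial H'$-subpath from $v$ to $w$ would be a reduced walk (the turn at $v$ is a $\pm 2_b$-turn, which is good), giving a second reduced walk from $u$ to $w$ distinct from the $\partial H$-subpath and contradicting Lemma~\ref{redux walk}. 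When $g$ sees red on its left, a short case analysis with the two one-step rotations of $g$ handles the remaining situation, again by exhibiting two distinct reduced paths between the same endpoints. In other words, the paper needs only \emph{one} witness path $P$ and the uniqueness of reduced walks, never a structural understanding of the whole forbidden arc. If you want to salvage your strategy, you would need to prove that the forbidden directions form a contiguous arc short enough to fit a $4_b$-gap in the complement on the $H$-side; that analysis is not in the paper and would likely end up reproducing the paper's one-path argument anyway.
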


\begin{cfigure}
    \centering
    \includegraphics[scale=1]{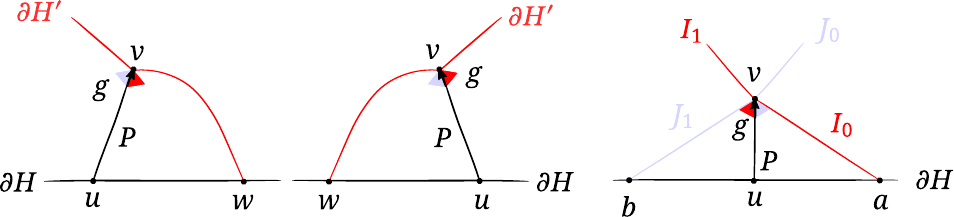}
    \caption{The impossible cases in the proof of \autoref{lempush}.}
    \label{fig:use cones}
\end{cfigure}

\begin{proof}
There is a reduced path $P$ between $v$ and a vertex $u$ of $\partial H$, such that $P$ is internally included in $H$. Let $g$ be the edge of $P$ incident to $v$, directed toward $v$. First assume that $g$ sees blue on its left. We claim that the cone $H'$ of $g$ is nested in $H$. First we prove $\partial H' \cap \partial H = \emptyset$ by contradiction. See \autoref{fig:use cones} Assuming that $\partial H$ and $\partial H'$ share a vertex $w$, let $Q'$ be the concatenation of $P$ and of the subpath of $\partial H'$ between $v$ and $w$, and let $Q$ be the subpath of $\partial H$ between $u$ and $w$. Then $Q$ and $Q'$ are distinct reduced paths with the same end-vertices in $T$, contradicting \autoref{redux walk}.  It then easily follows that $H' \subset H$, and we now provide the details. We have $\partial H' \subset H$ since $\partial H' \cap H \neq \emptyset$ and $\partial H' \cap \partial H = \emptyset$. The reduced path $P$ is internally disjoint from $\partial H$ and $\partial H'$ by \autoref{redux walk}, and since $\partial H$ and $\partial H'$ are reduced. So the relative interior of $P$ lies in the open region $R$ between $\partial H$ and $\partial H'$. Also, the relative interior of $g$ is disjoint from $H'$, and so is $R$.

Now assume that $g$ sees red on its left. We claim that there is a rotation of one turn of $g$ around its head vertex $v$ (either clockwise or counter-clockwise) after which the cone $H'$ of $g$ satisfies $\partial H' \cap \partial H = \emptyset$. As above, this claim implies that $H'$ is nested in $H$. Let $I$ be the boundary of $H'$ after a clockwise-rotation around~$v$, and let $J$ be the boundary after a counter-clockwise rotation around~$v$. Assume by contradiction that $I$ and $J$ both intersect $\partial H$. See \autoref{fig:use cones}. Cut $I$ into two semi-infinite walks at $v$, the right part (with respect to $g$) denoted by $I_0$, and the left part denoted by $I_1$. Cut $J$ into two parts at $v$, the right part $J_0$, the left part $J_1$. The concatenation of $P$ and $I_1$ is a reduced walk, so $I_1$ is disjoint from $\partial H$ by \autoref{redux walk}, and so $I_0$ intersects $\partial H$ in a vertex $a$. The concatenation of $P$ and $J_0$ is a reduced walk, so $J_0$ is disjoint from $\partial H$ by \autoref{redux walk}, and so $J_1$ intersects $\partial H$ in a vertex $b$. Let $Q'$ be the concatenation of the subpath of $I_0$ between $a$ and $v$, and of the subpath of $J_1$ between $v$ and $b$. Let $Q$ be the subpath of $\partial H$ between $a$ and $b$. Then $Q$ and $Q'$ are distinct reduced paths with the same end-vertices in $T$, contradicting \autoref{redux walk}.
\end{proof}

\begin{lemma}\label{lemescape}
Let $T$ be a reducing triangulation of the plane. Let $G$ be a graph, and let $f : G \to T^1$ be simplicial. Let $H$ be a cone of $T$, and let $v$ be a vertex of $G$. If $f$ is strongly harmonious, and if $f(v)$ is the tip of $H$, then there is a walk $W$ based at $v$ in $G$ that satisfies $f(W) \subset H \cup \partial H$ and $f(W) \not\subset \partial H$.
\end{lemma}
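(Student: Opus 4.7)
The plan is to apply strong harmony of $f$ at $v$ to a carefully chosen line through $u := f(v)$ in $T$, and extract the desired walk from the escape walk produced by strong harmony.

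First, I would analyze the structure of the cone's boundary near its tip. Let $B := \partial H$ be the bi-infinite reduced walk bounding $H$, oriented so that $H$ lies on its right, and split $B$ at $u$ into two semi-infinite rays $B^-$ and $B^+$ with initial edges $e_-$ and $e_+$ out of $u$. Using the alternating two-coloring of the faces of $T$ and the fact that $B$ makes a $4_b$-turn at $u$ and only $3_b$- or $3_r$-turns elsewhere, I would prove by induction on the distance from $u$ that every turn of $B$ along $B^+$ is in fact a $3_r$-turn: the incoming edge at each vertex of $B^+$ sees red on its left (as inherited from $e_+$ near $u$), which forces the $3$-turn to be of type $3_r$, and one checks that the outgoing edge also sees red on its left, propagating the property. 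A symmetric computation shows every turn of $B$ along $B^-$ is a $3_b$-turn. Consequently, $B^+$ coincides with the non-negative part $L^+$ of the unique left line $L$ through $u$ whose outgoing edge at $u$ is $e_+$; this $L$ is my chosen line.

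By strong harmony of $f$ at $v$ applied to $L$, there is a walk $W = (v_0 = v, v_1, v_2, \ldots)$ in $G$ whose image $f \circ W$ escapes $L$. Let $t^* \geq 1$ be the first step at which $f(v_{t^*})$ lies strictly right of $L$; the prefix $f(v_0), \ldots, f(v_{t^*-1})$ then lies in $L^+ \cup \{u\} = B^+ \cup \{u\} \subset \partial H \subset \overline{H}$. If $f(v_{t^*-1}) \in B^+ \setminus \{u\}$, then locally at this vertex the right side of $L$ equals the interior $H$, so $f(v_{t^*}) \in H$; truncating $W$ at time $t^*$ yields a walk with image in $B^+ \cup \{f(v_{t^*})\} \subset \overline{H}$ and not contained in $\partial H$, as desired.

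The remaining subcase is $f(v_{t^*-1}) = u$, where the escape step ends at a neighbor of $u$ strictly right of $L$ at $u$. A local analysis at $u$, exploiting the $3_r$-turn of $L$ and the $4_b$-turn of $B$, shows these neighbors all lie in $\overline{H}$: they are either in the interior cone wedge (in which case we conclude as above) or equal to the $e_-$-neighbor on $\partial H^-$. The latter is the main obstacle, since the walk so far then lies entirely on $\partial H$. I would handle it by invoking strong harmony a second time with the right line $L'$ through $u$ whose outgoing edge is $e_-$ (for which the analogous analysis gives ${L'}^+ = B^-$); if its escape also lands on $\partial H$, a further application of strong harmony, chosen so that its right side at $u$, combined with the $G$-edges at $v$ mapped to $e_+$ and $e_-$ provided by the first two walks, forces a $G$-edge at $v$ whose image enters the interior of the cone. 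This concluding combinatorial step relies on the detailed geometry of $T$ around $u$ afforded by the $4_b$-turn of the cone's tip.
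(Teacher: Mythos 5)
Your high-level plan is genuinely different from (and in fact cleaner than) the paper's: you apply strong harmony once at $v$ with the left line $L$ whose non-negative part is the arm of $\partial H$ that has $H$ on its right, and truncate the escape walk. The paper instead first locates a single $G$-edge out of the cluster of $v$ landing in $\overline H$, and, when that edge lies on $\partial H$, applies strong harmony a second time from the next vertex $w$ along the arm, with a line whose non-negative part is the arm minus its first edge. Your preliminary analysis of the two arms (one makes only $3_r$-turns, the other only $3_b$-turns, read in the direction of $\partial H$) is correct and matches what the paper uses implicitly, and your Case~A (escape from a vertex of $B^+$ other than $u$) is handled correctly by truncation.

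The gap is in your Case~B. You claim that when $f(v_{t^*-1})=u$ the escape vertex, while in $\overline H$, could be the $e_-$-neighbor $w_{-1}$ on $\partial H^-$, and you then invoke strong harmony two more times to patch this up. But that subcase cannot occur. At $u$ the line $L$ makes a $3_r$-turn whereas $\partial H$ makes a $4_b$-turn with the same outgoing edge $e_+$; hence $L$'s incoming edge at $u$ sits one clockwise rotation inside the cone from $e_-$. Consequently the strictly-right wedge of $L$ at $u$ consists of exactly the two neighbors of $u$ at clockwise rotations $2$ and $3$ from $e_-$, both of which are interior to $H$, while $w_{-1}$ (rotation $0$) lies strictly on the \emph{left} of $L$. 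So the escape at $u$ necessarily lands in $H$, and the single application of strong harmony already finishes the proof. Because you did not carry out this local computation you bolted on a second and a third application of strong harmony, and the final ``combinatorial step'' is only asserted (``relies on the detailed geometry of $T$ around $u$''), not proved — so the argument as written is incomplete. The fix is simply to check that the right wedge of $L$ at $u$ is a proper sub-wedge of the $4_b$-turn of $\partial H$, after which your Case~B reduces to Case~A.
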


\begin{proof}
Let $G_0 \subset G$ be the cluster of $f$ containing $v$. Since $f$ is strongly harmonious, there is a directed edge $e$ from $G_0$ to $G \setminus G_0$ such that $f(e) \in H \cup \partial H$. If $f(e) \in H$, then we are done. So we assume $f(e) \in \partial H$. The boundary $\partial H$ splits into two semi-infinite walks $I_0$ and $I_1$ based at $f(v)$, where $H$ lies on the left of $I_0$ and on the right of $I_1$. Assume $f(e) \in I_0$, the other case being similar. Let $w$ be the head vertex of $e$. Let $J$ be the suffix of $I_0$ obtained by removing its first edge ($f(w)$ is the first vertex of $J$). Let $L$ be the bi-infinite walk that makes only $3_r$-turns and contains $J$ as a subwalk. Since $f$ is strongly harmonious, there is a walk $W$ based at $w$ in $G$, such that $f \circ W$ may stay in $J$ for a while, and then leaves $J$ on its left.
\end{proof}

In order to ease the reading of the proof of \autoref{prop:tutteinplane}, we shall not use \autoref{lempush} and \autoref{lemescape} as they are. Instead, we shall use two corollaries:

\begin{corollary}\label{corplane1}
Let $T$ be a reducing triangulation of the plane. Let $G$ be a graph, and let $f : G \to T^1$ be simplicial. Let $H$ be a half-plane of $T$, and let $v$ be a vertex of $G$. Let $\kappa > 0$. If $f$ is strongly harmonious, and if $f(v) \in H$, then there is a walk $W$ from $v$ to a vertex $w$ in $G$, such that $f(W) \subset H$, and such that $f(w)$ is at distance greater than $\kappa$ from $\partial H$.
\end{corollary}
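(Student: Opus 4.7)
The plan is to iterate Lemmas~\ref{lempush} and~\ref{lemescape}, producing a sequence of vertices $v = v_0, v_1, v_2, \ldots$ in $G$ and a nested chain of half-planes $H = H_0 \supset H_1 \supset H_2 \supset \cdots$ such that $f(v_i) \in H_i$, $H_{i+1}$ is a cone nested in $H_i$, and each step $v_i \to v_{i+1}$ is realized by a walk in $G$ whose image stays inside~$H_i$. The conclusion will follow by showing that the graph-distance between $f(v_k)$ and $\partial H$ in $T^1$ grows linearly with~$k$, so that picking $k$ large enough yields a vertex $w := v_k$ with $f(w)$ at distance greater than~$\kappa$ from $\partial H$.

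More precisely, at step~$i$, suppose a vertex $v_i \in G$ with $f(v_i) \in H_i$ is available. By Lemma~\ref{lempush}, $f(v_i)$ is the tip of a cone $H_{i+1}$ nested in~$H_i$. Applying Lemma~\ref{lemescape} at $v_i$, we obtain a walk $W_i$ in~$G$ based at~$v_i$ with $f(W_i) \subset H_{i+1} \cup \partial H_{i+1}$ and $f(W_i) \not\subset \partial H_{i+1}$; hence at least one vertex on $W_i$, call it $v_{i+1}$, satisfies $f(v_{i+1}) \in H_{i+1}$, and we truncate $W_i$ to end at $v_{i+1}$. Since $H_{i+1}$ is nested in $H_i$, we have $\partial H_{i+1} \subset H_i$, so $f(W_i) \subset H_i \subset H$. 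The concatenation $W := W_0 \cdot W_1 \cdots W_{k-1}$ is then a walk from~$v$ to~$v_k$ with $f(W) \subset H$.

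It remains to bound the distance from $f(v_k)$ to $\partial H = \partial H_0$ in $T^1$. Iterating the nesting relation, $\partial H_j \subset H_i$ for all $j > i \geq 0$, so the boundaries $\partial H_0, \partial H_1, \ldots, \partial H_k$ are pairwise disjoint. Consider any walk in $T^1$ from $f(v_k)$ to a vertex of $\partial H_0$. Because $T$ is a triangulation of the plane and $\partial H_i$ is a simple bi-infinite path in $T^1$, no edge of $T$ can have its two endpoints on distinct open sides of $\partial H_i$ (such an edge would have to cross $\partial H_i$ in the interior of an edge, which is impossible in a triangulation). Hence, as the walk starts in~$H_i$ (since $f(v_k) \in H_k \subset H_i$) and ends outside $H_i \cup \partial H_i$ (since $\partial H_0 \subset H_i^c \setminus \partial H_i$), it must visit at least one vertex of $\partial H_i$, for each $i \in \{1,\ldots,k\}$. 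These $k$ vertices, together with the endpoint on $\partial H_0$, are pairwise distinct, so the walk has at least $k$ edges. Choosing $k > \kappa$ therefore yields the desired $w := v_k$.

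The main obstacle in executing the plan is identifying the right notion of progress: applying Lemma~\ref{lemescape} once only guarantees escape \emph{into} $H$, not quantitative distance from $\partial H$ in $T^1$. The nested cones supplied by Lemma~\ref{lempush} resolve this by providing a layered family of pairwise disjoint reduced paths that any walk in $T^1$ returning to $\partial H$ must cross one by one, turning iterated local escapes into a linear lower bound on graph-distance.
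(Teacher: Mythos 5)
Your proof is correct and takes the same approach as the paper, which simply states that the corollary ``follows immediately from repeated iterations of Lemmas~\ref{lempush} and~\ref{lemescape}''; you have spelled out the iteration and, in particular, supplied the key quantitative observation the paper leaves implicit — that the pairwise disjoint boundaries $\partial H_0, \ldots, \partial H_k$ separate $f(v_k)$ from $\partial H$ and hence force graph-distance at least $k$.
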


\begin{proof}
  This follows immediately from repeated iterations of Lemmas~\ref{lempush} and~\ref{lemescape}.
\end{proof}

\begin{figure}
    \centering
    \includegraphics[scale=1]{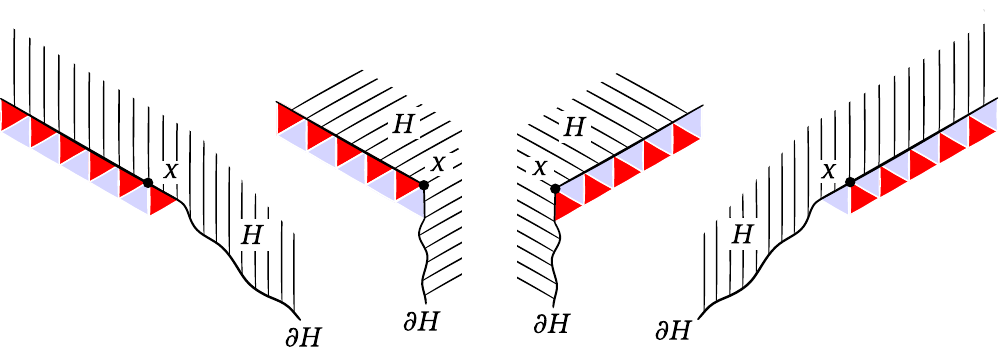}
    \caption{Combs.}
    \label{fig:combs}
\end{figure}

The next corollary is easier to state with the following definition. See \autoref{fig:combs}. In a reducing triangulation $T$ of the plane, a right (resp.\ left) \emphdef{comb} is a pair $(H,x)$ where $H$ is a half-plane of $T$, and $x$ is a vertex of $\partial H$, that satisfy the following: directing $\partial H$ so that $H$ lies on its right (resp.\ left), every turn of $\partial H$ at $x$ and after $x$ is $3_r$ or $2_b$ (resp.\ $-3_r$ or $-2_b$). The \emphdef{glen} of $(H,x)$ is the union of $H$ and of the part of $\partial H$ consecutive to $x$, including $x$.

\begin{corollary}\label{corplane2}
Let $T$ be a reducing triangulation of the plane. Let $G$ be a graph, and let $f : G \to T^1$ be simplicial. Let $(H,x)$ be a (left or right) comb of $T$, and let $v$ be a vertex of $G$. Let $\kappa > 0$. If $f$ is strongly harmonious, and if $f(v) = x$, then there is a walk $W$ from $v$ to a vertex $w$ in $G$, such that $f(W)$ is included in the glen of $(H,x)$, and such that $f(w)$ is at distance greater than $\kappa$ from $\partial H$.
\end{corollary}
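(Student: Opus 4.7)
I would prove the corollary by reducing it to Corollary~\ref{corplane1}. The plan is to first construct a walk $W_1$ from $v$ to some vertex $v'$ such that $f(W_1)$ lies in the glen of $(H,x)$ and $f(v') \in H$, and then to apply Corollary~\ref{corplane1} to $v'$, $H$ and $\kappa$ to produce a walk $W_2$ from $v'$ to some $w$ with $f(W_2) \subseteq H$ and $f(w)$ at distance more than $\kappa$ from $\partial H$. Since $H$ is contained in the glen, the concatenation $W_1 \cdot W_2$ is the required walk.

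The first step rests on a boundary analog of Lemma~\ref{lempush}: for every (right or left) comb $(H,x)$, there exists a cone $H'$ with tip $x$ such that $H' \cup \partial H'$ is contained in the glen of $(H,x)$. Granting this, Lemma~\ref{lemescape} applied at $v$ with the cone $H'$ yields a walk based at $v$ whose image lies in $H' \cup \partial H'$ and is not entirely contained in $\partial H'$; taking $W_1$ to be the prefix up to its first vertex $v'$ with $f(v') \in H'$ gives the desired walk, since $H' \subseteq H$.

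The main obstacle is constructing $H'$. Assume WLOG a right comb, and let $e_1'$, $e_2'$ be respectively the edges of $\partial H$ entering and leaving $x$, so that the turn $(e_1', e_2')$ at $x$ is $3_r$ or $2_b$ by the comb property. I would define $\partial H'$ as the unique bi-infinite reduced walk through $x$ whose forward edge at $x$ is $e_2'$, whose turn at $x$ is $4_b$, and whose other turns are all $3_r$ or $3_b$. A short local computation, using the alternating coloring of the triangles around $x$ together with the turn type of $(e_1', e_2')$, verifies that the resulting incoming edge $g$ of $\partial H'$ at $x$ sees blue on its left, so that $\partial H'$ really bounds a cone $H'$ with tip $x$.

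To check that $H' \cup \partial H'$ lies in the glen: at each vertex $y$ past $x$ on $\partial H$, if $\partial H$ makes a $3_r$-turn at $y$, then the ``$_r$'' color of the shared incoming edge forces $\partial H'$ to also make a $3_r$-turn at $y$, so the two walks coincide there; if instead $\partial H$ makes a $2_b$-turn at $y$, then $\partial H'$ must make a $3_b$-turn, i.e., one more clockwise rotation than $\partial H$, which pushes $\partial H'$ strictly to the walker's right of $\partial H$, hence into $H$. Uniqueness of reduced walks in the plane (Lemma~\ref{redux walk}) prevents $\partial H'$ from returning to $\partial H$ once it has deviated, so the forward part of $\partial H'$ stays in the glen; the backward part from $x$ enters $H$ directly and, by the same uniqueness argument, cannot cross $\partial H$. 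Finally, $H'$ is connected, locally inside $H$ at $x$ (a further local check at the $4_b$-turn), and bounded by $\partial H'$ which is disjoint from $\partial H$ except along the initial coincident segment, so $H' \subseteq H$. The left-comb case is symmetric.
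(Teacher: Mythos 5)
Your proposal is correct and takes essentially the same route as the paper: both first exhibit a cone $H'$ with tip $x$ whose closure lies in the glen of $(H,x)$, then escape into $H'\subseteq H$ via Lemma~\ref{lemescape} and push deep into $H$ by iterating Lemmas~\ref{lempush} and~\ref{lemescape} (which is exactly Corollary~\ref{corplane1}). The only difference is that you spell out the construction of $\partial H'$ and the verification that $H'\cup\partial H'$ lies in the glen, which the paper asserts without detail.
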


\begin{proof}
The vertex $x$ is the tip of a cone $H'$ such that $H' \cup \partial H'$ is included in the glen of $(H,x)$. Lemmas~\ref{lempush} and~\ref{lemescape} conclude.
\end{proof}

We need a few more technical lemmas.

\begin{lemma}\label{balloon}
Let $T$ be a plane reducing triangulation, and let $T'$ be subgraph of $T^1$. If $T'$ is connected, finite, and not a single vertex, then $T'$ admits at least two vertices $x$ such that the edges of $T'$ incident to $x$ are all included in some bad turn of $T$.
\end{lemma}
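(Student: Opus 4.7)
The plan is to prove Lemma~\ref{balloon} by induction on the number of edges of $T'$, via a structural case analysis. Throughout, call a vertex $x$ of $T'$ a \emph{balloon vertex} if all $T'$-edges incident to $x$ lie in some bad turn of $T$. The elementary observation to keep in mind is that every leaf of $T'$ is automatically balloon: its unique incident edge $e$ lies in the bad $0$-turn $(e, e^{-1})$. In particular, if $T'$ is a tree, then being finite with at least one edge it has at least two leaves, and the statement follows directly.

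If $T'$ has a leaf $x$ while not being a tree, I would strip $x$ and apply the induction hypothesis to the smaller connected subgraph $T' - x$, which still has at least one edge. This furnishes at least two balloon vertices of $T' - x$, at most one of which is the neighbor $y$ of $x$; any other such vertex still sees the same incident $T'$-edges in $T'$ as in $T' - x$, hence remains balloon in $T'$, and combining it with $x$ itself gives the two required vertices. If $T'$ has no leaves but does have a cut vertex $v$, I would split $T'$ at $v$ into at least two proper subgraphs, apply induction to each piece, and note that each piece contributes at least one balloon vertex distinct from $v$; those two vertices remain balloon in $T'$ because their incident edge sets are unchanged.

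The remaining case is when $T'$ is $2$-connected, so its outer boundary walk $C_0$ is a simple cycle. Oriented with the bounded region containing the rest of $T'$ on its left, Lemma~\ref{redux walk 2} yields at least three corners of $C_0$ of type $1$-turn or $2_r$-turn. At a $1$-turn corner $v$, the interior-side wedge at $v$ contains no $T^1$-edges in its interior, so no additional $T'$-edges can enter $v$ from the inside; since the opposite side is the outer face (which contains no $T'$-edges either), $v$ has $T'$-degree exactly $2$ and is thus a balloon vertex. At a $2_r$-turn corner, the interior wedge contains a single intermediate $T^1$-edge that may or may not lie in $T'$; if it does not, then $v$ is again a balloon vertex.

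The main technical obstacle is the worst-case configuration in which the three bad corners guaranteed by Lemma~\ref{redux walk 2} are all $2_r$-turns whose intermediate $T^1$-edge belongs to $T'$, so that none of them is directly a balloon vertex. I would address this by combining the outer-boundary analysis with a peeling argument: strip $C_0$ together with the offending spike edges, apply the induction hypothesis to the strictly smaller inner subgraph, and argue that at least two balloon vertices produced by that inner call remain balloon vertices of the full $T'$. Ensuring that this peeling step preserves enough balloon vertices uniformly across all configurations, especially when spike edges alter the balloon status of their inner endpoints, is the most delicate part of the proof.
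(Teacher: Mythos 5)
Your proposal rests on a misreading of the lemma that creates a phantom obstacle.  A \emph{turn} $(e_1,e_2)$ is an angular sector around its middle vertex, and ``the edges of $T'$ incident to $x$ are all included in some bad turn'' means that they all lie in the (closed) angular sector of that turn, not that they are literally the two edges $e_1,e_2$.  In particular, at a $2_r$-turn corner $v$ of the outer cycle, the intermediate $T^1$-edge \emph{and} the two cycle edges are all contained in the $2_r$-turn; so $v$ is a balloon vertex whether or not the intermediate edge belongs to $T'$.  This is exactly the point of the lemma and of its use in Lemma~\ref{lem:disconnect}: what must be controlled is that all incident $T'$-edges fit inside a small angular wedge, and when all of them lie on $C$ or on the bounded side of $C$, a bad left turn of $C$ provides that wedge.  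Your ``worst-case configuration'' where the three $2_r$-turns have their middle edge in $T'$ is therefore not a problem at all, and the peeling-plus-induction machinery you invoke to handle it is both unnecessary and, as you admit, unfinished.

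Beyond this gap, your route (induction on edges, leaf-stripping, cut-vertex splitting, then $2$-connected case) is genuinely more elaborate than the paper's argument, which avoids any reduction to $2$-connectivity: after discarding the trivial case of two leaves, it walks along the outer boundary closed walk until a vertex $x_0$ repeats, extracts the resulting simple cycle $C$, and observes that at every vertex of $C$ other than $x_0$ all $T'$-edges lie on $C$ or on its bounded side.  Lemma~\ref{redux walk 2} then yields three bad left turns of $C$, at least two of which avoid $x_0$; those two vertices are balloon vertices immediately.  A short separate case handles the situation of exactly one leaf.  If you repair the misunderstanding of ``included in a bad turn'', your $2$-connected case collapses to essentially the same argument via Lemma~\ref{redux walk 2} and your leaf/cut-vertex reductions become correct but redundant.
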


\begin{proof}
If $T'$ has more than one vertex of degree one, then we are done. So assume that this is not the case. Then $T'$ has at least three vertices since it has neither loops nor multiple edges (\autoref{sec:reducing triangulations}). First assume that $T'$ has no degree one vertex. Then the outer closed walk $W$ of $T'$ has length greater than or equal to three, and never uses an edge of $T^1$ and its reversal consecutively. Orient $W$ so that the outer-face of $T'$ lies on the right of $W$. Walk along $W$ until some vertex $x_0$ is visited for the second time, then cut the portion of $W$ from $x_0$ to itself, thus obtaining a closed walk $C$ based at $x_0$. Then $C$ is a simple closed walk, not a single vertex. Thus, since $T^1$ has no loop nor multiple edges, $C$ has length greater than or equal to three. Also, every vertex $x \neq x_0$ of $C$ is such that all the edges of $T'$ incident to $x$ either lie on $C$, or on the left of $C$. \autoref{redux walk 2} ensures that at least three of the left turns of $C$ are bad, and at least two of them do not occur at $x_0$, proving the lemma in this case.

Now assume that $T'$ has exactly one vertex of degree one. Removing degree one vertices from $T'$ as long as there is one immediately gives the following: $T'$ is the union of a path $P$, not single vertex, and of a connected graph $Q$, not a single vertex, such that the intersection of $P$ and $Q$ is one of the two end-vertices $x$ of $P$, and such that $Q$ has no degree one vertex. By the preceding, $Q$ has a vertex distinct from $x$ that suits our need. And the end-vertex of $P$ distinct from $x$ also suits our needs.
\end{proof}

In the following, we denote by $T^0$ the set of vertices of a triangulation $T$.

\begin{lemma}\label{lem:disconnect}
Let $P$ be the plane. Let $T$ be a reducing triangulation of $P$. Let $Z$ be a triangulation of $P$, and let $\varphi : Z \to T$ be simplicial. If $\varphi$ is strongly harmonious, and if $\varphi$ is uniformly homotopic to the identity map of $P$, then $\varphi^{-1}(T^0)$ does not disconnect $P$.
\end{lemma}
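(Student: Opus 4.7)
The proof will be by contradiction. Suppose $K := \varphi^{-1}(T^0)$ separates $P$. Under the uniform homotopy hypothesis, $\varphi^{-1}(v)$ lies in a bounded neighborhood (of combinatorial radius $\kappa$) of $v$ in $T$ for each $v\in T^0$, so the connected components of $K$ are uniformly bounded and pairwise at positive distance. Outside a sufficiently large disk in $P$, the set $K$ therefore mirrors the discrete set $T^0$ locally, and since $T^0$ does not separate $P$, at most one component of $P\setminus K$ is unbounded. Hence I can fix a \emph{bounded} connected component $U$ of $P\setminus K$, and I write $\bar U$ for its closure.

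I next build a combinatorial shadow of $\bar U$ inside $T^1$ to which Lemma~\ref{balloon} can be applied. Let $V' := \{\, v\in T^0 : \varphi^{-1}(v)\cap\bar U\neq\emptyset\,\}$, a finite set, and let $T'$ be the subgraph of $T^1$ on $V'$ whose edges are precisely the images under $\varphi$ of edges of $Z^1$ whose interiors lie in $U$. Observe that any edge of $Z^1$ with interior in $U$ has both endpoints in $\varphi^{-1}(T^0)\cap\bar U$, so it does record an adjacency in~$T'$; tracing the boundary $\partial U$ in $Z^1$ cycles through the clusters labeled by $V'$ with successive transitions along edges of $T'$, which ensures that $T'$ is nonempty, connected and not a single vertex. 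Applying Lemma~\ref{balloon} to $T'$ supplies a vertex $x\in V'$ such that every edge of $T'$ incident to $x$ is contained in a single bad turn of $T$ at $x$. From the type of this bad turn, I construct a line $L$ of $T$ centered at $x$ (left or right as appropriate) such that every edge of $T'$ at $x$ lies weakly on one side of $L$, leaving an open ``forbidden'' side of $L$ across which no edge of $T'$ at $x$ crosses. Picking a vertex $v_0$ of $Z^1$ with $\varphi(v_0)=x$ and $v_0\in\bar U$, strong harmony then produces a walk $W$ from $v_0$ in $Z^1$ whose image $\varphi\circ W$ escapes $L$ on the forbidden side. The first edge of $W$ leaving the cluster $\varphi^{-1}(x)$ is thus mapped to an edge of $T^1$ at $x$ that is \emph{not} an edge of $T'$; by the very definition of $T'$, the interior of this $Z^1$-edge lies outside $U$, so $W$ exits $\bar U$ in a direction not recorded by~$T'$.

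The hard part --- and my main obstacle --- is converting this single ``forbidden exit'' into a genuine contradiction with the existence of the bounded component~$U$. The plan is to exploit the fact that Lemma~\ref{balloon} actually supplies \emph{two} bad-turn vertices $x_1,x_2\in V'$, and that at each of them strong harmony forces an exit of $\bar U$ into a direction not available in $T'$. Using uniform homotopy to ensure that walks in $Z^1$ geometrically track their $T^1$-images, and the degree-one behavior of $\varphi$ (in the spirit of Proposition~\ref{prop:selfmaps}), one should conclude that the boundary walk of $\partial U$ --- traced in $T^1$ as a closed walk in $T'$ using only edges consistent with Lemma~\ref{redux walk 2}'s $1$-turns or $2_r$-turns on the bounded side --- cannot simultaneously close up around $\bar U$ and accommodate the forced exits dictated by strong harmony at $x_1$ and $x_2$. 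The delicate bookkeeping --- in particular handling the cases where $\bar U$ is not simply connected, where a cluster $\varphi^{-1}(v)$ touches $\partial U$ from both sides, or where the traversal of $\partial U$ revisits a cluster --- is where the technical weight of the proof lies.
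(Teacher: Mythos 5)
Your high-level plan --- use Lemma~\ref{balloon} to find a ``bad turn'' vertex in the combinatorial shadow of a bounded complementary region, then contradict strong harmony --- is in the same spirit as the paper's proof, but the execution has a genuine gap, which you acknowledge in your final paragraph. The problem is that your shadow graph $T'$ is built from \emph{every} $v\in T^0$ with $\varphi^{-1}(v)\cap\bar U\neq\emptyset$, including vertices whose clusters in $Z^1$ straddle the boundary of $U$ and extend outside $\bar U$. For such a cluster, the $Z^1$-edges leaving it need not have their interiors in $U$, so they are not recorded in $T'$, and the escape walk guaranteed by strong harmony can leave $\bar U$ through a portion of the cluster that lies outside $\bar U$, along an edge invisible to $T'$. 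You notice exactly this (``$W$ exits $\bar U$ in a direction not recorded by $T'$''), and that is why you cannot close the argument: a single forced exit in a $T'$-forbidden direction is not, by itself, a contradiction.

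The paper's proof sidesteps the whole difficulty with a sharper choice of where to apply Lemma~\ref{balloon}. After ruling out simple bi-infinite walks inside $\varphi^{-1}(T^0)$ (any such walk would be an unbounded connected set mapped to a single vertex, violating the uniform bound), it extracts a simple cycle $C$ in $Z^1$ whose image under $\varphi$ is a \emph{single} vertex $x$, enclosing some vertex $v$ with $\varphi(v)\neq x$. Letting $G$ be $C$ together with the part of $Z^1$ inside it, Lemma~\ref{balloon} applied to $\varphi(G)$ yields at least \emph{two} bad-turn vertices, hence one vertex $y\neq x$. The cluster $G_0$ of $Z^1$ mapped to $y$ that meets the interior of $C$ cannot touch $C$ (since $\varphi(C)=x\neq y$), so it lies entirely in the disk bounded by $C$; therefore all edges leaving $G_0$ lie in $G$ and are mapped into a single bad turn at $y$. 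This immediately contradicts strong harmony at any vertex of $G_0$ --- no further bookkeeping about $\partial U$, multiple components, or re-visits is required. The ingredient you were missing is not a clever closing argument but the right setup: arrange for the ``boundary'' of the trapping region to map to a \emph{single} vertex $x$, then pick the bad-turn vertex to be \emph{different} from $x$ so that the whole cluster is automatically confined. The guarantee in Lemma~\ref{balloon} of at least two bad-turn vertices exists precisely to let one avoid $x$.
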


\begin{proof}
By contradiction, assume that it does. There is no simple bi-infinite walk $W$ in $Z^1$ that belongs entirely to $\varphi^{-1}(T^0)$, for otherwise $\varphi(W)$ would be a single vertex of $T$ (being a connected subset of $T^0$), contradicting the assumption that $\varphi$ is uniformly homotopic to the identity map. Therefore, and since $\varphi$ is simplicial, there are a simple cycle $C$ in $Z^1$, a vertex $v \in Z^0$ in the interior of the bounded side of $C$, and a vertex $x$ of $T$, such that $\varphi(C) = x$ and $\varphi(v) \neq x$. Let $G$ be the subgraph of $Z^1$ that contains $C$ and the part of $Z^1$ lying in the bounded side of $C$. Then $\varphi(G)$ is a finite connected subgraph of $T$, not a single vertex. So by \autoref{balloon} there is a vertex $y \neq x$ in $\varphi(G)$ such that the set $B$ of edges of $\varphi(G)$ incident to $y$ is included in a bad turn of $T$. There is a cluster $G_0 \subset Z^1$ such that $\varphi(G_0) = y$, and such that the edges between $G_0$ and $Z^1 \setminus G_0$ all belong to $G$. Those edges are mapped to $B$ by $\varphi$, contradicting the assumption that $\varphi$ is strongly harmonious.
\end{proof}

In the following, we say that $1$-turns and $2_r$-turns are \emphdef{bad left turns}.

\begin{lemma}\label{lem:bad left}
In a plane reducing triangulation $T$, let $I$ be a simple bi-infinite walk that does not make any bad left turn. The vertices of $T$ on the left of $I$ at distance one from $I$ are the vertices of a simple bi-infinite walk $I'$ that does not make any bad left turn.
\end{lemma}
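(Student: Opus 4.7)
The plan is to define $I'$ explicitly, verify that it is a walk, and then establish that it is bi-infinite, simple, and has no bad left turn. For each vertex $v_i$ of $I$, I would list the neighbors of $v_i$ that lie strictly on the left of $I$, in the order $u_i^1, \ldots, u_i^{k_i}$ obtained by sweeping the left wedge at $v_i$ from $e_{i-1}$ to $e_i$; the first vertex $u_i^1$ is the apex of the triangle on the left of $e_{i-1}$, the last vertex $u_i^{k_i}$ is the apex of the triangle on the left of $e_i$, and consecutive $u_i^j, u_i^{j+1}$ share a triangle with $v_i$ and are therefore adjacent in $T^1$. Since $I$ is simple it makes no $0$-turn, and a short case analysis on the remaining turn types shows that the left wedge is never empty, so $k_i \ge 1$. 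The relation $u_i^{k_i} = u_{i+1}^1$ holds because both sides describe the apex of the same left triangle of $e_i$. Concatenating these lists over all $i$ and identifying the duplicates at the junctions yields a walk $I'$ in $T^1$, bi-infinite because $I$ is.

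For simplicity, I would argue by contradiction. If some vertex $w$ appeared twice in $I'$, then, by choosing the two occurrences so as to minimize the distance between the associated indices on $I$, one would find $j < j'$ with $w$ adjacent on the left to both $v_j$ and $v_{j'}$ but to no $v_l$ with $j < l < j'$. The closed walk $C$ with vertex sequence $v_j, v_{j+1}, \ldots, v_{j'}, w, v_j$ is then simple in $T$, and the disk it bounds lies on the left of $I$. Orienting $C$ so that its bounded disk is on the left, Lemma~\ref{redux walk 2} forces $C$ to contain at least three $1$-turns or $2_r$-turns. At each intermediate vertex $v_l$ with $j < l < j'$ the turn of $C$ coincides with that of $I$ and hence is not a bad left turn by hypothesis, so the three bad left turns of $C$ must all occur at $v_j$, $v_{j'}$, and $w$. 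To derive the contradiction I would track the cyclic position of $w$ in the left wedge at $v_j$ (respectively at $v_{j'}$) and show that $C$ can have a $1$- or $2_r$-turn at $v_j$ only if $w$ sits at one of the last two positions $u_j^{k_j-1}, u_j^{k_j}$ of that wedge, and dually $w$ must sit at one of the first two positions $u_{j'}^{1}, u_{j'}^{2}$ at $v_{j'}$; combining these positional constraints with the bipartite alternation of face colors in $T$ (which is precisely what distinguishes $2_r$ from $2_b$) forces $I$ itself to make a bad left turn somewhere along the segment from $v_j$ to $v_{j'}$, yielding the desired contradiction.

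Finally, for the absence of bad left turns in $I'$, I would compute the turn of $I'$ at each of its vertices by cases. At an internal vertex $u = u_i^j$ with $1 < j < k_i$, the two triangles $v_i u_i^{j-1} u$ and $v_i u u_i^{j+1}$ meet along the edge $u v_i$; hence at $u$ the three edges $u u_i^{j-1}, u v_i, u u_i^{j+1}$ are cyclically consecutive, and the turn of $I'$ is a $\pm 2$-turn passing by $u v_i$ on the $I$-side. The color of the face on the left of the incoming edge is opposite to that of $v_i u_i^{j-1} u$, and the colors of the wedge triangles $v_i u_i^{j-1} u_i^j$ alternate along the wedge at $v_i$; comparing this alternation with the constraint that any $2$-turn of $I$ at $v_i$ is $2_b$ forces the left face at $u$ to be blue, so the turn of $I'$ at $u$ is either $-2$ or $2_b$, and in particular not a bad left turn. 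At a junction vertex $u = u_i^{k_i} = u_{i+1}^1$ with $k_i, k_{i+1} \ge 2$, the turn of $I'$ is a $\pm 3$-turn (its entering and exiting edges at $u$ are separated by the two consecutive edges $u v_i$ and $u v_{i+1}$), and degenerate junctions with $k_i = 1$ (or several consecutive $k_i = 1$) are handled by an analogous $\pm k$-turn count with $k \ge 3$. In every subcase the turn is neither a $1$-turn nor a $2_r$-turn. I expect the simplicity argument of the previous paragraph to remain the true technical bottleneck of the proof.
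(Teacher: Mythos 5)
Your construction of $I'$ via the left wedges matches the paper's, but the route afterwards diverges and has a genuine gap. For the turn property, at a degenerate junction---a vertex $u$ that is the common apex of at least three consecutive left triangles of $I$ (i.e., $k_{i+1}=1$, or several consecutive $k_l=1$, so that three or more consecutive edges of $E$ target $u$)---your ``$\pm k$-turn count with $k\ge 3$'' does not suffice, because the same turn is described both as a $-k$-turn and as a $(\deg(u)-k)$-turn. At a degree-$6$ vertex, a $-4$-turn \emph{is} a $2$-turn and a $-5$-turn \emph{is} a $1$-turn, so the assertion ``in every subcase the turn is neither a $1$-turn nor a $2_r$-turn'' is false as justified. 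Ruling out $2_r$ in this situation requires linking the colors back to the hypothesis on $I$; this is exactly what the paper does: a bad left turn (or $0$-turn) of $I'$ at $u$ forces three consecutive same-target edges $e_1,e_2,e_3\in E$, by color alternation some middle $e_2$ sees red on its left, and hence $I$ makes a $2_r$-turn at the source of $e_2$, a contradiction. Your sketch never makes this link, and it is precisely the part of the argument that cannot be replaced by a bare turn count.

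On simplicity, which you yourself flag as the bottleneck, your plan (apply Lemma~\ref{redux walk 2} to the cycle $C$ made of a segment of $I$ and the two edges $wv_j,wv_{j'}$, then contradict the three forced $1$- or $2_r$-turns at $v_j,v_{j'},w$ by tracking positions and colors) is only outlined, not carried out, and it is the hard direction. The paper establishes the turn property of $I'$ \emph{before} simplicity, and simplicity then follows almost immediately: a minimal cyclic subwalk $W$ of $I'$ has at most one bad left turn and no $0$-turn, so by Lemma~\ref{redux walk 2} the bounded side of $W$ lies on its \emph{right}; since $I$ lies on the right of $I'$ and is disjoint from it, the bi-infinite walk $I$ would be trapped inside a bounded disk, which is absurd. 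I recommend reversing your order: fix the degenerate-junction gap via the color link to $I$, prove the turn property first, and then derive simplicity from this disk-trapping argument rather than the positional analysis you sketch.
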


\begin{proof}
Consider the sequence $E$ of directed edges of $T$ emanating from vertices of $I$ to the left side of $I$. If $e_1$ and $e_2$ are consecutive in $E$, then either $e_1$ and $e_2$ have the same target vertex, or their they have the same source vertex. In the first case, $e_2$ is the counter-clockwise rotation of $e_1$ around their target vertex, in the second case $e_2$ is the clockwise-rotation of $e_1$ around their source vertex. Let $I'$ be the bi-infinite walk corresponding to the target vertices of the directed edges in $E$. The vertices of $I'$ are those on the left of $I$ at distance one from $I$.

Then $I'$ does not make any $0$-turn, nor any bad left turn. For otherwise there are three consecutive $e_1, e_2, e_3 \in E$ that all have the same target vertex (at which $I'$ makes a bad left turn), such that $e_2$ sees red on its left. But then $I$ makes a $2_r$-turn at the source vertex of $e_2$, a contradiction.

Moreover $I'$ is simple. For otherwise some portion $W$ of $I'$ is a non-trivial simple closed walk, and the bounded side of $W$ lies on its right by \autoref{redux walk 2}, and since $W$ makes at most one bad left turn. But then $I$ is contained in the bounded side of $W$, a contradiction.
\end{proof}

\begin{figure}
    \centering
    \includegraphics[scale=1]{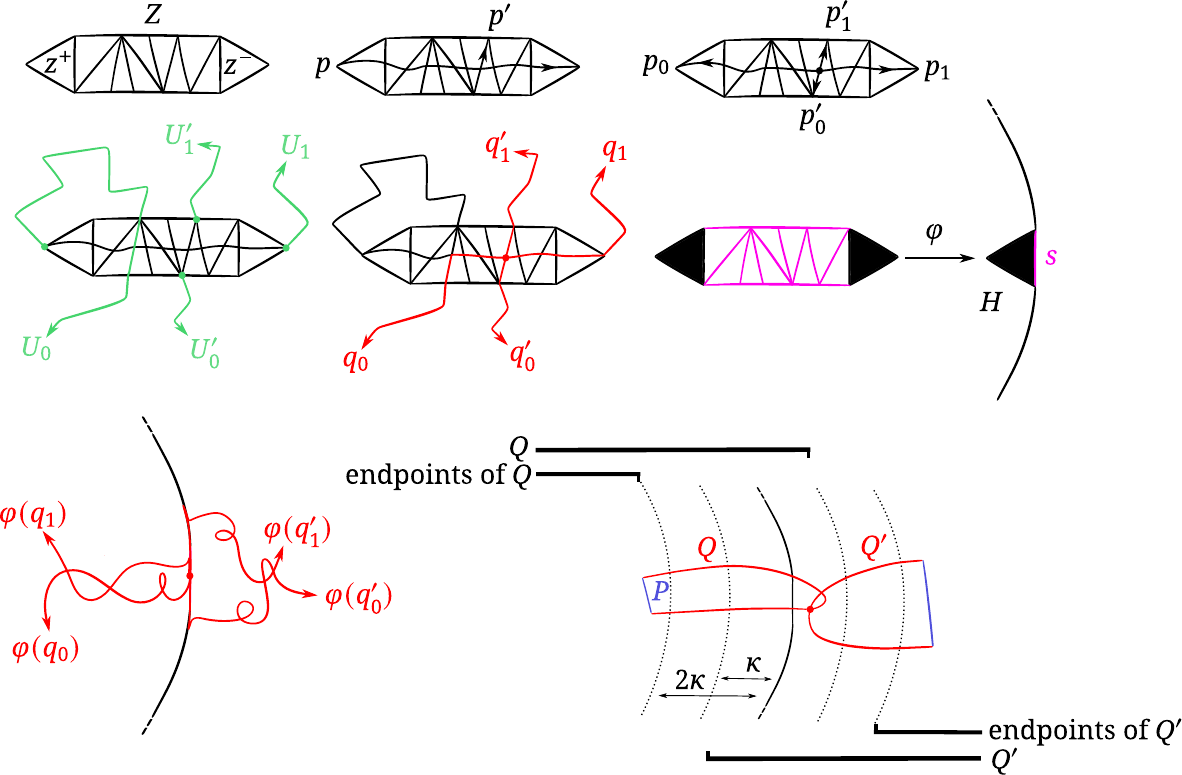}
    \caption{The construction in the proof of \autoref{prop:tutteinplane}.}
    \label{fig:core proof}
\end{figure}

\begin{proof}[Proof of Proposition~\ref{prop:tutteinplane}]
The overall proof is illustrated in \autoref{fig:core proof}.  By contradiction, assume that there are two faces $z_0$ and $z_-$ of $Z$ such that $\varphi|_{z_+}$ is positive and $\varphi|_{z_-}$ is negative. By \autoref{lem:disconnect}, there is a simple path $p$ from $z_+$ to $z_-$ in the dual of $Z$, that is disjoint from $\varphi^{-1}(T^0)$. Up to replacing $z_+$ and $z_-$ by other faces of $Z$, we may assume that $\varphi$ is null on every face of $Z$ intermediately visited by $p$. Extend $p$ to the respective vertices of $z_+$ and $z_-$ not incident to the first and last edges crossed by $p$.  Because $\varphi$ is null on every face visited by~$p$ except the first and last ones, and because $p$ is disjoint from~$\varphi^{-1}(T^0)$, there is an edge~$s\in T^1$ such that the image, by~$\varphi$, of all the edges of~$Z$ crossed by~$p$ is exactly~$s$.  Moreover, one can direct~$s$ in such a way that all the edges of~$Z$ crossed by~$p$ from left to right are mapped to~$s$.  Consider the left-most bi-infinite reduced walk of $T$ that contains $s$, and let $H$ be the half-plane on its left. Because $\varphi|_{z_+}$ is positive and $\varphi|_{z_-}$ is negative, the interiors $\mathring z_+$ and~$\mathring z_-$ of triangles $z_+$ and~$z_-$ are mapped inside~$H$.

Using the assumption that $\varphi$ is uniformly homotopic to the identity, there is $\kappa > 0$ such that each vertex of $T$ is at distance smaller than~$\kappa$ from its pre-images under~$\varphi$. Let $p'$ be any of the directed edges crossed by $p$ from left to right. By \autoref{corplane1} there is, in $Z^1$, a walk $U_0$ (resp.~$U_1$) based at the source (resp.\ target) end-vertex of~$p$ such that, for $i=1,2$, $\varphi(U_i) \subset H$, and such that $\varphi(U_i)$ contains a point at distance greater than $3 \kappa$ from $\partial H$.  Cut $p$ into two parts $p_0$ and $p_1$ at its intersection with $p'$, and reverse $p_0$. Let $q_i$ be the concatenation of $p_i$ and $U_i$. Make $q_i$ a simple path by shortening it if necessary. Then $q_0$ and $q_1$ are disjoint from $p'$ except for their basepoint, since $\varphi(U_0), \varphi(U_1)\subset H$, $\varphi(p')=s$, and $\varphi(\mathring z_+),\varphi(\mathring z_-)\subset H$. We prove by contradiction that $q_0$ and $q_1$ are disjoint except for their basepoint. If not, then $q_0 \cup q_1$ contains a simple closed curve whose bounded side contains an endpoint~$y$ of~$p'$. By \autoref{corplane2} there is a walk $X$ based at $y$ in $Z^1$ such that $\varphi(X)$ is disjoint from $\mathring{s} \cup H$, and such that $X$ intersects $q_0 \cup q_1$. That contradicts the fact that $\varphi(q_0 \cup q_1) \subset \mathring{s} \cup H$.

Now cut $p'$ into two parts $p'_0$ and $p'_1$ at its intersection with $p$. Reverse $p'_0$. By \autoref{corplane2} there is a simple walk $U'_0$ (resp.\ $U'_1$) based at the source (resp.\ target) end-vertex of~$p'$ in $Z^1$, such that $\varphi(U'_i)$ is disjoint from $\mathring{s} \cup H$, and such that $\varphi(U'_i)$ contains a point at distance greater than $3 \kappa$ from $\partial H$. Let $q'_i$ be the concatenation of $p'_i$ and $U'_i$. Since $\varphi(U'_0)$ and $\varphi(U'_1)$ are disjoint from $\mathring{s} \cup H$, the paths $q'_0$ and $q'_1$ are simple, and each of them is disjoint from $q_0$ and $q_1$ except for their basepoint. An argument by contradiction similar to the one of the previous paragraph shows that $q'_0$ and $q'_1$ are disjoint except for their basepoint.

Let $Q$ be the concatenation of~$q_0$ and~$q_1$, and $Q'$ be the concatenation of~$q'_0$ and~$q'_1$.  By construction, $Q$ and~$Q'$ are simple, and cross exactly once. Until now, we have considered separately the situation in~$Z$ (before applying~$\varphi$) and in~$T$ (after applying~$\varphi$).  But recall that $Z$ and~$T$ are both triangulations of the plane, and that each vertex of $T$ is at distance smaller than~$\kappa$ from its pre-images under~$\varphi$.  Because of this, and by the properties of~$\varphi(Q)$ and~$\varphi(Q')$, we have that $Q$ has its endpoints inside~$H$, at distance at least~$2\kappa$ from~$\partial H$, and may enter the complement of~$H$, but only by a distance of at most~$\kappa$.  Similarly, $Q'$ has its endpoints outside~$H$, at distance at least~$2\kappa$ from~$\partial H$, and may enter~$H$ but only by a distance of at most~$\kappa$. By (repeated applications of) \autoref{lem:bad left}, the vertices of $T$ inside $H$ at distance $2 \kappa$ from $\partial H$ are the vertices of a simple bi-infinite walk $I$. Then $I$ separates the end-points of $Q$ from every point of $Q'$. Join the end-points of $Q$ by a path $P$, where $P$ is separated from $Q'$ by $I$, thus turning $Q$ into a (not necessarily simple) closed curve $C$. Because $Q$ and~$Q'$ cross exactly once, and because $P$ and $Q'$ are disjoint, then $C$ and $Q'$ cross exactly once. Similarly, there is a simple bi-infinite walk that separates the end-points of $Q'$ from $C$, so $Q'$ can be extended to a closed curve $C'$ such that $C$ and $C'$ cross exactly once. But this is impossible, since any two closed curves in the plane cross an even number of times.
\end{proof}

\subsection{Proof of \autoref{tutte theorem}}

We now have almost all the material to prove \autoref{tutte theorem}. We first prove the theorem for strongly harmonious drawings, in the following proposition.

\begin{proposition}\label{weak tutte}
Let $S$ be a closed surface distinct from the sphere. Let $T$ be a reducing triangulation of $S$. Let $G$ be a finite graph embedded in $S$, and let $f: G \to T^1$ be simplicial. If $f$ is strongly harmonious, and if $f$ is homotopic to the inclusion map $G \hookrightarrow S$, then $f$ is a weak embedding.
\end{proposition}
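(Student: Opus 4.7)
The plan is to combine the main results of this section to show that the simplicial map~$f$ is a weak embedding. First, I apply Proposition~\ref{prop:extension} to extend $f$ to a simplicial map $\varphi : Z \to T$, where $Z$ is a triangulation of~$S$ whose 1-skeleton contains a subdivision $\bar G$ of~$G$, $\varphi|_{\bar G}^{T^1} = \bar f$, and $\varphi$ is strongly harmonious and homotopic to the identity map of~$S$. This reduces the problem to one about maps between entire triangulations of~$S$, which is the setting where Proposition~\ref{prop:tighttriang} applies.

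By Proposition~\ref{prop:tighttriang}, $\varphi$ cannot both positively orient one face of~$Z$ and negatively orient another; without loss of generality every non-degenerate face of~$Z$ is oriented positively by~$\varphi$. Pick a finite set $Y\subset S$ containing exactly one point in the interior of each face of~$T$. The preimage $\varphi^{-1}(Y)$ is then finite, consisting of isolated interior points of the non-degenerate faces of~$Z$, and near every such point $\varphi$ is locally an orientation-preserving homeomorphism. Hence $\varphi$ is coherently oriented at~$Y$, and Proposition~\ref{prop:selfmaps} yields that $\varphi|_{S \setminus \varphi^{-1}(Y)}^{S \setminus Y}$ is homotopic to a homeomorphism $h : S \setminus \varphi^{-1}(Y) \to S \setminus Y$.

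Next, I transfer this to the patch system. The space $S \setminus Y$ deformation retracts onto~$T^1$ and is homotopy-equivalent to the patch system $\Sigma$ of~$T^1$ (a thickening of $T^1$ inside~$S$). Since $\bar G \subset Z^1 \subset S \setminus \varphi^{-1}(Y)$ and $T^1 \subset S \setminus Y$, restricting the homotopy between $\varphi$ and~$h$ to~$\bar G$ gives a homotopy in $S \setminus Y$, and hence in~$\Sigma$, from $\bar f$ to $h|_{\bar G}$. As the restriction of a homeomorphism to any subset is an embedding, $\bar f$ is homotopic in~$\Sigma$ to an embedding $\bar G \hookrightarrow \Sigma$.

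I then conclude using Proposition~\ref{tutteinpatch} that $\bar f$, and therefore $f$, is a weak embedding. The main obstacle is verifying the no-spur hypothesis of Proposition~\ref{tutteinpatch} for~$\bar f$: I would argue that strong harmony of $\bar f$ rules out spurs, because a spur $G_0$ with single outgoing directed edge $s : a \to b$ at a vertex~$v$ with $\bar f(v) = a$ would force every walk from~$v$ leaving~$G_0$ to begin with $s$, and one can choose a left (or right) line through~$a$ whose non-negative part lies on the opposite side from~$s$, so that no walk from~$v$ could have an image escaping this line on the required side, contradicting strong harmony at~$v$.
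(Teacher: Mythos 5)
Your proof is correct and follows essentially the same route as the paper's: apply Proposition~\ref{prop:extension} to extend $f$ to a simplicial $\varphi:Z\to T$, use Proposition~\ref{prop:tighttriang} to get coherent orientation at a set $Y$ with one point per face of $T$, apply Proposition~\ref{prop:selfmaps} to get a homotopy of $\varphi|_{S\setminus\varphi^{-1}(Y)}^{S\setminus Y}$ to a homeomorphism and hence an embedding of $G$ in the patch system, and finish with Proposition~\ref{tutteinpatch}. The two places where you add detail beyond what the paper writes are in fact the two points the paper asserts without argument: the identification (up to deformation retract) of $S\setminus Y$ with the patch system, and the verification that strong harmony forbids spurs; your sketch of the latter (choose a line through $\bar f(v)$ placing the spur's unique outgoing edge image $s$ on the non-escape side, so no walk can escape) is sound given that the degree of every vertex of $T$ is at least six.
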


(As a side note, observe that \autoref{weak tutte} considers also the torus.)

\begin{proof}
By \autoref{prop:extension} there are a triangulation $Z$ of $S$ whose 1-skeleton contains a subdivision of $G$ as a subgraph, and a simplicial map $\varphi : Z \to T$ with $\varphi|_G^{T^1} = f$, such that $\varphi$ is strongly harmonious and homotopic to the identity map of $S$. By \autoref{prop:tighttriang} there cannot be two faces $z_+$ and $z_-$ of $Z$ for which $\varphi|_{z_+}$ is positive and $\varphi|_{z_-}$ is negative. Thus, letting $Y \subset S$ contain one point in the interior of each face of $T$, $\varphi$ is coherently oriented at $Y$. By \autoref{prop:selfmaps} the map $\varphi|_{S \setminus \varphi^{-1}(Y)}^{S \setminus Y}$ is homotopic to a homeomorphism $S \setminus \varphi^{-1}(Y) \to S \setminus Y$. In particular, and since $G \cap \varphi^{-1}(Y) = \emptyset$, the map $f|^{S \setminus Y}$ is homotopic to an embedding $G \to S \setminus Y$. Also, $S \setminus Y$ is the patch system of $T$, and $f$ has no spur since $f$ is strongly harmonious. So $f$ is a weak embedding by \autoref{tutteinpatch}.
\end{proof}

The proof of \autoref{tutte theorem} relies on a few additional lemmas. 

\begin{lemma}\label{curves weak embedding}
Let $S$ be a closed surface distinct from the sphere and the torus. Let $T$ be a reducing triangulation of $S$. Let $C$ be a collection of closed walks in $T$. If the walks in $C$ are reduced, and if $C$ is homotopic to an embedding, then $C$ is a weak embedding.
\end{lemma}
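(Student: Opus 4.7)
The plan is to reduce Lemma~\ref{curves weak embedding} to Proposition~\ref{weak tutte} by extending $f := C$ to a strongly harmonious simplicial drawing of a larger finite graph embedded in $S$.

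First, I would use the hypothesis that $C$ is homotopic to an embedding $h : G \to S$ to identify $G$ (a disjoint union of cycle graphs) with a subgraph of $S$ embedded as a collection of pairwise disjoint simple closed curves. Under this identification, $f = C$ becomes a simplicial drawing $G \to T^1$ homotopic to the inclusion $G \hookrightarrow S$.

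Second, and this is the crux of the argument, I would enlarge $G$ into a finite graph $\hat G$ embedded in $S$ together with an extension $\hat f : \hat G \to T^1$ that is simplicial, strongly harmonious, and homotopic to the inclusion $\hat G \hookrightarrow S$. Around each vertex $v$ of $G$, with $x := f(v) \in T^0$, I would add, inside a small disk neighborhood of $h(v)$ in $S$, a new vertex $w_u$ for every neighbor $u$ of $x$ in $T$, to be mapped to $u$ by $\hat f$; then I would connect $v$ to each $w_u$ by an edge mapped to the edge $xu$ of $T^1$, and consecutive $w_u$'s (in the cyclic rotation around $x$ in $T$) by edges mapped to the corresponding triangle edges in $T^1$. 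This ensures that at $v$, outgoing edges of $\hat G$ cover every direction at $x$, which suffices to escape every line at $x$. The small disks around distinct vertices of $G$ can be chosen pairwise disjoint and disjoint from the rest of $h(G)$ since $h(G)$ is a finite union of disjoint simple closed curves. I would then extend $\hat G$ further to a triangulation $Z$ of $S$ containing it, following the three-step procedure of the proof of Proposition~\ref{prop:extension} (filling non-triangular faces with edges mapped to reduced walks, further splitting subdivided edges, then triangulating via Lemma~\ref{lem:flathyperbolic}); strong harmony at the newly introduced vertices propagates from the initial stars by a direct analogue of Lemma~\ref{lem:tight extension}.

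Finally, Proposition~\ref{weak tutte} applied to $\hat f$ yields that $\hat f$ is a weak embedding. Since $G \subset \hat G$ and $\hat f|_G = f = C$, it follows that $C$ itself is a weak embedding, as desired.

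The main obstacle will be the careful choice of the stars around each vertex of $G$: one must ensure that the new edges and vertices can be placed in $S$ without crossing the rest of $h(G)$ or each other, verify that strong harmony is indeed achieved at every vertex of the resulting triangulation (at the original vertices of $G$, thanks to the star, and at the auxiliary vertices, where the argument mirrors the scheme of Proposition~\ref{prop:extension}), and keep track of the fact that $\hat f$ remains homotopic to the inclusion throughout the extension. The assumption that $S$ is neither the sphere nor the torus is used throughout: it guarantees that reduced walks in $T$ behave well (via Lemma~\ref{redux walk}) and that there is enough room in $S$ to perform the construction without crossings.
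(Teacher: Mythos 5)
Your approach is genuinely different from the paper's, which never tries to make $C$ strongly harmonious; instead it works directly with an approximation of $C$ in the patch system~$\Sigma$ of~$T$ that minimizes self-crossings, passes to the universal cover, and derives a contradiction from the uniqueness of reduced paths (Lemma~\ref{redux walk}) in each of the two possible configurations (a self-crossing lift, or two lifts crossing at least twice, yielding an innermost bigon). That argument is short and avoids all of the delicate extension machinery.

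The concrete gap in your proposal is at the auxiliary star vertices $w_u$: they are \emph{not} strongly harmonious. At $w_u$, with $f(w_u)=u$, the only outgoing edges of $\hat G$ map to $ux$, $uu'$, and $uu''$, where $u',u''$ are the neighbors of $u$ in the two triangles of $T$ adjacent to $ux$; these are \emph{three consecutive} directed edges around~$u$. Since $u$ has degree at least six, there is a (left, say) line~$L$ whose central vertex is~$u$, whose forward edge at~$u$ is none of those three edges, and such that all three lie strictly to the left of~$L$. Any walk of $\hat G$ based at~$w_u$ then leaves the non-negative part of~$\widetilde L$ to the left at its very first edge, so it cannot escape~$\widetilde L$ to the right, and strong harmony fails. (This is exactly the configuration where a shortening move applies in Section~\ref{sec:harmonizing}, which is the discrete signal that $w_u$ is not in convex position.) Your appeal to ``a direct analogue of Lemma~\ref{lem:tight extension}'' does not repair this: that lemma only covers inserting a \emph{path} between two vertices that are already strongly harmonious, so that strong harmony has to be checked only at the interior vertices of the inserted path, which have the line running straight through them; the $w_u$ are degree-three branch vertices of a star, a configuration the lemma does not address. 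Note also that the obstruction is intrinsic: Lemma~\ref{higher genus problem} exhibits reduced closed walks that are not strongly harmonious in \emph{any} homotopic representative, so no local extension that fixes $\hat f|_G = C$ can be expected to be strongly harmonious in general, and reducing Lemma~\ref{curves weak embedding} to Proposition~\ref{weak tutte} in this way is doomed. You would instead need a genuinely different argument, as in the paper.
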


The proof of \autoref{curves weak embedding} adapts arguments from the proofs of~\cite[Proposition~4.2]{de2024untangling}and~\cite[Proposition~1.5]{dubois2024making}, though it is considerably simpler due to the fact we consider a collection of closed curves, homotopic to an embedding (instead of general graph drawings, and instead of curves with arbitrary geometric intersection number).

\begin{proof}
Without loss of generality the walks in $C$ are not single vertices, and are thus not contractible by \autoref{redux walk}. Let $\Sigma$ be the patch system of $T$. We claim that there is a simple collection of closed curves homotopic to $C$ in $\Sigma$. This claim implies the lemma by \autoref{tutteinpatch} since $C$ has no spur. To prove the claim consider a collection of closed curves $\Gamma$ homotopic to $C$ in $\Sigma$, and self-crossing as few times as possible in $\Sigma$. We prove the claim by contradiction so assume that $\Gamma$ self-intersects.  Let $\widetilde S$ be the universal covering space of $S$. Let $\widetilde \Sigma$ and $\widetilde T$ be the respective lifts of $\Sigma$ and $T$ in $\widetilde S$. Since $\Gamma$ self-intersects while being homotopic to a simple collection of closed curves, there are either a non-simple closed curve in $\Gamma$ that could be made simple by homotopy, or there are two simple closed curves in $\Gamma$ that cross while they could be made simple alltogether by homotopy. There are two cases. 

First assume that in $\widetilde S$ some lift of a curve from $\Gamma$ self-intersects. Then some portion of this lift is a simple loop $\ell$ in $\widetilde S$, based at the intersection point. If the bounded side $D$ of $\ell$ does not contain any face of $\widetilde T$, equivalently if $D \subset \widetilde \Sigma$, then the self-intersection of $\gamma$ can be removed by homotopy in $\Sigma$, contradicting the assumption that $\Gamma$ self-crosses as few times as possible. Otherwise, consider the walk $W$ in $\widetilde T$ that encodes the sequence of crossings of $\ell$ with the arcs of $\widetilde \Sigma$. Then $W$ is not a single vertex, the two end-vertices of $W$ are the same vertex, and $W$ is reduced, contradicting \autoref{redux walk}.

Now assume that in $\widetilde S$ every lift of every curve in $\Gamma$ is simple. There are two simple lifts $\widetilde \gamma_0$ and $\widetilde \gamma_1$ of curves from $\Gamma$ that intersect at least twice (see e.g.~\cite[Theorem~3.5]{hass1985intersections}). Then some portion $\ell_0$ of $\widetilde \gamma_0$ and some portion $\ell_1$ of $\widetilde \gamma_1$ are such that $\ell_0$ and $\ell_1$ are the two sides of a bigon $D$ embedded in $\widetilde S$, between two lifted self-intersections of $\gamma$. If $D \subset \widetilde \Sigma$, then the two self-intersections of $\gamma$ can be removed by homotopy in $\Sigma$, contradicting the assumption that $\gamma$ self-crosses as few times as possible. Otherwise, consider the walks $W_0$ and $W_1$ in $\widetilde T$ that encode the sequence of crossings of $\ell_0$ and $\ell_1$ with the arcs of $\widetilde \Sigma$. Then $W_0$ and $W_1$ are distinct, have the same end-vertices, and are reduced, contradicting \autoref{redux walk}. 
\end{proof}

Preparing for the next lemma, observe that if $M$ is a finite graph embedded on a surface, then the universal cover of $M$ is a tree $\widetilde M$ naturally equipped with a rotation system. Moreover, if $C_0$ is a closed walk without spur in $M$, then every lift $\widetilde C_0$ of $C_0$ partitions $\widetilde M$ into three parts: the left of $\widetilde C_0$, the right of $\widetilde C_0$, and the image of $\widetilde C_0$.

\begin{lemma}\label{new curve}
Let $M$ be a finite graph embedded on a surface. Let $G$ be a finite graph simplicially mapped to $M$, and let $C$ be a collection of closed walks in $M$. Assume that $G$ and $C$ are weak embeddings without spur, and that their union is not a weak embedding. Then in the universal cover of $M$ there is a lift of $G$ that has vertices on both sides of a lift of a walk $C_0$ from $C$.
\end{lemma}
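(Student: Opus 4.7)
My plan is to prove the contrapositive: assuming that no connected component of the preimage of $G$ in the universal cover $\widetilde{M}$ has vertices strictly on both sides of any lift of any closed walk $C_0 \in C$, I will build an embedded approximation of $G \cup C$ in the patch system $\Sigma$ of $M$, thereby showing that $G \cup C$ is a weak embedding.

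First, I fix embedded approximations $F \colon G \hookrightarrow \Sigma$ and $F' \colon C \hookrightarrow \Sigma$; because the walks of $C$ have no spur, I may take $F'(C)$ to be a pairwise disjoint collection of simple closed curves in $\Sigma$. It then suffices to modify $F$ by an ambient isotopy of $\Sigma$ that preserves the combinatorial drawing of $G$ in $M$, until its image is disjoint from $F'(C)$; the union $F \sqcup F'$ will then be the desired approximation. I pass to the universal cover $\widetilde{\Sigma}$, a thickening of the tree $\widetilde{M}$. Each lift $\widetilde{F}'(\widetilde{C}_0)$ is a bi-infinite simple curve separating $\widetilde{\Sigma}$ into two sides. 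The simpliciality of the drawing means that each edge of any connected component $\widetilde{G}_i$ of the preimage of $G$ spans a walk of length at most one in $\widetilde{M}$, and the straddling-free hypothesis then forces both endpoints of every such edge to land on the same side of every $\widetilde{F}'(\widetilde{C}_0)$ in $\widetilde{\Sigma}$. Consequently, the crossings of any edge of $\widetilde{F}(\widetilde{G}_i)$ with any lift $\widetilde{F}'(\widetilde{C}_0)$ come in cancellable pairs.

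I will remove all intersections of $F(G)$ with $F'(C)$ by induction on $|F(G) \cap F'(C)|$. Given an intersection, the paired-crossings property in $\widetilde{\Sigma}$ guarantees the existence of an innermost disk cobounded by a subarc of $\widetilde{F}(\widetilde{G})$ and a subarc of $\widetilde{F}'(\widetilde{C})$, whose interior, by minimality, contains no further arcs of $\widetilde{F}(\widetilde{G}) \cup \widetilde{F}'(\widetilde{C})$. The image of this disk under the covering projection is an embedded bigon in $\Sigma$, which I collapse by an ambient isotopy supported in a small neighborhood; this strictly decreases the intersection count while leaving intact the combinatorial drawings of $G$ in $M$ and of $C$ in $M$, and preserving the disjointness of $F'(C)$. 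Iterating this step removes all intersections, producing $F \sqcup F'$ as the required embedded approximation of $G \cup C$.

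The main obstacle is handling the equivariance and corner cases of this bigon induction. In particular, vertices of $\widetilde{G}_i$ whose image under the drawing lies on $\widetilde{C}_0$ itself are mapped by $\widetilde{F}$ into one of the two half-disks cut off from the corresponding vertex-disk by $\widetilde{F}'(\widetilde{C}_0)$, and these choices must be made consistently with the ``side'' already selected by the non-$\widetilde{C}_0$ vertices of $\widetilde{G}_i$; I will arrange this by a preliminary local isotopy pushing such vertices into the correct half-disk. Equivariance of the bigon collapses under deck transformations is ensured by performing the isotopy simultaneously along all translates of the chosen innermost bigon, which are disjoint in $\widetilde{\Sigma}$ by minimality.
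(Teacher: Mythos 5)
Your proof takes the contrapositive and tries to construct a disjoint pair of approximating embeddings by an innermost-bigon induction in the universal cover, whereas the paper argues directly: it fixes approximations $G'$, $C'$ of $G$, $C$ in the patch system $\Sigma$ that cross minimally, notes that a crossing $x$ must survive, lifts $x$, cuts the lift of $G'$ along the lift $\widetilde C_0'$ through $\widetilde x$, and observes that by minimality each of the two pieces emanating from $\widetilde x$ cannot stay in the faces and arcs used by $\widetilde C_0'$ (else the crossing could be pushed off), so the lift of $G$ has vertices strictly on both sides. So you and the paper share the lifting step but not the core argument, and yours is considerably longer.

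More importantly, the key step in your induction is asserted rather than justified, and I do not see how to make it work as stated. You claim that collapsing an innermost bigon ``leav[es] intact the combinatorial drawings of $G$ in $M$ and of $C$ in $M$.'' The innermost bigon is cobounded by a subarc $\alpha$ of $\widetilde F(\widetilde G)$ and a subarc $\beta$ of $\widetilde F'(\widetilde C_0)$, and since $\widetilde F(\widetilde G)$ is a graph rather than a single curve, $\alpha$ may run through several edges \emph{and vertices} of $G$. Collapsing the bigon pushes $\alpha$ to run alongside $\beta$. If a vertex $v$ of $G$ lies on $\alpha$, its image is dragged towards $\beta$, and nothing guarantees it lands in the same vertex-disk $D_{f(v)}$ of $\Sigma$, i.e., that the vertex image $f(v)$ of the underlying drawing is preserved; likewise, the sequence of arcs of $\Sigma$ crossed by $\alpha$ and by $\beta$ need not agree, so the walks $f\circ e$ for edges $e$ meeting $\alpha$ may change. (Your earlier observation that a single edge's bigons are confined to two adjacent vertex-disks does not extend to a multi-edge $\alpha$.) In addition, the existence of an innermost bigon cobounded by a \emph{subarc} of the graph $\widetilde F(\widetilde G)$ needs argument, since the standard bigon criterion applies to pairs of simple curves, not to a graph against a curve. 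These gaps are substantial, and I am not convinced the route can be completed without either reverting to swap-style moves as in Lemma~\ref{lem:patch4} or replacing the isotopy argument entirely by the paper's minimality-of-crossings argument.
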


\begin{proof}
Without loss of generality every edge of $G$ is mapped to an edge of $M$ (contract the clusters of $G$ otherwise). Let $\Sigma$ be the patch system of $M$, and let $G'$ and $C'$ be embeddings approximating $G$ and $C$ in $\Sigma$. Without loss of generality $G'$, $C'$, and the arcs of $\Sigma$ are in general position. Also, $G'$ and $C'$ cross minimally. By assumption, they cross at a point $x$ in the interior of some face of $\Sigma$. Let $\widetilde \Sigma$ be the universal covering space of $\Sigma$, and let $\widetilde x$ be a lift of $x$ in $\widetilde \Sigma$. Let $\widetilde C_0'$ be the lift of a walk $C'_0$ from $C'$ that contains $\widetilde x$. Cut the lifts of $G'$ at every intersection with $\widetilde C'_0$, and let $\widetilde G'$ be one of the two cuts that meet $\widetilde x$. Since the number of crossings between $G'$ and $C'$ is minimal, $\widetilde G'$ is not contained in the union of the faces and arcs of $\widetilde \Sigma$ used by $\widetilde C'_0$.
\end{proof}

The following is analogous to \autoref{lem:planar part}.

\begin{lemma}\label{map annulus}
Let $S$ be a surface. Let $G$ be a graph, and let $f : G \to S$ be a map. If $f(G)$ is a simple circle $S_0 \subset S$, if $S_0$ does not bound a disk in $S$, and if $f$ is homotopic to an embedding in $S$, then $f$ is homotopic to an embedding in a tubular neighborhood of $S_0$.
\end{lemma}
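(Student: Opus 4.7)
My plan for proving Lemma~\ref{map annulus} is to pass to the cyclic cover $p:\hat S\to S$ associated with the cyclic subgroup $H:=\langle[S_0]\rangle\subset\pi_1(S)$. This cover is an open annulus $\hat S$ in which a lift of $S_0$ serves as a simple core circle $\hat S_0$; a sufficiently thin annular neighborhood of $\hat S_0$ projects homeomorphically via $p$ to a tubular neighborhood of $S_0$ in $S$, and in $\hat S$ there is plenty of room to isotope the lifted embedding into such a neighborhood. First I would reduce to the case where $G$ is connected: for general $G$, one applies the connected argument to each component and uses the circular direction of the resulting annular neighborhood $N$ to disjointify the images by small isotopies within $N$.

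Assuming $G$ connected with basepoint $v_0$ mapped to $x_0\in S_0$, the class $[S_0]\in\pi_1(S,x_0)$ is non-trivial (since $S_0$ does not bound a disk) and of infinite order (surface groups being torsion-free), so $H\cong\mathbb{Z}$. Because $f(G)\subset S_0$, the induced map $f_\ast$ factors through $\pi_1(S_0,x_0)$ and lands in $H$; transporting the basepoint along a homotopy from $f$ to the embedding $f'$ gives the same for $f'_\ast$. The cover $\hat S$ is therefore a connected orientable surface with $\pi_1\cong\mathbb{Z}$, hence an open annulus; the loop $S_0$ lifts to a simple closed curve $\hat S_0\subset\hat S$ representing a generator of $\pi_1(\hat S)$, which is thus isotopic to the core circle. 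Since $[S_0]\in H$ fixes the fiber point $\hat x_0$ on $\hat S_0$, the monodromy around $S_0$ on the sheet through $\hat S_0$ is trivial, so a thin enough annular neighborhood $\hat N$ of $\hat S_0$ projects homeomorphically under $p$ onto a tubular neighborhood $N$ of $S_0$ in $S$. The embedding $f'$ lifts to a map $\hat f':G\to\hat S$ that is injective (from injectivity of $f'=p\circ\hat f'$), hence an embedding by compactness of $G$. Identifying $\hat S$ with $S^1\times\mathbb{R}$ and $\hat S_0$ with $S^1\times\{0\}$, the isotopy $\phi_t(x,s):=(x,(1-t)s)$ pushes any compact set toward $\hat S_0$, so for some $t<1$ the map $\phi_t\circ\hat f'$ is an embedding of $G$ into $\hat N$; composing with $p$ yields an embedding of $G$ into $N$, homotopic to $f'$ via $s\mapsto p\circ\phi_s\circ\hat f'$ for $s\in[0,t]$, and therefore homotopic to $f$ in $S$.

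The main technical subtlety will be verifying the homeomorphic projection from a thin neighborhood of $\hat S_0$ down to $S$: the cover need not be regular (when $H$ is non-normal in $\pi_1(S)$), so I cannot simply invoke deck transformations. This is resolved by the monodromy observation above, which shows that the component of $p^{-1}(N)$ containing $\hat S_0$ is a trivial degree-one cover of $N$. The disconnected case is comparatively painless, since the open annulus $N$ has ample room in its circular direction to separate finitely many compact sets via small ambient isotopies.
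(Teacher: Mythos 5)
Your proof is correct, but it takes a genuinely different route from the paper's. You pass to the cyclic cover $p : \hat S \to S$ associated with $H = \langle[S_0]\rangle$, observe that $\hat S$ is an open annulus in which $S_0$ lifts to a core circle $\hat S_0$, lift the embedding, compress it into a thin annular neighborhood $\hat N$ of $\hat S_0$ via the ambient isotopy $\phi_t$, and push down by $p$, which restricts to a degree-one (hence trivial) cover on $\hat N$. The paper instead works directly in $S$: it fixes a spanning tree $Y$ of $G$, reduces (by contracting $Y$) to the case where each non-tree edge maps to a power of the simple loop $\ell$ parameterizing $S_0$, invokes Epstein's theorems to show that those powers are $0$ or $\pm 1$ and that contractible loops in the competing embedding $f'$ bound innermost disks, then pushes the contractible loops into a small disk near $f(r)$ and bundles the remaining loops parallel into an annular neighborhood of $\ell$. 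Your covering-space argument is more conceptual and avoids the spanning-tree bookkeeping and the two applications of Epstein's results, at the cost of some covering-space care (the degree-one observation, and the implicit appeal to the classification of non-compact surfaces with $\pi_1\cong\mathbb Z$, which is clean for closed $S$ but requires a touch more attention if $S$ has boundary). Two small wording slips: when disjointifying the components you want to push them apart in the \emph{radial} (width) direction of the annulus $N$, not the circular one, since components may wrap around $S_0$; and ``gives the same for $f'_*$'' should read ``gives a conjugate of $H$,'' which still suffices for the lifting criterion. Neither affects correctness.
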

\begin{proof}

Without loss of generality $G$ is connected. Fix a vertex $r$ of~$G$ and a spanning tree $T$ of $G$ rooted at~$r$.  Let $\ell$ be the simple loop on~$S$ based at~$x:=f(r)$ whose image is~$S_0$.  We first claim that we can assume without loss of generality that each edge of~$T$ is mapped to $x$ by $f$, and that each edge not in~$T$ is mapped, under~$f$, to a power of $\ell$. To see this, contract, in~$S_0$, the edges in~$T$; each remaining edge is homotopic to some power of~$\ell$ in $S_0$. We second claim that each edge of $T$ is actually mapped to either $\ell$, to $\ell^{-1}$, or to the constant loop in $S_0$. This is due to a result of Epstein~\cite[Theorem~4.2]{e-c2mi-66}, and since the edges not in $T$ are mapped to loops that can be made simple by homotopy in $S$.

On the other hand, there is an embedding $f' : G \to S$ homotopic, in~$S$, to $f$. Without loss of generality, we can assume that this homotopy between $f$ and~$f'$ holds the image of~$r$ fixed. Indeed, consider a homotopy $H$ from $f$ to $f'$, and the path $p : [0,1] \to S$ followed by the image of $r$ under $H$. There is an ambiant isotopy $H'$ of $S$ that “counteracts” $H$ in the sense that it maps $p(t)$ to $x$ for all $t \in [0,1]$. Composing the maps in $H$ by the maps in $H'$ gives a homotopy from $f$ to an embedding (not $f'$) that helds the image of $r$ fixed.

We contract the edges of~$T$ in~$f'$ to a small neighborhood of~$x$, this time preserving the fact that we have an embedding.  The remaining edges are loops that, under~$f'$, are homotopic to their counterparts under~$f$ (if $T$ would be really contracted to~$x$). The contractible ones can be pushed by isotopy into a small neighborhood of $x$, as each of them bounds a disk with only (possibly) contractible loops inside it~\cite[Theorem~1.7]{e-c2mi-66}. The other loops can be bundled together parallel, since any pair of them bounds a disk with only (possibly) contractible or homotopic loops inside it. Then they can be pushed alltogether in a neighborhood of $\ell$.
\end{proof}

\begin{proof}[Proof of Theorem~\ref{tutte theorem}]
Clearly if $f$ is a weak embedding, then there is an embedding homotopic to $f$ in $S$. For the other direction assume that there is an embedding homotopic to $f$ in $S$. We shall prove that $f$ is a weak embedding.

Partition $G$ into two subgraphs $A$ and $B$ such that $f|_B$ is strongly harmonious, and $f$ is not strongly harmonious on any of the connected components of $A$. Then $f|_{A} = C \circ f'$ for some disjoint union $O$ of cycle graphs, mapped to reduced closed walks by $C : O \to T^1$, and some simplicial map $f' : A \to O$, without spur. 

Our first claim is that the collection of closed walks $C$ is homotopic to an embedding in $S$. Indeed, $C$ can be realized as the restriction of $f$ to a collection of disjoint cycles in $A$, as follows. For every cycle $O_0$ in $O$, since $f'$ has no spur, there is a simple closed walk $W$ in $A$ mapped by $f'$ to a non-trivial power of $O_0$. Then $W$ is mapped by $f$ to a non-trivial power of the closed walk $C_0 := C|_{O_0}$, homotopic to an embedding (since $f$ is), and so it is actually mapped to $C|_0$ or its reversal~\cite[Theorem~4.2]{e-c2mi-66}.

Thus, $C$ is a weak embedding by \autoref{curves weak embedding}, and since the walks in $C$ are reduced. Now $f'$ is a weak embedding by \autoref{map annulus} and \autoref{tutteinpatch}. Our second claim is that $f|_{B} \cup C$ is a weak embedding. This claim proves the theorem as $f|_{A \cup B}$ is then a weak embedding. We prove the claim by contradiction so assume that $f|_{B} \cup C$ is not a weak embedding. Let $U_T$ be the universal cover of \emph{the 1-skeleton} of $T$. There is by \autoref{new curve} some lift of $f|_B$ in $U_T$ that contains vertices on both sides of some lift of a walk $C_0$ from $C$. Now let $\widetilde S$ be the universal cover of $S$. In $\widetilde S$, there are lifts $\widetilde B$ and $\widetilde C_0$ of $f|_B$ and $C_0$ such that $\widetilde B$ contains vertices on both sides of $\widetilde C_0$. Then $\widetilde C_0$ is a semi-infinite reduced walk in the lift of $T$, and $\widetilde B \cup \widetilde C_0$ is uniformly homotopic to an embedding. That contradicts \autoref{corplane1}.
\end{proof}

%%%%%%%%%%%%%%%%%%%%%%%%%%%%%%%%%%%%%%%%%%%%%%%%%%%%%%%%%%%%%%%%%%%
%%%%%%%%%%%%%%%%%%%%%%%%%%%%%%%%%%%%%%%%%%%%%%%%%%%%%%%%%%%%%%%%%%%
%%%%%%%%%%%%%%%%%%%%%%%%%%%%%%%%%%%%%%%%%%%%%%%%%%%%%%%%%%%%%%%%%%%
\section{Harmonizing a drawing monotonically: Proof of \autoref{harmonization theorem}}\label{sec:harmonizing}

In this section we prove \autoref{harmonization theorem}, which we restate for convenience:

\harmonizationtheorem*

The strategy is to transform a drawing~$f$ iteratively by some ``local'' moves satisfying the property that, if no move can be applied, then the current drawing is harmonious.  Two of these moves, the shortening and balancing moves, decrease strictly the length of the drawing.  On the other hand, the flip move does not affect the length of the drawing.  Roughly but not exactly, the algorithm performs moves iteratively, giving priority to the shortening and balancing moves over the flip moves.  To bound the complexity of the algorithm, it then suffices to bound the length of a flip sequence, assuming that at each step of this sequence, no shortening or balancing move is possible.  The details are actually more complicated; in particular, the flip sequence is chosen carefully.

%%%%%%%%%%%%%%%%%%%%%%%%%%%%%%%%%%%%%%%%%%%%%%%%%%%%%%%%%%%%%%%%%%%%%%%%%
\subsection{Reduction to simplicial maps}

We have the following preliminary observation:
\begin{lemma}\label{lem:simplicial}
  To prove \autoref{harmonization theorem}, we can assume that $f$ is simplicial.
\end{lemma}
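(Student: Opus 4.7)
The plan is to reduce the general case to the simplicial case via the canonical simplicial factorization introduced in Section~\ref{preliminaries graphs}. Let $\bar f : \bar G \to T^1$ be the unique simplicial factorization of $f$, where $\bar G$ subdivides each edge $e$ of $G$ whose image walk has length $k_e \geq 2$ into a path $P_e$ of $k_e$ edges, each mapped to an edge of~$T^1$. The subdivision $\bar G$ has total size $O(n)$, since $n$ accounts for the sum of the image-walk lengths. Assuming the simplicial case of Theorem~\ref{harmonization theorem} holds, we apply it to $\bar f$ in time $O((m+n)^2 n^2)$ to obtain a simplicial $\bar f' : \bar G \to T^1$, harmonious, homotopic to $\bar f$ in $S$, such that no edge of $\bar G$ becomes longer under $\bar f'$ than under $\bar f$.

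Next, I would define the drawing $f' : G \to T^1$ from $\bar f'$ in the obvious way: $f'$ agrees with $\bar f'$ on vertices of $G$, and for each edge $e$ of $G$ whose subdivision is $P_e = (e_1, \ldots, e_{k_e})$, we set $f'(e) := \bar f'(e_1) \cdot \bar f'(e_2) \cdots \bar f'(e_{k_e})$. Homotopy $f' \simeq f$ in $S$ is inherited directly from $\bar f' \simeq \bar f$. For the length property, since $\bar f$ maps each edge of $\bar G$ to a single edge of $T^1$, and $\bar f'$ does not increase that length, every $\bar f'(e_i)$ is a walk of length $\leq 1$; hence the length of $f'(e)$ is at most $k_e$, the length of $f(e)$. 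Computing $f'$ from $\bar f'$ is done in linear time.

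The core verification is that $f'$ is harmonious, i.e., that its own simplicial factorization $\overline{f'} : \overline{G} \to T^1$ is harmonious. There is a natural simplicial quotient map $\pi : \bar G \to \overline{G}$ satisfying $\overline{f'} \circ \pi = \bar f'$, obtained by contracting, inside each path $P_e$, the subdivision edges of $\bar G$ that $\bar f'$ sends to a single vertex. Strong harmony transfers via $\pi$ in both directions: given a vertex $v \in \overline{G}$ and a line $L$ based at $\overline{f'}(v)$, lift $v$ to some $u \in \pi^{-1}(v)$ with $\bar f'(u) = \overline{f'}(v)$; a walk $W$ at $u$ in $\bar G$ with $\bar f'(W)$ escaping $L$ projects to $\pi(W)$, a walk at $v$ in $\overline{G}$ whose image equals $\bar f'(W)$ and therefore also escapes $L$. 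For the cycle-graph form, if $\bar f' = C \circ h$ with $h : \bar G \to O$ simplicial and spur-free, then $h$ sends the $\pi$-contracted edges to single vertices as well (since $C$ is reduced and sends edges of $O$ to edges of $T^1$), so $h$ factors as $h = h' \circ \pi$ with $h' : \overline{G} \to O$; any spur of $h'$ would lift through $\pi$ to a spur of $h$, so $h'$ is spur-free and $\overline{f'} = C \circ h'$ witnesses harmony of $\overline{f'}$.

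There is no real obstacle here; the only subtle point is observing that the contraction $\pi$ behaves well with respect to each ingredient in the definition of harmony, namely lines, clusters, spurs, and the cycle-graph decomposition. Each of these is preserved because $\pi$ only collapses edges that $\bar f'$ (and hence $h$) already sends to a single vertex, so image walks, strong harmony, and the spur-free structure all pass through~$\pi$ unchanged.
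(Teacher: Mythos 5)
Your proof is correct and takes essentially the same approach as the paper: factor $f$ through its simplicial factorization $\bar f:\bar G\to T^1$, apply the simplicial case of Theorem~\ref{harmonization theorem} to obtain $\bar f'$, observe that $\bar f'$ is still simplicial because edge lengths did not increase, and reassemble $\bar f'$ into a drawing $f'$ of $G$. The paper treats the final verification as immediate (``$\bar f'$ naturally corresponds to a drawing $f'$... that satisfies the desired properties''), whereas you explicitly check, via the quotient $\pi:\bar G\to\overline{G}$ collapsing degenerate subdivision edges, that harmony of $\bar f'$ transfers to harmony of the simplicial factorization $\overline{f'}$ of $f'$; this is a legitimate point to spell out and your argument for it is sound.
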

\begin{proof}
  Let $f:G\to T^1$ be as in the statement of \autoref{harmonization theorem}.  Let $\bar f:\bar G\to T^1$ be the associated simplicial map that factorizes~$f$.  If \autoref{harmonization theorem} holds for simplicial maps, then we obtain a map $\bar f':\bar G\to T^1$ that is harmonious, is homotopic to~$\bar f$, and does not increase the length of the edges (compared to~$f'$).  In particular, $\bar f'$ is simplicial.  It naturally corresponds to a drawing~$f'$ of~$G$ on~$T^1$ that satisfies the desired properties.
\end{proof}

%%%%%%%%%%%%%%%%%%%%%%%%%%%%%%%%%%%%%%%%%%%%%%%%%%%%%%%%%%%%%%%%%%%%%%%%%
\subsection{Flips, shortenings, and balancings}

\begin{figure}
    \centering
    \includegraphics[scale=1]{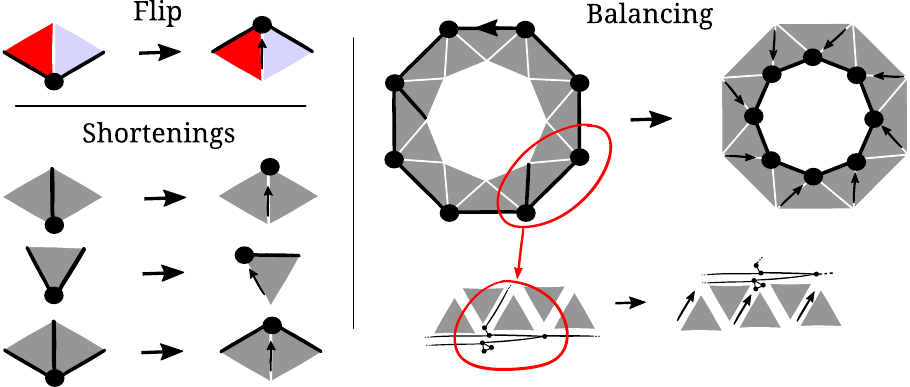}
    \caption{The moves. (Left) Each disk represents a single cluster. (Right) Each disk may represent several clusters. The balancing is slightly counter-clockwise (clockwise would not decrease any edge length here).}
    \label{fig:moves}
\end{figure}

Throughout this section, $T$ is a reducing triangulation of a surface $S$ distinct from the sphere and the torus, $G$ is a graph, and $f : G \to T^1$ is a \emph{simplicial} map.  Recall that $f$ factors uniquely into a homomorphism~$\hat f:\hat G\to T^1$.  We now define the three moves bringing $\hat f$ (and thus $f$) closer to harmony; see \autoref{fig:moves}.

\begin{itemize}
  \item First, if the edges of~$\hat G$ incident with~$v$ leave~$v$ via two edges around $\hat f(v)$, which together form a $2_r$-turn around~$\hat f(v)$, then we can perform a \emphdef{flip move} to~$\hat f$, which transforms $\hat f$ into a homotopic map $\hat f':\hat G\to T^1$ (\autoref{fig:moves}, top left), which is actually also a homomorphism.  From $\hat f'$, we immediately deduce a simplicial map~$f':G\to T^1$.

  Since $\hat f, \hat f':\hat G\to T^1$ are both homomorphisms, we can also view a flip as a specific operation that turns a homomorphism into another one (we will use this point of view later).
  
  \item Second, let $v$ be a vertex of $\hat G$.  If the edges of~$\hat G$ incident with~$v$ leave~$v$ via one, two, or three consecutive edges of~$T^1$ around $\hat f(v)$, then we can perform a \emphdef{shortening move} to~$\hat f$, which transforms $\hat f$ into a homotopic, simplicial map~$\hat f'$ of~$\hat G$ in which no edge of~$\hat G$ is longer than in~$\hat f$ (\autoref{fig:moves}, bottom left).  From $\hat f'$, we immediately deduce a simplicial map~$f':G\to T^1$.

  \item Third, consider a simple directed cycle~$C$ in~$\hat G$ that makes only 3-turns under~$f$.  We say that a walk of~$\hat G$, identified by its sequence of directed edges $(e_0,\ldots,e_k)$, \emph{follows} $C$ if there is a walk on~$C$, its \emph{following walk}, (which may go back and forth on~$C$), identified by its directed edges $(c_0,\ldots,c_k)$ such that $f(e_i)=f(c_i)$ for each~$i$.  A walk $(e_0,\ldots,e_k,e_{k+1})$ of~$\hat G$ \emph{pulls $C$ to the left} if $(e_0,\ldots,e_k)$ follows~$C$, with following walk $(c_0,c_1,\ldots,c_k)$, and if moreover edge $f(e_{k+1})$ leaves the image of the directed cycle~$C$, at the end-vertex of~$c_k$, to its left.  The notion of being pulled to the right is defined analogously.

  If $C$ makes only 3-turns under~$f$, is pulled left, and is not pulled right, we can define a \emphdef{balancing move} as follows (\autoref{fig:moves}, right).  We consider all the vertices of~$\hat G$ that are parts of walks  following~$C$ (including the vertices of~$C$ themselves) and move them to the left of~$C$, in such a way that we still have a homomorphism when restricted to the vertices following~$C$.  There are exactly two possibilities to do this, depending on whether the cycle is rotated ``slightly clockwise'' or ``slightly counterclockwise'' (as in \autoref{fig:moves}).  In any case, the length of the image of an edge of~$\hat G$ does not increase, and we choose a possibility that strictly decreases the length of the image of at least one edge, based on how a walk pulls $C$ to the left.  As before, from $\hat f'$, we immediately deduce a simplicial map~$f':G\to T^1$.
\end{itemize}

These three moves are useful in the following sense:
\begin{lemma}\label{no move implies harmonious}
  If $f$ cannot be flipped, shortened, or balanced, then $f$ is harmonious.
\end{lemma}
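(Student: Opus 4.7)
The plan is to prove the contrapositive: assuming $f$ is not harmonious, at least one of the three moves applies. Pick a connected component $G_0$ of $\hat G$ on which $f|_{G_0}$ is neither strongly harmonious nor of the form $C \circ f'$ with $C$ a reduced closed walk and $f'$ spur-free.

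Since $f|_{G_0}$ fails strong harmony, fix $v \in G_0$ and a line $L$ (say a left line) through $\hat f(v)$ such that no walk based at $v$ has image escaping $L$. No single $\hat G$-edge at $v$ can then be mapped strictly to the right of $L$, so the fan at $v$ is contained in the arc around $\hat f(v)$ made of the two $L$-edges $e_1^v, e_2^v$ together with the strict-left arc of $L$ at $\hat f(v)$. Combined with the no-shortening hypothesis (the fan does not fit in three consecutive edges of $T$ at $\hat f(v)$) and the no-flip hypothesis (the fan is not exactly two edges forming a $2_r$-turn), this forces the fan to equal $\{e_1^v, e_2^v\}$, or to strictly contain this pair with all extra edges strictly to the left of $L$.

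The second step is propagation. For every $w \in G_0$ reachable from $v$ by a walk whose image lifts to the non-negative part of $\widetilde L$, any $\hat G$-edge at $w$ going strictly right of $L$ would, by concatenation with the walk from $v$ to $w$, yield a walk from $v$ escaping $L$, contradicting the choice of $v$ and $L$. Thus the same dichotomy as above holds at every such $w$. Following the $\hat G$-edges whose image traces $L$ forward, a pigeonhole argument on the finite graph $\hat G$ produces a simple directed cycle $C'$ in $\hat G$ whose image under $\hat f$ is a positive power of the primitive period of $L$; in particular $C'$ makes only $3_r$-turns under $f$.

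I conclude by a dichotomy. If $\hat G_0 = C'$ and every vertex of $C'$ has fan exactly $\{e_1^w, e_2^w\}$, then $f|_{G_0} = C \circ \mathrm{id}_{C'}$, where $C$ is the reduced closed walk image of $C'$, contradicting the choice of $G_0$. Otherwise some vertex $w$ of $C'$ carries an extra fan-edge $g$ which, by the propagated dichotomy, is mapped strictly to the left of $L$; then the two-edge walk consisting of the $C'$-edge ending at $w$ followed by $g$ pulls $C'$ to the left. By propagation no $\hat G$-edge based at a vertex of $C'$ is mapped strictly right of $L$, so $C'$ is not pulled to the right. Hence the balancing move on $C'$ to the left applies, contradicting the no-balancing hypothesis.

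The main obstacle is the bookkeeping in the propagation step: one must verify that the cycle $C'$ can be extracted so as to make only $3_r$-turns (not $0$-turns from backtracking along $L$), and that the "not pulled to the right" statement is truly global, since walks following $C'$ in $\hat G$ can backtrack and thereby leave the propagation zone via the negative part of $\widetilde L$. These issues are best handled by lifting to the universal cover of $S$, where $\widetilde L$ is a simple bi-infinite reduced walk (Lemma~\ref{redux walk}) that splits the plane cleanly into two sides.
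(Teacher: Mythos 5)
Your overall architecture mirrors the paper's: assume $f$ is not strongly harmonious, pick a witness vertex $v$ and left line $L$, lift to the universal cover, combine the ``no strictly right edge'' constraint with the absence of flips and shortenings to trace a directed cycle $C'$ in $\hat G$ along $L$, and then try to either factor $f$ through a reduced closed walk or exhibit a balancing. Up to that point the two arguments are essentially the same. However, there are two genuine gaps, the second of which you flag as ``the main obstacle'' but leave unresolved.

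First, your terminal dichotomy is not exhaustive. You split into (a) $\hat G_0 = C'$ with every fan exactly $\{e_1^w, e_2^w\}$, and (b) some vertex of $C'$ has a strictly-left fan-edge. But it can happen that $\hat G_0 \supsetneq C'$ while every fan along $C'$ (and throughout the propagation zone) is exactly $\{e_1^w, e_2^w\}$ as a set of directed $T^1$-edges---for example because several parallel $\hat G$-edges at a vertex of $C'$ all map to $e_1^w$, or because a tree of vertices hangs off $C'$ via edges mapped to $L$. Neither of your two branches applies there, yet the lemma still requires the factorization $f|_{G_0}=C\circ f'$ (not $C\circ\mathrm{id}_{C'}$, since $\hat G_0\neq C'$), and you would additionally have to check $f'$ has no spur. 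The paper avoids this case distinction altogether by proving a stronger intermediate claim---that \emph{every} walk based at $v$ lifts into $\widetilde L$---from which the factorization follows by covering theory, while spur-freeness of $f'$ comes from $f$ not being shortenable.

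Second, and more critically, ``not pulled to the right'' is not actually established. You argue that no $\hat G$-edge at a vertex of $C'$ maps strictly right; but $C'$ being pulled right is witnessed by a walk in $\hat G$ that \emph{follows} $C'$ (possibly going back and forth along $L$, passing through vertices not on $C'$) and then leaves to the right. Lifted, such a walk can backtrack past $\widetilde x$ into the negative part of $\widetilde L$, exactly the region that your propagation step does not control; merely ``lifting to the universal cover'' does not resolve this, since you have already lifted. The paper's fix is a conjugation: if some walk $W$ from $v$ enters the right side after wandering inside $\widetilde L$, replace it by $P\cdot Q^n\cdot P^{-1}\cdot W$ for $n$ large, where $P$ is a prefix of the forward walk $I$ that reaches a simple loop $Q$ of $G$ tracing $L$; the lift of this conjugate first advances $n$ periods along $\widetilde L$, pushing the wandering portion of $W$ into the non-negative part, so the resulting walk genuinely escapes $L$---a contradiction. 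This push-forward trick is the key idea your sketch is missing.
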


\begin{figure}
    \centering
    \includegraphics[scale=1]{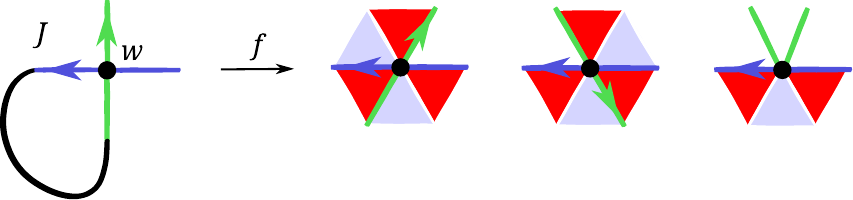}
    \caption{In the proof of \autoref{no move implies harmonious}, the first two and the last two edges of~$J$ cannot be mapped to two edge-disjoint walks.}
    \label{fig:cross}
\end{figure}

\begin{proof}
  Without loss of generality $G$ is connected. Also, $f$ is a homomorphism (for otherwise $f$ factors into a homomorphism $\hat f : \hat G \to T^1$ such that no move applies to $\hat f$ by definition, and such that if $\hat f$ is harmonious, then $f$ also). If $f$ is strongly harmonious, we are done, so we can assume that $f$ is not strongly harmonious.  So let $v$ be a vertex of~$G$, and let $L$ be a (left or right) line in~$T^1$ whose central vertex is~$f(v)$ such that, for each walk~$W$ in~$G$ based at~$v$, the image of~$W$ under~$f$ does not escape~$L$.
  In the universal cover, this means the following.  There are a vertex~$\widetilde x$ of~$\widetilde T^1$, projecting to $f(v)$, and a (left or right) line $\widetilde L$ in~$\widetilde T^1$ whose central vertex is~$\widetilde x$, such that, for each walk~$W$ in~$G$ based at~$v$, the lift of~$f \circ W$ based at~$\widetilde x$ does not escape~$\widetilde L$.  We assume that $\widetilde L$ is a left line (equivalently, that it makes only $3_r$-turns), the other case being similar.  We claim that, for each walk~$W$ in~$G$ based at~$v$, the lift of~$f \circ W$ based at~$\widetilde x$ is actually \emph{contained} in $\widetilde L$.

\smallskip

  First, we explain why the claim implies the lemma.  Because $T^1$ is finite, there are a cycle graph $O$ and a reduced closed walk $C : O \to T^1$ such that $\widetilde L$ is a lift of $C$. Using the claim, and since $\widetilde L$ is simple by \autoref{redux walk}, there is a simplicial map $f' : G \to O$ such that $f = C \circ f'$.  Moreover, $f'$ has no spur since $f$ cannot be shortened.  So $f$ is harmonious, proving the lemma.  
  There remains to prove the claim, which we do in the remaining part of the proof.

\smallskip

Recall that~$\widetilde L$ makes only $3_r$-turns. In~$G$, one can build a semi-infinite walk $I$ based at $v$ such that the lift of $f \circ I$ based at~$\widetilde x$ is equal to the non-negative part of~$\widetilde L$. (Indeed, at a given step on~$\widetilde L$, there is no edge that goes strictly to the right of~$\widetilde L$ since no walk based at $v$ can escape $L$ under $f$; if all edges go strictly to the left of~$\widetilde L$ or backward on~$\widetilde L$, a shortening or a flip could be applied to $f$.)  Since $I$ is a semi-infinite walk in~$G$, it contains a subwalk~$J$ that, after removing its first and last edge, becomes a simple loop~$Q$ in~$G$, based at some vertex~$w$ of~$G$. Let $P$ be the prefix of $I$ leading to $Q$. The lift of $f \circ P$ based at $\widetilde x$ is a portion of the non-negative part of $\widetilde L$, ending at a vertex $\widetilde y$ of $\widetilde T^1$. Also, $f \circ Q$, regarded as a closed walk by concatenating it with itself, makes a $3_r$-turn also at the middle occurrence of~$w$; Otherwise, the first two and the last two edges of~$J$ would map, under~$f$, to two edge-disjoint walks of length two making $3_r$-turns at~$f(w)$ (\autoref{fig:cross}), and so one could stop~$I$ at this point and escape from~$\widetilde L$, a contradiction. In particular the lift of $f \circ Q$ based at $\widetilde y$ is (a translate of) $\widetilde L$.

We conclude in two steps. First we prove that for every walk $W$ based at $v$ in $G$, the lift of $f \circ W$ based at $\widetilde x$ cannot enter the right side of $\widetilde L$, even after staying in $\widetilde L$ for a while. By contradiction, assume that it does. For every $n \geq 1$ the walk $P \cdot Q^n \cdot P^{-1} \cdot W$ is based at $v$ in $G$, and is such that the lift of $f \circ (P \cdot Q^n \cdot P^{-1} \cdot W)$ based at $\widetilde x$ enters the right side of $\widetilde L$ after staying in $\widetilde L$. There is $n$ such that this lift does not intersect $\widetilde L$ outside its non-negative part. We obtained a walk based at $v$ that escapes $L$ under $f$, a contradiction.

Now we prove that the lift of $f \circ W$ based at $\widetilde x$ cannot enter the left side of $\widetilde L$, even after staying in $\widetilde L$ for a while. By contradiction, assume that it does. Without loss of generality, removing the last edge from $W$ gives a walk $W'$ such that the lift of $f \circ W'$ based at $\widetilde x$ is contained in $\widetilde L$. Then the lift of $f \circ (P^{-1} \cdot W')$ based at $\widetilde y$ is contained in $\widetilde L$, which lifts $f \circ Q$, and so $P^{-1} \cdot W'$ follows $Q$. Then $P^{-1} \cdot W$ pulls $Q$ to the left. Since $Q$ is pulled to the left, and since no balancing applies, $Q$ is pulled to the right. So there is a walk $U$ based at $w$ in $G$ such that the lift of $f \circ U$ based at $\widetilde y$ is contained in $\widetilde L$, except for its last edge that enters the right side of $\widetilde L$. Then the walk $P \cdot U$, based at $v$, is such that the lift of $f \circ (P \cdot U)$ based at $\widetilde x$ enters the right side of $\widetilde L$ after staying in $\widetilde L$ for a while. That contradicts the previous paragraph. 
\end{proof}

%%%%%%%%%%%%%%%%%%%%%%%%%%%%%%%%%%%%%%%%%%%%%%%%%%%%%%%%%%%%%%
\subsection{Preliminaries on flip sequences}

By the preceding lemma, a natural strategy is to apply flips, shortenings, and balancings as much as possible until it is not possible anymore.  Shortenings and balancings strictly decrease the length of the map, so only finitely many such moves can be applied.  Most of the argument thus focuses on sequences of flips in which no shortening or balancing can be applied at any step.  Recall that flips transform a homomorphism into another one, and thus \emph{henceforth we consider maps from $G$ to~$T^1$ that are homomorphisms}.

Formally, a \emphdef{flip sequence} is a sequence of \emph{homomorphisms} $f_0, \dots, f_p : G \to T^1$ such that $f_{i+1}$ results from a flip of $f_i$ for every $0 \leq i < p$, and no shortening or balancing can be applied to any of $f_0,\ldots,f_{p-1}$.  We use the following conventions. Given $0 \leq i < p$ we call flip $i$ and abusively denote by $i$ the flip from $f_i$ to $f_{i+1}$. Given $0 \leq i \leq j \leq p$, we denote by $F_{i \to j}$ the flip sequence $f_i, \dots, f_j$. Given a flip sequence $F$, and a vertex $v$ of $G$, we denote by $F|v$ the walk performed in $T$ by the image of $v$ through the flips of $v$ in $F$.

The map $f$ induces a \emphdef{left-blue direction} of~$G$, obtained by directing each edge~$e$ of~$G$ in such a way that $f(e)$ has a blue triangle on its left.  Thus, $G$ becomes a digraph.  A \emphdef{source} in a digraph is a vertex that has no incoming edge.  We will use the following trivial observation repeatedly, without mentioning it explicitly: Each flippable vertex~$v$ is a source of its left-blue direction, and each flip reverses the direction of the edges incident to~$v$.  We need a series of easy lemmas.

\begin{lemma}\label{alternating flips}
  Let $F$ be a flip sequence. If $v$ and $w$ are two adjacent vertices in $G$, then $v$ is flipped in $F$ in-between any two flips of $w$.
\end{lemma}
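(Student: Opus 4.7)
The plan is to reduce the statement to the case of two \emph{consecutive} flips of~$w$ in~$F$, meaning two flips of~$w$ with no other flip of~$w$ strictly between them; if $v$ is flipped between every such consecutive pair, then $v$ is flipped between any two (not necessarily consecutive) flips of~$w$ by juxtaposition. So fix two consecutive flips of~$w$ at times $i_1<i_2$ in~$F$, and also fix any edge~$e$ between $v$ and~$w$ in~$G$ (which exists since $v$ and~$w$ are adjacent).

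The key idea is to track the direction of~$e$ in the left-blue direction induced by each intermediate homomorphism. Using the observation recalled just before the statement of the lemma, I would argue as follows. Since $w$ is flipped at time~$i_1$, it is a source of the left-blue direction induced by~$f_{i_1}$, so every edge incident with~$w$, and in particular~$e$, is directed from~$w$ to~$v$ just before time~$i_1$. Because a flip reverses the direction of all edges incident to the flipped vertex, the edge~$e$ is directed from~$v$ to~$w$ just after time~$i_1$. The same reasoning applied at time~$i_2$ shows that $e$ must again be directed from~$w$ to~$v$ just before the flip at time~$i_2$.

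Consequently, the direction of~$e$ has been reversed at some step strictly between times $i_1$ and $i_2$. The only operations in a flip sequence that can affect the direction of~$e$ are flips at its two endpoints~$v$ and~$w$; moreover, by the consecutivity assumption no flip of~$w$ occurs strictly between $i_1$ and $i_2$. Hence there must be at least one flip of~$v$ in that interval, which is precisely the required claim.

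I do not foresee any serious obstacle here: once the mechanics of the left-blue direction and the effect of a flip on it are in hand, this is essentially a short parity-type argument tracking the direction of a single edge between~$v$ and~$w$. The possibility of multiple parallel edges between~$v$ and~$w$ causes no trouble, since the argument can be carried out verbatim on any one of them; note also that $v$ and~$w$ are distinct here, so loops are irrelevant.
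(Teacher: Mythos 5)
Your proof is correct and is essentially the same argument as the paper's one-line proof, which observes that after $w$ is flipped it ceases to be a source of the left-blue direction and can only become one again after $v$ is flipped; you simply spell out the underlying edge-direction-tracking in more detail.
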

\begin{proof}
  When $w$ is flipped, it is a source in its left-blue direction, and it can only become a source again after~$v$ is flipped.
\end{proof}

\begin{lemma}\label{source that cannot be flipped}
  Let $F$ be a flip sequence. If, before the first flip of~$F$, vertex~$v$ is a source in its left-blue direction, and $v$ cannot be flipped, then $v$ is not flipped in~$F$.
\end{lemma}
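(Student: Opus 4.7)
My plan is to argue by contradiction: suppose $v$ is flipped in $F$ and let $i$ be the index of its first flip. Then $v$ is flippable in $f_i$, so in particular $v$ is a source in the left-blue direction of $f_i$, just as it is in $f_0$.

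The key auxiliary fact I will establish is that a flip at either endpoint of an edge $e$ of $G$ reverses the left-blue direction of $e$. Indeed, a flip at a vertex $w$ moves $\hat f(w)$ to the other endpoint of the middle edge of the $2_r$-turn at $\hat f(w)$. Consequently, for any edge $e = (v,w)$ of $G$, the old and new images of $\hat f(e)$ at $\hat f(v)$ are two sides of a common triangle of $T$, hence adjacent in the cyclic ordering at $\hat f(v)$. Since the dual of a reducing triangulation is bipartite, the triangle lying on the far side of $\hat f(e)$ at $\hat f(v)$ has the opposite color to the triangle on the far side of the new image; a short case analysis on which side is blue then shows the ``blue on the left'' direction of $e$ is reversed by the flip. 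The case of a flip at $v$ itself is symmetric.

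Granting this, fix any edge $e=(v,w)$ of $G$. Since $v$ is a source in both $f_0$ and $f_i$, the edge $e$ leaves $v$ at both endpoints of the subsequence $F_{0 \to i-1}$, so the number of reversals of $e$ along that subsequence is even. Since $v$ itself is not flipped in $F_{0 \to i-1}$, every such reversal must be caused by a flip of $w$; but Lemma~\ref{alternating flips}, applied with $v$ never being flipped, forces $w$ to be flipped at most once in that range. Parity then forces $w$ to be flipped zero times. Thus neither $v$ nor any of its neighbors is flipped in $F_{0 \to i-1}$, and so $\hat f_i$ agrees with $\hat f_0$ on $v$, on its neighbors, and on every edge incident to $v$. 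Flippability at $v$ depends only on this local data, so $v$ is not flippable in $f_i$, contradicting the choice of $i$. The main obstacle will be the color bookkeeping behind the reversal claim; once that is in hand, the rest is a clean parity argument driven by Lemma~\ref{alternating flips}.
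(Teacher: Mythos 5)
Your proof is correct, but it takes a noticeably longer route than the paper does. The paper's argument is a short ``deadlock'' based on the trivial observation stated just before the lemma (a flippable vertex is a source in its left-blue direction, and a flip reverses the direction of the edges incident to the flipped vertex): since $v$ is a source, every neighbor $w$ of~$v$ receives the directed edge from~$v$ and so is not a source, hence no neighbor of~$v$ can be flipped before $v$; conversely $v$, being unflippable, cannot become flippable until some neighbor is flipped; so neither ever happens. Your proof instead argues by contradiction, assuming $v$ is first flipped at index~$i$, and derives that no neighbor of $v$ is flipped before~$i$, so $v$'s local picture at step~$i$ is identical to step~$0$ and $v$ cannot be flippable there. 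The parity step (even number of reversals of each edge $(v,w)$, combined with the bound of at most one flip of $w$ from Lemma~\ref{alternating flips}) is sound, and your color-bookkeeping re-derivation of ``a flip reverses the left-blue direction of incident edges'' is also correct, though this is exactly the trivial observation the paper already records, so you could have cited it rather than re-proved it. Both proofs ultimately rest on the same two facts about flips and sources; the paper applies them more directly and gets a shorter proof, whereas your parity argument is a legitimate but heavier-machinery alternative.
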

\begin{proof}
  Vertex~$v$ cannot be flipped before at least one of its neighbors is flipped, but no neighbor~$w$ of~$v$ can be flipped before $v$ is flipped, because $w$ is not a source.
\end{proof}

\begin{lemma}\label{oriented cycle that cannot be flipped}
  Let $F$ be a flip sequence.  If $C$ is a cycle in $G$ (not reduced to a single vertex) that is a directed cycle in its left-blue direction, then no vertex of $C$ is flipped in $F$.
\end{lemma}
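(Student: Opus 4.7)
The plan is to argue by contradiction, using the fact already implicit in the preliminaries that the direction of an edge can only be reversed when one of its endpoints is flipped, together with the structural property that every vertex of a directed cycle has an incoming edge along that cycle.

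Concretely, suppose for contradiction that some vertex of~$C$ is flipped in~$F$, and let $i$ be the smallest index in $\{0,\dots,p-1\}$ such that the flip at step~$i$ is performed at a vertex of~$C$. Let $v$ be that vertex, and let $w$ be the predecessor of~$v$ along the directed cycle~$C$ in the left-blue direction induced by~$f_0$; write $e$ for the edge of~$C$ between $w$ and~$v$. By our choice of~$i$, no vertex of~$C$ has been flipped during the subsequence $F_{0\to i}$; in particular neither~$v$ nor~$w$ has been flipped before step~$i$.

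Since a flip at a vertex~$u$ reverses exactly the edges incident to~$u$ (and no other edge changes direction), the left-blue direction of~$e$ in~$f_i$ coincides with its direction in~$f_0$: it still points from~$w$ to~$v$, so $e$ is an incoming edge at~$v$ in~$f_i$. On the other hand, the flip at step~$i$ is applied at~$v$, which forces $v$ to be a source in its left-blue direction in~$f_i$, meaning that every edge incident to~$v$ in~$G$ must leave~$v$. This contradicts the fact that $e$ is incoming at~$v$, proving that no vertex of~$C$ is flipped in~$F$.

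I do not anticipate a genuine obstacle here: the lemma is really a direct consequence of the invariance of edge directions outside of flips and of the local condition for flippability recalled just before Lemma~\ref{alternating flips}. The only small point to be careful about is the degenerate case where $C$ is a single loop; but then the unique edge of~$C$ is still oriented (since the loop is assumed to be a directed cycle), and applying the same argument with $v=w$ shows that this edge cannot simultaneously be an oriented loop at~$v$ in~$f_0$ and have both sides leaving a source~$v$ in~$f_i$ without $v$ itself having been flipped in between.
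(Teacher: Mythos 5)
Your proof is correct and follows essentially the same approach as the paper's: you identify the first flip of a vertex $v$ of $C$, observe that the incoming cycle edge at $v$ has kept its direction (since neither endpoint was flipped earlier), and conclude that $v$ is not a source and hence not flippable. The paper compresses this into a single sentence, and the degenerate loop case you raise is already handled by the general argument (with $w=v$), so no separate treatment is needed.
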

\begin{proof}
  The first vertex of~$C$ that would be flipped would not be a source (in its left-blue direction) before the flip.
\end{proof}

\begin{figure}
    \centering
    \includegraphics{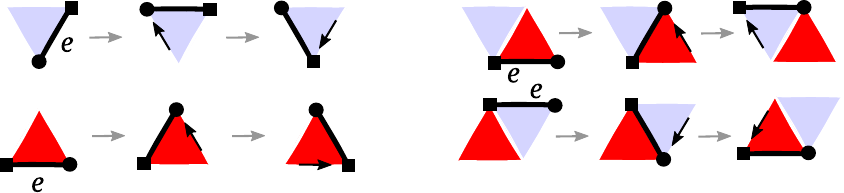}
    \caption{(Left) The second flip of the edge $e$ does not counteract the first flip of $e$. (Right) The second flip of $e$ counteracts the first one.}
    \label{fig:following}
\end{figure}

Let $e$ be an edge of $G$.  In a flip sequence, assume that flip~$i$ flips an end-vertex of $e$, that flip~$j$ flips the other end-vertex of $e$, and that the end-vertices of $e$ are not flipped between $i$ and $j$. We say that flip $j$ \emphdef{counteracts} flip $i$ if the image of $e$ is rotated clockwise by $i$ and counter-clockwise by $j$, or if it is rotated counter-clockwise by $i$ and clockwise by $j$.   See \autoref{fig:following}. 

\begin{lemma}\label{following 1}
Let $F$ be a flip sequence, and let $i<j$ be two flips of the same vertex~$v$ of~$G$, such that no flip of~$v$ appears between $i$ and~$j$.  Then we have the following properties:
\begin{itemize}
\item Every neighbor of~$v$ is flipped exactly once between $i$ and~$j$;
\item if every such flip counteracts flip~$i$, then $F_{i \to j+1}|v$ is a $3_r$-turn; otherwise, it is either a $1_r$-turn or a $-1_r$-turn.
\end{itemize}
\end{lemma}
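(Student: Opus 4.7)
The plan is to establish the two bullets in turn. For the first bullet, Lemma~\ref{alternating flips} would do most of the work: applying it to the two flips of~$v$ at times~$i$ and~$j$, every neighbor~$w$ of~$v$ gets flipped at least once in between. Conversely, if some neighbor were flipped twice between $i$ and $j$, applying Lemma~\ref{alternating flips} to those two flips would force a flip of~$v$ between them, contradicting the hypothesis. So each neighbor is flipped exactly once.

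For the second bullet, I would first set up the local geometry at time~$i$. Writing $x$ for the image of~$v$ just before flip~$i$, I label the $2_r$-wedge used by flip~$i$ so that its two boundary edges at~$x$ are $e_1 = xa$ (directed into~$x$) and $e_2 = xb$ (directed out of~$x$), the middle edge is $\mu = xc$, and the two triangles are $T_1 = xac$ (red) and $T_2 = xcb$ (blue). Flip~$i$ moves~$v$ to $c = v_{i+1}$, so the first edge of $F_{i\to j+1}|v$ is $xc$. I partition the neighbors of~$v$ into $W_a$ (those at~$a$ just before flip~$i$) and $W_b$ (those at~$b$); both are nonempty, since flip~$i$ requires both $xa$ and $xb$ to be occupied. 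After flip~$i$, the edge $vw$ is at position $ca$ at~$c$ for $w\in W_a$ and at position $cb$ for $w\in W_b$.

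Next, I would classify each neighbor's flip. For $w\in W_a$, its flip uses a $2_r$-turn at~$a$ having $\{a,c\}$ as one of the two edges, and the two possible orientations yield: \emph{Case~A}, where the middle edge is $av_i$ and~$w$ moves to~$v_i$; and \emph{Case~B}, where the middle edge is $a\beta$ with~$\beta$ being one counter-clockwise step of $ac$ at~$a$, and $w$ moves to~$\beta$. Transferring orientation across the shared edge~$ac$ shows that Case~A shifts the position of~$vw$ at~$c$ from $ca$ to $cv_i$ (one CCW step at~$c$), while Case~B shifts it to $c\beta$ (one CW step at~$c$). Since flip~$i$ rotates~$vw$ by one CCW step at~$a$, Case~A does not counteract flip~$i$, whereas Case~B does. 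A symmetric analysis for $w\in W_b$ swaps the roles of Case~A and Case~B; in all situations, a neighbor fails to counteract precisely when it migrates to~$v_i$.

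Finally, I would enumerate the configurations at~$c$ just before flip~$j$. Since flip~$j$ requires the edges of~$\hat G$ at~$v$ to occupy exactly two positions at~$c$ forming a $2_r$-turn, and since a single occupied position would enable a shortening (excluded in a flip sequence), the only possible position sets are (I)~$\{cv_i,c\alpha_b\}$, (III)~$\{c\beta,c\alpha_b\}$, and (IV)~$\{c\beta,cv_i\}$, where $\alpha_b$ denotes one CCW step of $cb$ at~$c$; splitting $W_a$ or $W_b$ between their two cases would either reduce to one of these same sets or force three distinct positions at~$c$, contradicting the existence of flip~$j$. Using the clockwise ordering $\dots,c\alpha_b,cb,cv_i,ca,c\beta,\dots$ at~$c$ together with the alternating face colors, a direct verification shows that (I) and (IV) form $2_r$-turns with middle edges~$cb$ and~$ca$, so $v_{j+1}\in\{a,b\}$ and the middle turn of $F_{i\to j+1}|v$ is $-1_r$ or $1_r$, respectively. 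Configuration (III), which is exactly the case in which every neighbor counteracts flip~$i$, forms a $2_r$-turn only when~$c$ has the minimum possible degree six in~$T$: in that hexagonal neighborhood the middle edge of the turn is opposite to $cv_i$ at~$c$, and the turn of $F_{i\to j+1}|v$ at~$c$ is then $3_r$. The main technical obstacle is precisely this geometric bookkeeping --- tracking rotation senses at both ends of an edge, keeping face colors consistent under the bipartite assumption, and tracing how the sides of a shared edge transfer between its two endpoints.
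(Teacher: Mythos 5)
Your proof is correct and follows the same approach as the paper, which for the second bullet simply states that ``there are three cases depicted in Figure~\ref{fig:flip turns}'' and relies entirely on that picture. For the first bullet you apply Lemma~\ref{alternating flips} exactly as the paper does (once to the two flips of~$v$ to get ``at least once,'' once to two hypothetical flips of a neighbor to get ``at most once''). For the second bullet, your case analysis --- partitioning neighbors into $W_a$ and $W_b$, classifying each flip as going to~$v_i$ (not counteracting) or away from it (counteracting), and observing that the resulting occupancy at~$c$ can only be one of three two-element sets because a single position or three positions would preclude flip~$j$ --- is a faithful unpacking of what Figure~\ref{fig:flip turns} depicts, and your orientation bookkeeping checks out. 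The one genuinely useful thing you add beyond what is written in the paper is making explicit that configuration~(III) (all flips counteract) can form a $2_r$-turn only when the image vertex has degree exactly six, so that the $3_r$-turn conclusion carries a hidden degree-six constraint that is automatically forced by the hypothesis that flip~$j$ occurs; this is visible in the figure but not stated. No gaps.
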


\begin{cfigure}
    \centering
    \includegraphics{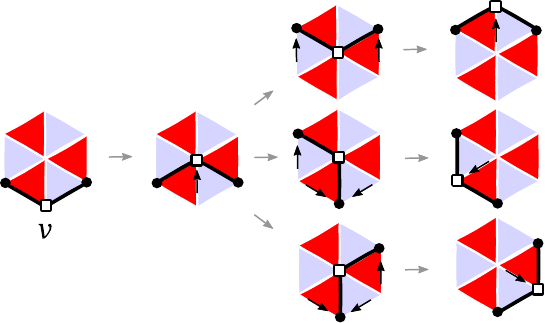}
    \caption{In a flip sequence, between two contiguous flips of a vertex $v$, the neighbors of $v$ are flipped once.}
    \label{fig:flip turns}
\end{cfigure}

\begin{proof}
By \autoref{alternating flips}, every neighbor of $v$ is flipped exactly once between $i$ and $j$, so there are three cases  depicted in \autoref{fig:flip turns}.
\end{proof}

\begin{lemma}\label{following 2}
Let $f_0,\ldots,f_p$ be a flip sequence.  If $i < j$ are flips of distinct adjacent vertices $v$ and $w$ respectively, and if no flip between $i$ and $j$ flips a neighbor of $w$, then flip $j$ counteracts flip~$i$.
\end{lemma}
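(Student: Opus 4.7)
My proof of Lemma~\ref{following 2} would identify, for each of the two flips, the triangular face of $T$ that the image of $e$ sweeps, and then exploit the fact that these two triangles lie on opposite sides of a common edge to force the two rotations to be opposite. Because $v$ is a neighbor of $w$, the hypothesis that no neighbor of $w$ is flipped strictly between $i$ and $j$ prevents $v$ from being flipped in that interval, and $w$ itself is flipped only at step $j$. Consequently the image $\hat f(e)$ does not change strictly between the two flips. Let $a, a'$ denote the image of $v$ before and after flip $i$, let $b, b'$ denote the image of $w$ before and after flip $j$, and let $e_1, e_1', e_1''$ be the successive images of $e$; these are edges of $T^1$ with endpoints $(a,b)$, $(a',b)$, $(a',b')$ respectively.

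The key geometric observation is that flip $i$ continuously deforms $e_1$ into $e_1'$ through the triangular face $T_i$ of $T$ with vertices $a, a', b$: this is indeed a face of $T$, being one of the two triangles composing the quadrilateral swept by the flip at $a$. Similarly, flip $j$ sweeps the triangle $T_j$ with vertices $a', b, b'$. Both $T_i$ and $T_j$ are faces adjacent to the edge $e_1'$ (the edge joining $a'$ and $b$), and because $T$ is a triangulation of an orientable surface, each edge has exactly two adjacent faces, one on each side. Hence $T_i$ and $T_j$ lie on opposite sides of $e_1'$ in~$T$.

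To conclude I would relate which side of $e_1'$ the swept triangle lies on to the rotation direction of the corresponding flip at its pivot vertex. Fix an orientation of~$S$, and for each endpoint $u \in \{a', b\}$ of $e_1'$, call the ``left side at $u$'' the side of $e_1'$ immediately counter-clockwise of $e_1'$ outgoing from $u$. Because the outgoing directions of $e_1'$ at $a'$ and at $b$ are opposite along $e_1'$, the left side at $a'$ and the left side at $b$ refer to opposite sides of $e_1'$ in~$T$. Combined with the opposite-sidedness of $T_i$ and $T_j$, this forces exactly one of two situations: either (i) $T_i$ is left at $b$ and $T_j$ is left at $a'$, or (ii) $T_i$ is right at $b$ and $T_j$ is right at $a'$. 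In case (i), the edge $e_1$---the other side of $T_i$ at $b$---is the next edge counter-clockwise of $e_1'$ around $b$, so flip $i$, which rotates the image of $e$ from $e_1$ to $e_1'$ around the pivot $b$, turns clockwise; symmetrically, $e_1''$ is the next edge CCW of $e_1'$ around $a'$, so flip $j$ rotates counter-clockwise at $a'$. Case (ii) is symmetric, with the roles of CW and CCW swapped. In both cases the two rotations are opposite, so flip $j$ counteracts flip $i$. The main obstacle I anticipate is the careful bookkeeping between the \emph{global} two-sidedness of $e_1'$ in $T$ and the \emph{local} left/right convention at its two endpoints; a concrete local picture is essential to verify that the orientation reversal at the two endpoints interacts with the opposite-sidedness of $T_i$ and $T_j$ in exactly the way needed to force the rotations to disagree.
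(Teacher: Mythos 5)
The core step of your argument—``Both $T_i$ and $T_j$ are faces adjacent to the edge $e_1'$... Hence $T_i$ and $T_j$ lie on opposite sides of $e_1'$''—is a non-sequitur. An edge of $T$ has exactly two incident faces, and $T_i$, $T_j$ are each one of them, but nothing so far prevents them from being the \emph{same} face. Concretely, $T_i$ has vertex set $\{a,a',b\}$ and $T_j$ has vertex set $\{a',b,b'\}$; they coincide precisely when $b'=a$, i.e., when flip $j$ moves $w$ to the very vertex that $v$ occupied before flip $i$. In that situation, working out the local picture, both flips rotate the image of $e$ in the \emph{same} direction (both clockwise or both counter-clockwise), which is exactly the non-counteracting scenario the lemma must rule out. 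So your dichotomy into cases (i) and (ii) silently excludes the one case that matters.

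What actually forbids $b'=a$ is not a purely local geometric fact, but the hypothesis---which your argument never uses---that $f_0,\ldots,f_p$ is a \emph{flip sequence}, i.e., that no shortening or balancing can be applied at any step. This is where the paper's proof goes by contradiction: if flip $j$ did not counteract flip $i$, then in $f_i$ the edges of $\hat G$ emanating from $w$ would all map to three consecutive directed edges of $T^1$ around $f_i(w)$, with the middle one present, so $f_i$ would admit a shortening move, contradicting the definition of a flip sequence. Your proof would be complete if you added an argument ruling out $T_i = T_j$, and the only natural way to do so is precisely this no-shortening observation; without it the proof does not go through.
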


\begin{cfigure}
    \centering
    \includegraphics{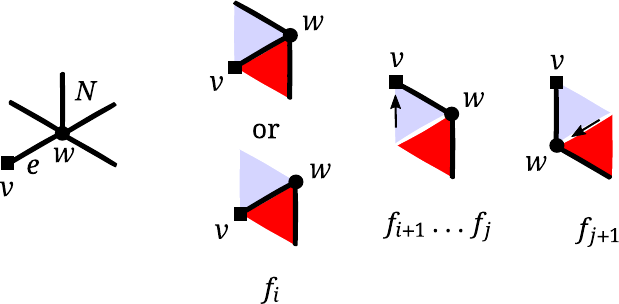}
    \caption{(Left) In the proof of \autoref{following 2}, the vertices $v$ and $w$, the edge $e$, and the set $N$ of edges incident to $w$. (Right) If flips $i$ and $j$ both rotate the image of $e$ clockwise, then $f_i$ can be shortened.}
    \label{fig:counteract}
\end{cfigure}

\begin{proof}
  Assume, for the sake of a contradiction, that flip $j$ does not counteract flip $i$. See \autoref{fig:counteract}. Let $e$ be the edge of $G$ between $v$ and $w$, directed from $v$ to~$w$. Assume that flips $i$ and $j$ rotate the image of $e$ clockwise, the other case being similar.  Let $N$ be the directed edges of~$G$ with source~$w$.

  We look at the situation in~$f_j$, and thus just before flip~$j$.  We have that $f_j$ maps~$N$ to two directed edges $a$ and~$b$ of~$T^1$ such that the reversal of~$a$, followed by~$b$, make a $2_r$-turn in~$T$.  Since $j$ rotates $e$ clockwise, $f_j(e)$ is the reversal of~$b$.
  
  Note that $w$ is not flipped between flips $i$ and~$j$ (because otherwise $v$, a neighbor of~$w$, would also be flipped between flips $i$ and~$j$, by \autoref{alternating flips}).  Since also no neighbor of $w$ is flipped between flips $i$ and $j$, we have $f_{i+1}(N) = f_j(N) = \{a,b\}$.  Since $i$ rotates the image of $e$ clockwise, $f_i(e)$ is the edge in the middle of the $2_r$-turn formed by $a$ and $b$ (directed towards the image of~$w$).  Thus, $f_i(N)$ is included in a set of three consecutive directed edges with source $f_i(w)$, and contains the middle directed edge, and so $f_i$ can be shortened, contradicting the fact that we have a flip sequence.
\end{proof}

%%%%%%%%%%%%%%%%%%%%%%%%%%%%%%%%%%%%%
\subsection{Proof of \autoref{harmonization theorem}}

In this section we prove \autoref{harmonization theorem}. The proof follows from a few definitions and lemmas.

\begin{lemma}\label{not flipping the root}
Assume that $G$ is connected and has $q$ vertices.  Let $r$ be a vertex of~$G$, and let $F$ be a flip sequence of~$G$ that never flips~$r$.  Then $F$ is composed of $O(q^2)$ flips.
\end{lemma}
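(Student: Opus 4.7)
The plan is to use Lemma~\ref{alternating flips} in a BFS-style induction rooted at $r$. For each vertex $u$ of $G$, let $d(u)$ denote the graph distance from $u$ to $r$ in $G$; since $G$ is connected and has $q$ vertices, $d(u) \leq q-1$. I will show by induction on $d(u)$ that $u$ is flipped at most $d(u)$ times in $F$. Summing over all vertices then gives $\sum_{u \in V(G)} d(u) \leq q(q-1) = O(q^2)$, as desired.

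For the base case, $d(r) = 0$, and $r$ is never flipped by hypothesis. For the inductive step, let $u$ be a vertex with $d(u) = d \geq 1$. By definition of BFS distance, $u$ has a neighbor $w$ with $d(w) = d-1$. By the induction hypothesis, $w$ is flipped at most $d-1$ times in $F$. Now Lemma~\ref{alternating flips} implies that between any two consecutive flips of $u$, the vertex $w$ must be flipped at least once. Hence if $u$ were flipped $k$ times, $w$ would be flipped at least $k-1$ times, yielding $k-1 \leq d-1$, i.e., $k \leq d$.

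No obstacle is anticipated here, since the statement is a direct combinatorial consequence of Lemma~\ref{alternating flips} and does not interact with the subtler parts of the flip/shortening/balancing machinery. (In particular, the bound has no dependence on the triangulation $T$, only on the combinatorics of the graph $G$ and the restriction that $r$ is never flipped.) Note that the proof only uses the weak form of a flip sequence in which consecutive terms are related by flips; the hypothesis that no shortening or balancing applies along the way is not needed for this specific lemma.
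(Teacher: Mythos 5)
Your proof is correct and follows exactly the same approach as the paper: bound the number of flips of each vertex by its distance from $r$, using Lemma~\ref{alternating flips} for the inductive step, and sum over all vertices. The paper states this more tersely (two sentences); you have simply spelled out the induction in full.
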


\begin{proof}
In $G$ every vertex is at distance less than $q$ from $r$. By \autoref{alternating flips} every vertex at distance $i \geq 0$ from $r$ is flipped at most $i$ times.
\end{proof}

\begin{figure}
    \centering
    \includegraphics{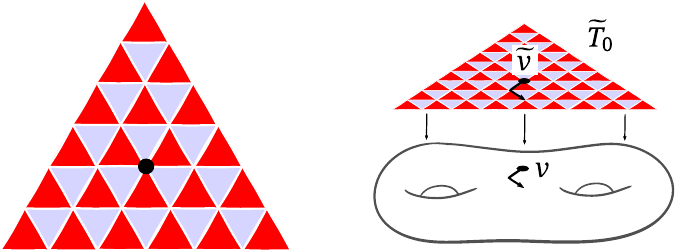}
    \caption{(Left) Flat zone of size six, and its central vertex. (Right) In the proof of \autoref{flat zone}, if there is a flat zone $\widetilde T_0$ of size nine in the universal covering triangulation of $T$, then the central vertex $\widetilde v$ of $\widetilde T_0$ projects to a vertex $v$ on $T$ such that every vertex of $T$ at distance two or less from $v$ has degree six.}
    \label{fig:flat zone}
\end{figure}

In a reducing triangulation $T$, a \emphdef{flat zone} of size $m' \geq 1$ is a subtriangulation $T_0$ of $T$ isomorphic to the subdivision of a triangle depicted in \autoref{fig:flat zone}, in which the sides of $T_0$ have length $m'$.

\begin{lemma}\label{flat zone}
  Every flat zone of the universal covering triangulation of $T$ has size less than $3(m+1)$, where $m$ is the number of edges of~$T^1$.
\end{lemma}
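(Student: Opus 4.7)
\emph{Plan.} I would argue by contradiction. Assume there is a flat zone $\widetilde T_0 \subset \widetilde T$ of size $m' \ge 3(m+1)$; I will show that this forces every vertex of $T$ to have degree exactly six. In a triangulation where every vertex has degree six, $2E = 6V$ together with $3F = 2E$ gives $V - E + F = 0$, i.e.\ genus one, contradicting the standing hypothesis that $S$ is not a torus.

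\emph{Key intermediate step.} The central claim is: if $\widetilde T_0$ has size $m' \ge 3(m+1)$, then one can select a vertex $\widetilde v \in \widetilde T_0$ such that every vertex of $\widetilde T$ within graph distance $m$ of $\widetilde v$ is an \emph{interior} vertex of $\widetilde T_0$, hence has degree six in $\widetilde T$. I would pick $\widetilde v$ at graph distance at least $m+1$ from each of the three sides of $\widetilde T_0$; such a vertex exists because $m' \ge 3(m+1)$ (in a triangular-lattice parameterization with corners of the flat zone at $(0,0)$, $(m',0)$, $(0,m')$, the vertex $(m+1,m+1)$ sits at graph distance exactly $m+1$ from each of the three sides, inside $\widetilde T_0$). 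The claim then follows by induction along a shortest $\widetilde T$-path issued from $\widetilde v$: an interior vertex of $\widetilde T_0$ already has six triangles packed around it inside $\widetilde T_0$, so the six $\widetilde T_0$-neighbors exhaust its neighborhood in $\widetilde T$ as well. Consequently, a shortest $\widetilde T$-path from $\widetilde v$ cannot leave $\widetilde T_0$ before it touches a boundary vertex of $\widetilde T_0$, and the choice of $\widetilde v$ prevents that from happening within $m$ steps (all vertices visited remain interior by induction, and are at $\widetilde T_0$-distance at most $m$ from $\widetilde v$).

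\emph{Finishing.} Let $v := \pi(\widetilde v) \in T$. Any vertex of $T$ at distance at most $m$ from $v$ lifts, along a shortest $T$-path, to a vertex of $\widetilde T$ at distance at most $m$ from $\widetilde v$; that lift has degree six by the intermediate claim, and so does its projection since $\pi$ is a local homeomorphism. Since $T$ is a triangulation of a closed surface with $m$ edges, Euler's formula bounds $V \le m/3 + 2$, hence the diameter of $T^1$ is at most $V - 1 \le m$. Every vertex of $T$ therefore has degree six, closing the contradiction. The main obstacle is the inductive argument in the intermediate step: the proof hinges on the asymmetry that an interior vertex of the flat zone cannot acquire additional neighbors in the ambient $\widetilde T$, whereas a boundary vertex typically can.
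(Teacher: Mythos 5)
Your proposal is correct and takes essentially the same approach as the paper: pick the central vertex of the hypothetical flat zone, argue that every vertex of $T$ has a lift in the interior of the flat zone (hence degree six), and invoke Euler's formula to contradict the non-torus assumption. You fill in the two steps the paper leaves implicit — the diameter bound on $T^1$ and the inductive argument that a ball of radius $m$ around $\widetilde v$ stays inside the flat zone — both of which are accurate.
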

The key property used in the proof is that, as we have assumed, $S$ is not a torus.  This is actually the only place where this assumption is used.  (Since $T$ is a reducing triangulation, $S$ cannot be a sphere anyway.)
\begin{proof}
Assume the existence of a flat zone $\widetilde T_0$ of size $3(m+1)$ in the universal covering triangulation of $T$. See \autoref{fig:flat zone}. The central vertex of $\widetilde T_0$ is at distance $m+1$ from the boundary of $\widetilde T_0$. So every vertex of $T$ admits a lift in the interior of $\widetilde T_0$. Then every vertex of $T$ has degree six, so $S$ is a torus by Euler's formula, contradicting our assumption.
\end{proof}

A flip sequence $F$ is \emphdef{$k$-forward}, $k \geq 1$, if some vertex $v$ of $G$ is such that $F|v$ contains a subwalk of length~$k$ that makes only $3_r$-turns.

\begin{lemma}\label{straight implies flat zone}
Let $F$ be a flip sequence of homomorphisms $G \to T^1$. If $F$ is $k$-forward for some $k \geq 1$, then the universal covering triangulation of $T$ has a flat zone of size $k$.
\end{lemma}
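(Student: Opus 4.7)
My plan is to lift the flip sequence to the universal cover $\widetilde T$ of $T$: the length-$k$ subwalk of $F|v$ with only $3_r$-turns becomes a walk $\widetilde v_0, \widetilde v_1, \ldots, \widetilde v_k$ in $\widetilde T^1$, and my goal is to exhibit a flat zone of size $k$ in $\widetilde T$ surrounding this walk.

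First I would analyze the flip dynamics around each $\widetilde v_j$ using Lemma~\ref{following 1}. Between two consecutive flips of $v$ (the one moving $v$ from $\widetilde v_{j-1}$ to $\widetilde v_j$, and the next one, moving it to $\widetilde v_{j+1}$), every neighbor of $v$ in $G$ is flipped exactly once, and each such flip counteracts the preceding flip of $v$, since the overall walk makes a $3_r$-turn at $\widetilde v_j$. Just after the first of these two flips of $v$, the edges of $G$ incident with $v$ terminate at the two vertices $w_1, w_2$ adjacent both to $\widetilde v_{j-1}$ and to $\widetilde v_j$; in the rotation around $\widetilde v_j$, these two endpoints occupy positions $-1$ and $+1$ measured from the direction $(\widetilde v_j, \widetilde v_{j-1})$. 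For the next flip of $v$ to realize the required $3_r$-turn at $\widetilde v_j$ (moving $v$ to $\widetilde v_{j+1}$, which sits at position $+3$), the edges at $v$ must be repositioned so that they terminate at positions $+2$ and $+4$.

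The key observation is then that a single flip of a neighbor $w$ of $v$ can move the endpoint $f(w)$ only to a vertex adjacent in $\widetilde T$ to its current location, and, since $v$ itself is not flipped in this interval, that new location must also be adjacent to $\widetilde v_j$ in order for the image of the edge $(v,w)$ to remain a single edge. Thus each such neighbor-flip shifts its endpoint by exactly one click in the rotation around $\widetilde v_j$. The endpoint at position $-1$ can therefore reach position $+4$ in a single flip only if positions $-1$ and $+4$ are one click apart at $\widetilde v_j$, which forces $\widetilde v_j$ to have degree exactly $6$, so that position $+4$ is identified with position $-2$. Applied to every $j = 1, \ldots, k-1$, this shows that each intermediate vertex of the walk has degree~$6$; a symmetric analysis applied to the neighbors' own flips (each being itself a $2_r$-flip that must be followed by a restoration of the $2_r$-configuration at $v$) propagates the degree-$6$ property to vertices further from the walk, widening a flat band around it.

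Iterating over $j$, I would piece these local degree-$6$ constraints together into a triangular patch of the Euclidean triangular tiling of side $k$, centered on the walk $\widetilde v_0, \ldots, \widetilde v_k$. The main obstacle is the careful bookkeeping of the chains of neighbor-flips (and of their neighbors' flips) between successive flips of $v$, together with the verification that the counteracting conditions of Lemma~\ref{following 1} combine to force the full two-dimensional Euclidean structure rather than merely straightness of the walk; in particular, one must show that the lateral extent of the flat zone grows in step with its length along the walk, so that after $k$ rounds the entire triangular region of side $k$ is confirmed flat.
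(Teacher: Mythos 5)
Your opening analysis is correct and quite close in spirit to the paper's proof: lifting to $\widetilde T$, invoking Lemma~\ref{following 1} to deduce that each neighbor is flipped exactly once between consecutive flips of $v$ and counteracts, the observation that a homomorphism constraint forces each such flip to shift the endpoint by exactly one click around $\widetilde v_j$, and the resulting conclusion that $\widetilde v_j$ has degree exactly six. (Your case analysis leaves implicit that $+1\to 0$ and $-1\to 0$ are ruled out, but that follows from the same adjacency count, so the degree-$6$ claim is genuinely established.)

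The genuine gap is the last part, which you yourself flag as ``the main obstacle'': passing from ``the vertices along a straight line segment have degree six'' to ``there is a flat zone of size $k$'' — a full two-dimensional triangular patch. Degree six along a one-dimensional walk does not by itself produce the required subcomplex, and the paragraph you offer about ``a symmetric analysis applied to the neighbors' own flips'' and ``piecing these local degree-$6$ constraints together'' is a plan rather than an argument. The paper closes exactly this gap by a clean induction on $k$: it identifies the specific neighbor $w$ with $\widetilde f(w)=x$ (the apex of the triangle to the left of the first flipped edge), shows via Lemma~\ref{following 1} and the degree-$\ge 6$ hypothesis that $\widetilde F|w$ itself makes only $3_r$-turns and has length $k-1$, obtains by the inductive hypothesis a flat zone of size $k-1$ on the left of $\widetilde F|w$, and then appends the strip between $\widetilde F|w$ and $\widetilde F|v$ to form a flat zone of size $k$. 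To repair your proof you would need to make precisely this induction explicit — in particular, proving that the parallel walk of the neighbor is itself a $3_r$-walk, and that the flat zone grows one layer per inductive step — rather than propagating degree-$6$ constraints pointwise.
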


\begin{proof}
Let $\widetilde T$ be the universal covering triangulation of $T$. Lift $F$ to a sequence $\widetilde F$ of homomorphisms $\widetilde G \to \widetilde T^1$, where $\widetilde G$ is a covering space of $G$. Assume without loss of generality (up to restricting to a smaller flip sequence) that some vertex~$v$ of~$\widetilde G$ is such that $\widetilde F|v$ has length~$k$ and makes only $3_r$-turns, and that the first and last flips of $\widetilde F$ are flips of $v$.  We shall prove by induction that $\widetilde F|v$ is a side of a flat zone in $\widetilde T$ lying to the left of $\widetilde F |v$. The case $k = 1$ is clear since a flat zone of size one is just a triangle of $\widetilde T$. So assume $k \geq 2$. Consider the first flip of $v$ in $\widetilde F$, and let $f : \widetilde G \to \widetilde T^1$ be the first map in $\widetilde F$. Let $e$ be the directed edge of $\widetilde T$ along which the flip is performed ($e$ is the first directed edge of $\widetilde F |v$). Consider the triangle $t$ of $\widetilde T$ on the left of $e$, and the vertex $x$ of $t$ that is not incident to $e$.

There is a neighbor $w$ of~$v$ in $\widetilde G$ such that $\widetilde f(w)=x$. By \autoref{alternating flips} there is in $\widetilde F$ a flip of $w$ in-between any two flips of $v$.  Recall that $v$ makes only $3_r$-turns; by \autoref{following 1}, this implies that every flip of $w$ counteracts the previous flip of $v$.  It follows that $\widetilde F |w$ is a walk of length $k-1$ that makes only $-3_r$-turns, running parallel to $\widetilde F |v$; by \autoref{following 1} again, these walks can only be $3_r$, $1_r$, or $-1_r$-turns, and because the degree of each vertex of~$T^1$ is at least six, the only possibility is that $\widetilde F|w$ makes only $3_r$-turns.  By induction, there is a flat zone of size $k-1$ on the left of $\widetilde F|w$, so there is a flat zone of side $k$ on the left of $\widetilde F |v$.
\end{proof}

We now need some definitions.  Consider an ordering $v_0,\ldots,v_{q-1}$ of the vertices of a graph.  We say that $v_i$ is a \emphdef{lowpoint} if every neighbor of~$v_i$, except perhaps~$v_0$, is higher than~$v_i$ in the ordering.  The ordering is \emphdef{proper} if (1) each vertex but~$v_0$ is adjacent to a lower vertex, and (2) the set of lowpoints has the form $\{0,\ldots,k\}$ for some~$k$.
A flip sequence $f_0, \dots, f_p$ is \emphdef{proper} if there is a proper ordering $v_0, \dots, v_{q-1}$ of the vertices of $G$ such that for every $0 \leq i < p$ the vertex flipped from $f_i$ to $f_{i+1}$ is $v_{i\bmod q}$.

\begin{lemma}\label{we go forward}
  Assume that $G$ is connected and has $q$ vertices. Let $F$ be a proper flip sequence of $kq+2$ homomorphisms $G \to T^1$,  for some integer~$k$.  Then $F$ is $k$-forward.
\end{lemma}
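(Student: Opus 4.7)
The plan is to show that $F|v_0$, which has length $k+1$ because $v_0$ is flipped at exactly positions $0, q, 2q, \ldots, kq$ (by the proper structure of $F$ together with its $kq+1$ flips), consists entirely of $3_r$-turns. Any length-$k$ subwalk of $F|v_0$ then witnesses $k$-forwardness.

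First I would apply Lemma~\ref{following 1} at each cycle $c \in \{0, \ldots, k-1\}$: the turn of $F|v_0$ between the consecutive flips at positions $cq$ and $(c+1)q$ is a $3_r$-turn if and only if every neighbor $v_i$ of $v_0$, flipped at position $cq+i$, counteracts the flip of $v_0$ at position $cq$. It then suffices to verify this counteraction for all $c$ and all neighbors $v_i$ of $v_0$.

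The second step is to exploit the following structural property of proper orderings: every lowpoint $v_i$ with $i \geq 1$ must be adjacent to $v_0$ and can have no other lower neighbor. Indeed, condition~(1) of a proper ordering forces $v_i$ to have some lower neighbor, while the defining condition of a lowpoint excludes any non-$v_0$ lower neighbor. Consequently, for any lowpoint neighbor $v_i$ of $v_0$, none of $v_1, \ldots, v_{i-1}$ is a neighbor of $v_i$, and Lemma~\ref{following 2} directly yields the required counteraction.

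The third and most delicate step — which I expect to be the main obstacle — is handling the non-lowpoint neighbors of $v_0$. For such a neighbor $v_i$, there exists a non-$v_0$ lower neighbor $v_j$ of $v_i$ with $0 < j < i$, and this intermediate flip prevents a direct application of Lemma~\ref{following 2} to the pair $(v_0, v_i)$. My plan is to proceed by induction on $i$ within each cycle, propagating counteractions along chains of adjacent vertices in the ordering and leveraging crucially the hypothesis that no shortening or balancing move applies at any step of $F$: a failure of the counteraction at $v_i$ would position the images of the edges incident to some intermediate vertex in $T^1$ into a configuration in which either three consecutive edges form the support of a shortening, or a simple $3$-turn directed cycle becomes pullable on one side and unpullable on the other, making a balancing applicable. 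Either situation would contradict the flip-sequence assumption. The combinatorial bookkeeping of the rotational configurations of image edges — tracked through each of the intermediate flips that have reconfigured the local picture at $v_i$ — is the technically heavy point.

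Once every counteraction is established, Lemma~\ref{following 1} implies that all $k$ turns of $F|v_0$ are $3_r$-turns, so $F|v_0$ is a walk of length $k+1$ with only $3_r$-turns. In particular its initial length-$k$ segment is a subwalk of length $k$ making only $3_r$-turns, so $F$ is $k$-forward.
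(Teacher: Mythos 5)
Your first two steps are correct and match the paper's opening case. You correctly observe that $v_0$ is flipped $k+1$ times, that Lemma~\ref{following 1} reduces each $3_r$-turn of $F|v_0$ to a counteraction statement for the neighbors of $v_0$, and that if a neighbor $v_i$ of $v_0$ is a lowpoint then no lower vertex other than $v_0$ is adjacent to $v_i$, so Lemma~\ref{following 2} applies directly. This is exactly the paper's ``$w_0$ is a lowpoint'' case (in which, because the lowpoints form an initial segment, \emph{every} neighbor of $v_0$ is a lowpoint).

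The gap is your third step, and it is not a technicality. You explicitly stop short of a proof: ``propagating counteractions along chains of adjacent vertices'' with the assertion that a failure ``would position the images \dots\ into a configuration in which \dots\ a shortening or balancing applies'' is a conjecture, not an argument. Two concrete problems. First, counteraction is a relation between flips of \emph{adjacent} vertices; the lower neighbor $v_j$ of a non-lowpoint $v_i$ need not be adjacent to $v_0$, so there is no chain of pairwise-adjacent vertices from $v_0$ to $v_i$ along which Lemma~\ref{following 2} can be iterated to transfer counteraction. Second, you are trying to prove a stronger statement than the lemma needs and than the paper establishes: in the non-lowpoint case the paper does \emph{not} show that all $k$ turns of $F|v_0$ are $3_r$; it only shows that the first $k-1$ turns are, which is why the conclusion is $k$-forward (not $(k+1)$-forward as in the lowpoint case). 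The paper's actual argument in this case is structurally different: starting from $v_0$ and repeatedly taking ``highest lower neighbor'', it builds a directed cycle $C$ through $v_0$ with the property that for each $w_i$ on $C$, the vertex $w_{i+1}$ is both the last neighbor of $w_i$ and the last vertex of $C$ flipped before $w_i$; this gives the hypothesis of Lemma~\ref{following 2} for consecutive flips of $C$-vertices. It then considers the length-two walks $W_i$ between consecutive flips of each $C$-vertex, proves a one-way invariance (if $W_i$ is a $1_r$-turn then so is $W_j$ for the next flip $j$ of a $C$-vertex), and derives a contradiction two periods later if any $W_i$ with $i\le(k-2)q$ is a $1_r$-turn (and symmetrically a $-1_r$-turn), whence these $W_i$ are all $3_r$-turns. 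That global, cycle-based monotonicity argument is what replaces your ``propagation along the ordering''; without it, or something equivalent, the proof does not go through.
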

\begin{figure}
    \centering
    \includegraphics{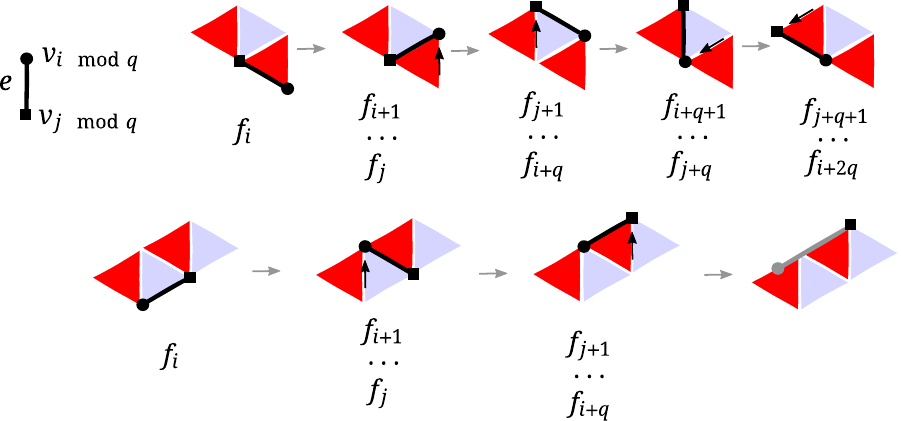}
    \caption{In the proof of \autoref{we go forward}, if $W_i$ makes a $1_r$-turn, then $W_j$ makes a $1_r$-turn.}
    \label{fig:flip follow twice}
\end{figure}

\begin{cfigure}
    \centering
    \includegraphics{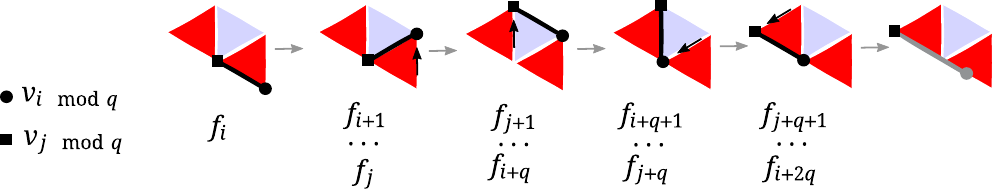}
    \caption{In the proof of \autoref{we go forward}, $W_i$ cannot make a $1_r$-turn.}
    \label{fig:flip contradiction}
\end{cfigure}

\begin{proof}
  Let $v_0,\ldots,v_{q-1}$ be the corresponding proper ordering of the vertices of~$G$.  Since $F$ is composed of $kq+1$~flips, the lowest vertex $v_0$ is both the first and the last vertex flipped in $F$, and $v_0$ is flipped $k + 1$ times in $F$, so $F|v_0$ has length $k+1$. Let $w_0$ be the highest neighbor of $v_0$. There are two cases. First assume that $w_0$ is a lowpoint.  Then, consider any neighbor~$x$ of~$v_0$.  Because $w_0$ is a lowpoint, $x$ is a lowpoint as well.  Thus, $v_0$ is the last neighbor of $x$ to be flipped in $F$ before a flip of $x$, and so every flip of $x$ counteracts the last flip of $v_0$ by \autoref{following 2}. That being true for every neighbor $x$ of $v_0$, \autoref{following 1} implies that $F|v_0$ makes only $3_r$-turns, and thus that $F$ is $(k+1)$-forward.

  Now assume that there is a neighbor of $w_0$ distinct from $v_0$ that is lower than $w_0$ in $L$.  Among the neighbors of $w_0$ lower than $w_0$, let $w_1$ be the highest. Among the neighbors of $w_1$ lower than $w_1$, let $w_2$ be the highest\dots, and so on until $w_{m-1} = v_0$ for some $m \geq 3$.  Then $w_{m-1}, \dots, w_0$ are the vertices of a directed cycle $C$ in $G$, in order, such that for every $i$ the vertex $w_{i+1}$ is both the last neighbor of $w_i$ flipped before $w_i$, and the last vertex of $C$ flipped before $w_i$, indices being taken modulo $m$.

  Let $I$ contain the flips $i \in \{0,\dots,(k-1)q\}$ whose vertex flipped, here $v_{i\bmod q}$, belongs to $C$. For every $i \in I$, the image of~$v_{i\bmod q}$ makes a walk $W_i$ of length two between $f_i$ and~$f_{i+q+1}$. Indeed, $v_{i\bmod q}$ is here flipped twice: in flips $i$ and $i+q$.

  We claim that for every two consecutive $i < j \in I$, if $W_i$ makes a $1_r$-turn, then $W_j$ also makes a $1_r$-turn. See \autoref{fig:flip follow twice}. Consider the edge $e$ between $v_{i \bmod q}$ and $v_{j \bmod q}$, and consider the flips $i$, $j$, $i+q$, and $j+q$. By definition no neighbor of $v_{j \mod q}$ is flipped between $i$ and $j$, nor between $i+q$ and $j+q$, so by \autoref{following 2} flip $j$ counteracts flip $i$, and flip $j+q$ counteracts flip $i+q$. Then flip $i$ cannot rotate the image of $e$ clockwise (as in the bottom part of \autoref{fig:flip follow twice}), for otherwise flip $j$ would rotate the image of $e$ counter-clockwise since it counteracts flip $i$, and then flip $i+q$ could not be such that $W_i$ makes a $1_r$-turn. So flip $i$ rotates the image of $e$ counter-clockwise (as in the top part of \autoref{fig:flip follow twice}). Then flip $j$ rotates the image of $e$ clockwise, since it counteracts flip $i$. Flip $i+q$ is also clockwise since $W_i$ makes a $1_r$-turn, and flip $j+q$ is counter-clockwise since it counteracts flip $i+q$. We proved that $W_j$ makes a $1_r$-turn.

  We use the claim immediately to prove by contradiction that for every $i \in I$, if $i \leq (k-2)q$, then $W_i$ does not make a $1_r$-turn. For otherwise, by our claim, every $l > i \in I$ is such that $W_l$ makes a $1_r$-turn. In particular, the smallest $j > i \in I$ is such that $W_j$ makes a $1_r$-turn, so the flips $i$, $j$, $i+q$, and $j+q$ are such as depicted in \autoref{fig:flip contradiction}. But then flip $i+2q$ cannot be such that $W_{i+q}$ makes a $1_r$-turn, which is a contradiction.

  The same arguments show that for every $i \in I$, if $i \leq (k-2)q$, then $W_i$ does not make a $-1_r$-turn. So $W_i$ makes a $3_r$-turn by \autoref{following 1}. In particular $F_{0 \to (k-1)q+1}|v_0$ makes only $3_r$-turns, so $F$ is $k$-forward.
\end{proof}

An ordering $v_0, v_1, \dots$ of the vertices of a digraph $D$ is \emphdef{monotonic} if every edge of $D$ is directed from a vertex $v_i$ to a vertex $v_j$ such that $i < j$.

\begin{lemma}\label{construct ordering}
  Let $D$ be a digraph of size $n$. If $D$ has no directed cycle and has a single source, then we can compute in $O(n)$ time an ordering of the vertices of $D$ that is both proper and monotonic.
\end{lemma}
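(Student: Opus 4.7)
The plan is to first characterize which vertices must be lowpoints in any monotonic ordering that starts at the unique source~$v_0$, and then to construct a monotonic ordering that places precisely those vertices in an initial block.

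First I would make the following key observation: in any monotonic ordering of~$D$, the in-neighbors of~$v_i$ (in the digraph sense) are exactly its lower directed predecessors, while its out-neighbors are higher. Consequently, the undirected neighbors of~$v_i$ that come strictly before~$v_i$ are exactly the in-neighbors of~$v_i$ in~$D$. Since $D$ is acyclic with a single source, every vertex distinct from~$v_0$ has at least one in-neighbor; hence, for $i > 0$, the vertex~$v_i$ is a lowpoint if and only if its unique in-neighbor in~$D$ is~$v_0$. Crucially, this characterization does not depend on the monotonic ordering chosen. Let $L$ be the set of vertices consisting of~$v_0$ together with all vertices whose unique in-neighbor is~$v_0$.

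The algorithm would then proceed in three steps: (i) identify the source~$v_0$ and compute~$L$ in one pass over the in-neighbor lists; (ii) place $v_0$ at position~$0$, followed by the vertices of $L \setminus \{v_0\}$ in arbitrary order; (iii) append a topological ordering of the subgraph of~$D$ induced by $V \setminus L$, computed for instance by Kahn's algorithm. Each step runs in $O(n)$ time, giving the claimed complexity.

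To conclude I would check the two required properties. For monotonicity: edges inside~$L$ can only originate at~$v_0$ (any other source in~$L$ would contradict the definition of~$L$ at its target); edges from $V \setminus L$ to $L \setminus \{v_0\}$ cannot exist by the same definition; edges from~$L$ to $V \setminus L$ go forward by the block construction; and edges within $V \setminus L$ respect the appended topological order. For properness: condition~(1) holds because every $v_i \neq v_0$ has some in-neighbor, which is a lower undirected neighbor; condition~(2) holds because the lowpoints are exactly the vertices of~$L$, which occupy positions $0, \ldots, |L|-1$. The only substantive step is the initial observation pinning down the lowpoint set purely in terms of~$D$; once that is in hand, the construction and its correctness are routine.
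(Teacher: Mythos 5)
Your proof is correct and essentially follows the same approach as the paper: identify the unique source $v_0$, put $v_0$ first, then place the set of vertices whose only in-neighbor is $v_0$ (which you call $L \setminus \{v_0\}$, and the paper calls $I_1$, the sources of $D \setminus \{v_0\}$), and finally append a topological ordering of the rest. The paper phrases the last step as a full layered decomposition $I_0, I_1, \ldots, I_m$ by repeatedly peeling off sources, but any topological sort of $V \setminus L$ works, as you note; your explicit observation that the lowpoint set of \emph{any} monotonic ordering is already determined by $D$ is a slightly more detailed way of justifying what the paper states directly.
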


\begin{proof}
  Without loss of generality, $D$ has at least two vertices. Let $v_0$ be the unique source of $D$. Since $D$ has no directed cycle, the vertex set of $D$ can be partitioned into sets $I_0, \dots, I_m$, $m \geq 1$, where $I_0 := \{v_0\}$, and where $I_k$ contains the sources of $D \setminus \bigcup_{i < k} I_i$ for every $1 \leq k \leq m$. Order arbitrarily each of the sets $I_0, \dots, I_m$, and concatenate them into an ordering $L$ of the vertices of $D$. Then $L$ is monotonic.  Moreover, each vertex but~$v_0$ is adjacent to a lower vertex, and $I_0\cup I_1$ is the set of lowpoints of~$L$, so $L$ is proper.
\end{proof}

Before proving \autoref{harmonization theorem} we detail how to detect, given a \emph{homomorphism} $f$, if a balancing move applies to $f$:

\begin{lemma}\label{find balancing}
Let $T$ be a reducing triangulation of size $m$. Let $G$ be a graph of size $n$, and let $f : G \to T^1$ be a homomorphism. In $O(m+n)$ time we can determine if there is a simple closed walk $C$ in $G$ such that $f \circ C$ makes only $3$-turns, is pulled left, and not right. In that case, both $C$ and the subgraph $G_0$ of $G$ spanned by the walks that follow $C$ are computed at the same time.
\end{lemma}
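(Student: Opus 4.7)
The plan is to reduce the search to finding a suitable cycle in an auxiliary digraph $H$ that encodes the 3-turn relation on the directed edges of $G$.

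First, I would preprocess $T$ in $O(m)$ time: for each directed edge $\vec{s}$ of $T^1$, compute its two 3-turn successors $\sigma_+(\vec{s})$ and $\sigma_-(\vec{s})$, obtained by rotating $\vec{s}$ by $+3$ and $-3$ around $\text{head}(\vec{s})$ in the rotation system of $T$. Then I would construct $H$ on the set of directed edges of $G$: conceptually there is an arc from $\vec{e}$ to $\vec{e}'$ whenever $\text{head}(\vec{e}) = \text{tail}(\vec{e}')$ and $f(\vec{e}') \in \{\sigma_+(f(\vec{e})), \sigma_-(f(\vec{e}))\}$, so that directed cycles of $H$ correspond to closed walks of $G$ making only 3-turns. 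To avoid a quadratic blow-up from parallel edges of $G$, I would represent $H$ implicitly by bucketing: at each vertex $v$ of $G$ and each directed edge $\vec{s}$ of $T^1$ with $\text{head}(\vec{s}) = f(v)$, I gather the in-bucket $I_v(\vec{s})$ and the out-bucket $O_v(\sigma_\pm(\vec{s}))$, and link them through a single dummy intermediate node per bucket pair; the resulting structure has $O(m+n)$ vertices and arcs.

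Second, I would compute the strongly connected components of $H$ in $O(m+n)$ time. A non-trivial SCC certifies the existence of a closed 3-turn walk in $G$, and within such an SCC I would extract a shortest cycle. The key structural claim is that this cycle projects to a simple closed walk $C$ in $G$: indeed, a walk making only (say) $3_r$-turns in $T^1$ is entirely determined by its first directed edge (it follows a single closed line $L$ in $T$), so if a vertex $v$ of $G$ were visited twice at positions differing by a multiple of the length of $L$, the walk could be split at $v$ into two shorter closed 3-turn walks, contradicting minimality; otherwise, the two visits correspond to distinct positions along $L$ (so $L$ self-intersects at $f(v)$), and I would either handle this case by a direct search inside the SCC or certify that no simple 3-turn cycle exists. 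Once $C$ is obtained, $G_0$ is computed by a BFS from $C$ in the implicit representation of $H$, propagating in both directions since the following walks may go back and forth on $C$. The pulled-left/pulled-right status is then determined by scanning, for each vertex $v$ appearing in $G_0$, the directed edges of $G$ emanating from $v$ and not lying in $G_0$: the $f$-image of each such edge leaves the line $f \circ C$ at $f(v)$ either on its left or on its right, and accumulating these classifications gives the answer.

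The main obstacle is the cycle-extraction step. Because the line $L = f \circ C$ can self-intersect in $T$, a shortest cycle of $H$ need not correspond to a simple cycle of $G$, and the splitting argument then fails at positions of $L$'s self-intersections. Handling this in linear time requires a careful case analysis on how the two visits of a repeated vertex attach to $L$, either producing a shorter simple candidate inside the same SCC or proving that no simple 3-turn cycle exists overall. A secondary difficulty is ensuring that the detection of "pulled left but not right" — a global property on all of $G_0$ — is done in a single linear-time pass compatible with the implicit representation of $H$.
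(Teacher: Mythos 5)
There is a genuine gap at exactly the spot you flag as the ``main obstacle,'' and it is the crux of the lemma, so the proof is incomplete. Your auxiliary digraph $H$ on directed edges encodes the 3-turn relation well enough that any directed cycle of $H$ yields a \emph{closed} walk in $G$ making only 3-turns, but nothing in your construction guarantees \emph{simplicity} of the projected walk $C$ when the line $f\circ C$ self-intersects in $T$. Your ``shortest cycle'' heuristic only splits cleanly when the two visits of a repeated vertex $v$ land on the same pair of directed edges of $T^1$; when they land on two distinct 3-turn pairs around $f(v)$, the two sub-walks obtained by splitting at $v$ are not 3-turn walks (the splicing of one visit's in-edge with the other visit's out-edge is not a 3-turn), so minimality of the cycle gives you nothing, and you have not supplied the ``careful case analysis'' you yourself say is needed. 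Worse, you have not shown that when a shortest cycle of $H$ projects to a non-simple walk, the whole SCC (or the whole graph) contains no simple 3-turn cycle, so your algorithm could return ``no'' incorrectly.

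The paper's proof sidesteps this entirely with a different decomposition. Rather than a digraph on directed edges, it builds a graph $G'$ by \emph{splitting each vertex} $v$ of $G$: one copy $v'$ of $v$ per $3_r$-turn pair $(a,b)$ around $f(v)$ that is actually hit by edges of $G$ at $v$, and each copy is marked red if some edge at $v$ maps to the right of the turn $(a^{-1},b)$, green if to the left only. It then seeks a connected component $G'_0$ of $G'$ with a simple directed cycle, a green vertex, and \emph{no red vertex}. The key observation, which your plan lacks, is that the ``no red vertex'' condition does double duty: it is exactly the ``not pulled right'' condition, \emph{and} it forces the projection $C$ of any simple directed cycle in $G'_0$ to be simple in $G$ (a self-intersection of $C$ at a vertex that is not a self-intersection of the lifted cycle would force some copy of that vertex to be red). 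So the simplicity problem, the ``pulled left / not pulled right'' test, and the computation of $G_0$ are all resolved by one linear-time connectivity scan on $G'$; there is no shortest-cycle extraction and no case analysis on self-intersections of the line. You should adopt that vertex-splitting construction, or else actually carry out the self-intersection case analysis you defer.
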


\begin{proof}
Without loss of generality, assume that we look for $C$ such that $f \circ C$ makes only $3_r$-turns (the $3_b$-turns case being symmetric). Consider the following algorithm. As a preliminary, build a graph $G'$ from $G$ by detaching every vertex $v$ of $G$ from its incident edges, and by re-attaching those edges to copies of $v$ as follows. Let $N$ contain the directed edges emanating from $v$ in $G$, whose basepoints have thus been detached from $v$. For every two directed edges $a$ and $b$ based at $f(v)$ in $T$ such that $(a^{-1}, b)$ makes a $3_r$-turn, consider all the directed edges in $N$  that are mapped to $a$ or $b$ (if any), and attach all their basepoints to a common copy $v'$ of $v$. Mark $v'$ with red if $f(N)$ contains a directed edge of $T$ on the right of $(a^{-1}, b)$. Mark $v'$ green if $f(N)$ does not contain any such directed edge, and if it contains a directed edge on the left of $(a^{-1}, b)$. 

The graph $G'$ \emph{projects} to $G$ in the sense that the edges of $G'$ are those of $G$, and every vertex of $G'$ corresponds to a vertex of $G$ (though not in a one-to-one manner). We say that a vertex $v'$ of $G'$ \emph{lifts} a vertex $v$ of $G$ if $v$ is the projection of $v'$.

Build the graph $G'$ in $O(m+n)$ time. Direct every edge $e'$ of $G'$ so that $f(e')$ sees red on its left. Then determine in $O(n)$ time if there is a connected component $G'_0$ of $G'$ that does not contain any red vertex, that contains a green vertex, and that contains a simple directed cycle $C'$. If there is none, then return that nothing was found. Otherwise, return the closed walk $C$ in $G$ to which $C'$ projects, and the subgraph $G_0$ of $G'$ to which $G'_0$ projects.

Let us now prove that this algorithm is correct. Every closed walk $C$ in $G$ whose image walk makes only $3_r$-turns is lifted by a directed cycle $C'$ in $G'$, and if $C$ is simple, then $C'$ is simple. Conversely, every simple directed cycle $C'$ in $G'$ projects to a closed walk $C$ that makes only $3_r$-turns. And if the connected component $G'_0$ of $C'$ contains no red vertex, then $C$ is simple; indeed, any self-intersection vertex of $C$ would either correspond to a self-intersection vertex of $C'$, which is impossible since $C'$ is simple, or otherwise it would correspond to red vertices in $C'$. We conclude with the claim that $G_0$ is the subgraph of $G$ spanned by the walks following $C$, and that $C$ is pulled left (resp. right) if and only if $G'_0$ contains a green (resp. red) vertex. Indeed, every walk $W$ that follows $C$ in $G$ lifts to a walk $W'$ based at some vertex of $C'$ in $G'_0$. Also, if $W$ can be extended by one edge into a walk that pulls $C$ to the left (resp. right), then the end-vertex $w'$ of $W'$ is marked green (resp. red). Reciprocally, any walk in $G'_0$ from $C'$ to a vertex $w'$ projects to a walk $W$ in $G$ that follows $C$. And if $w'$ is green (resp. red), then $W$ can be extended by one edge to pull $C$ to the left (resp. right).
\end{proof}

\begin{proof}[Proof of Theorem~\ref{harmonization theorem}]
  By \autoref{lem:simplicial}, we can assume that $f$ is simplicial.  Without loss of generality, we assume that $G$ is connected, for otherwise we could apply the algorithm to each connected component separately.  Making $f$ harmonious is done using balancings, shortenings, and flips.  Since balancings and shortenings decrease the length of the drawing, we give them priority over flips.  More precisely, the algorithm consists in applying the routine given below, which only performs flips, with the following important twist, left implicit in the description: \emph{whenever a balancing or a shortening is possible, we apply it and resume the routine from scratch}.  Recall that the simplicial map $f:G\to T^1$ factors as a homomorphism $\hat f:\hat G\to T^1$; it is convenient to express the routine in terms of~$\hat f$, since flips can be described at the homomorphism level.  Here is the routine:
  \begin{enumerate}
    \item Choose an arbitrary vertex~$r$ of~$\hat G$, and flip any vertex of $\hat G$ other than $r$, in any order, as long as possible.
    \item Direct $\hat G$ with the left-blue direction. If $\hat G$ has no directed cycle and has a single source, then do the following: apply \autoref{construct ordering} to build in $O(n)$ time a proper and monotonic ordering $v_0, \dots, v_{q-1}$ of the $q \geq 1$ vertices of $\hat G$; initialize $i := 0$.  Then, while it is possible to flip $v_{i\bmod q}$, flip it and increment~$i$.  (Some precisions: (a) we go to Step~3 as soon as $v_{i\bmod q}$ is not flippable; (b) The ordering is fixed during this entire step; if, after a flip, we update the direction of the edges to preserve the left-blue direction, it ceases to be monotonic.)
    \item Flip any vertex of $\hat G$ (even possibly $r$), in any order, as long as possible.
  \end{enumerate}
  
  If the algorithm terminates, then $f$ is harmonious by \autoref{no move implies harmonious}.  We now bound the number of flips of the routine, assuming that it is not interrupted by a balancing or a shortening. Step~1 does not flip $r$, so it consists of $O(n^2)$ flips by \autoref{not flipping the root}. Also, the flip sequence of Step~2 is proper, so it has length $O(mn)$ by Lemmas~\ref{flat zone}, \ref{straight implies flat zone}, and~\ref{we go forward}. Let $F$ be the flip sequence of Step~3.  For the sake of the analysis, we preserve the left-blue direction of the edges of~$\hat G$ after each flip (equivalently, at each flip of a vertex~$v$, we reverse the direction of the edges incident to~$v$).  We now prove that in all cases, some vertex of~$\hat G$ is not flipped in~$F$:
  \begin{itemize}
  \item If Step~2 was skipped because $\hat G$ has a directed cycle, then this cycle remains fixed by \autoref{oriented cycle that cannot be flipped};
  \item if Step~2 was skipped because $\hat G$ has a source $v$ distinct from $r$, then $v$ is a source that cannot be flipped, so $v$ is not flipped in Step~3 by \autoref{source that cannot be flipped};
  \item if Step~2 was executed, we claim that in Step~2, every attempt to flip a vertex $v$ happens when $v$ is a source.  Indeed, at the first round of flips, the neighbors of $v$ that have already been flipped correspond precisely to the edges that were directed towards $v$ in the initial monotonic ordering, which have thus been reversed by the flips, making $v$ a source.

  After each vertex of~$\hat G$ has been flipped, the directed graph is again monotonic.  This proves the claim.  It follows that $v$ cannot be flipped in Step~3 by \autoref{source that cannot be flipped}.
  \end{itemize}
  This implies that the flip sequence~$F$ of Step~3 has length $O(n^2)$ by \autoref{not flipping the root}, as desired.  Thus, overall, if not interrupted, the routine terminates after $O((m+n)n)$ flips if not interrupted.  Since there are $O(n)$ balancings or shortenings, the total number of flips, balancings, and shortenings is $O((m+n)n^2)$.

  To prove the claimed running time of $O((m+n)^2n^2)$ time, it remains to note that finding and applying the next move, or correctly asserting that no move can be applied anymore, takes $O(m+n)$ time. Indeed, $\hat G$ and $\hat f$ can be computed in $O(n)$ time by constructing the clusters of $f$. On $\hat G$ finding a flip, or correctly asserting that there is none, takes $O(m+n)$ time. Same for the shortenings. Concerning the balancings, an application of \autoref{find balancing} dermines in $O(m+n)$ time if there is a simple closed walk $C$ in $\hat G$ such that $\hat f \circ C$ makes only $3$-turns, is pulled left, and is not pulled right. If there is none, then no balancing is available. Otherwise, $C$ can be balanced, and \autoref{find balancing} computes both $C$ and the subgraph $\hat G_0$ of $\hat G$ that will move with $C$ during the balancing. The modification brought to $f$ by the balancing then takes $O(m+n)$ time.
\end{proof}

%%%%%%%%%%%%%%%%%%%%%%%%%%%%%%%%%%%%%%%%%%%%%%%%%%%%%%%%%%%%%%%%%%%
%%%%%%%%%%%%%%%%%%%%%%%%%%%%%%%%%%%%%%%%%%%%%%%%%%%%%%%%%%%%%%%%%%%
%%%%%%%%%%%%%%%%%%%%%%%%%%%%%%%%%%%%%%%%%%%%%%%%%%%%%%%%%%%%%%%%%%%

\section{Extensions to surfaces with boundary}\label{boundary}

In this section we extend the results of Sections~\ref{sec:tutte}~and~\ref{sec:harmonizing} to reducing triangulations with boundary. To mimic the classical setting of Tutte's theorem, we consider the constraint of attaching vertices to the boundary of the surface. We formalize this constraint in the following definitions. Let $S$ be a surface with non-empty boundary (such as the disk). Let $G$ be a graph, and let $f : G \to S$ be a map. We use the following conventions: the boundary $\partial S$ of $S$ is directed so that the interior of $S$ lies on its right, and the rotation system of a graph embedded on $S$ records, for every vertex $v$ embedded on $S$, the \emph{clockwise} order of the edges meeting $v$.

\paragraph{Anchor.} An \emphdef{anchor} is a (possibly empty) set $A$ of vertices of $G$ mapped to $\partial S$ by $f$, together with linear orderings of those vertices mapped to the same point of $\partial S$. Note that some vertices of $G$ can be mapped to $\partial S$ without belonging to $A$.

\paragraph{Untangling relatively to an anchor.} Let $f' : G \to S$ be obtained from $f$ by sliding infinitesimally along $\partial S$ the images of the vertices in $A$, so as to separate them in the orders prescribed by $A$. We say that $f$ can be \emphdef{untangled relatively to $A$} if there is an embedding homotopic to $f'$, where the homotopy fixes the image of every vertex in $A$.

\paragraph{Weak embeddings relative to an anchor.} Informally, we say that $f$ is a weak embedding \emph{relative to} $A$ if there exist embeddings arbitrarily close to $f$ in which the vertices in $A$ are embedded along $\partial S$, and in the orders prescribed by $A$. We shall not use this definition as is, and instead we consider an equivalent formulation in the case where $f$ is a drawing in the 1-skeleton $T^1$ of a reducing triangulation $T$ of $S$. This formulation better suits our needs. It is based on the combinatorial formulation of \autoref{sec weak embeddings}. More precisely, we extend $T^1$ to a graph $M^*$ equipped with a rotation system, we extend $G$ and $f$ to a graph $G^*$ and a drawing $f^* : G^* \to M^*$ as follows. Consider every two consecutive edges $e$ and $f$ along the boundary of $T$, and let $x$ be the vertex of $T$ between $e$ and $f$. Let $v_1, \dots, v_k$ be the $k \geq 0$ vertices in $A$ that are mapped to $x$ by $f$, in the order prescribed by $A$. In $M^*$, create $k$ degree one vertices attached to $x$ by edges $s_1, \dots, s_k$: place them in the rotation system of $M^*$ so that $e, s_1, \dots, s_k, f$ are consecutive around $x$. Then for every $1 \leq i \leq k$, create an edge incident to $v_i$ in $G^*$, and map this edge to $s_i$ in $f^*$. We say that $f$ is a \emphdef{weak embedding relative to} $A$ if $f^*$ is a weak embedding.

Our result is the following:

\begin{theorem}\label{tutte with boundary}
Let $S$ be a surface with boundary. Let $T$ be a reducing triangulation of $S$, with $m$ edges. Let $G$ be a graph of size $n$, and let $f : G \to T^1$ be simplicial. Let $A$ be an anchor for $f$. If $f$ can be untangled relative to $A$, then we can compute in $O((m+n)^2 n^2)$ time a simplicial map $f': G \to T^1$, homotopic to $f$ relatively to $A$, weak embedding relative to $A$, not longer than $f$.
\end{theorem}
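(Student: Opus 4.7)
The strategy is to reduce Theorem~\ref{tutte with boundary} to the closed-surface setting of Theorems~\ref{tutte theorem} and~\ref{harmonization theorem}, by filling each boundary component of $S$ with a surface of positive genus and encoding the anchor constraints via ``barbs'' placed inside the filling.

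Concretely, attach to each boundary component of $S$ a compact orientable surface of genus at least one, chosen so that the resulting closed surface $\bar S$ is neither a sphere nor a torus. Extend $T$ to a reducing triangulation $\bar T$ of $\bar S$ containing $T$ as a subcomplex, which is possible by the constructions recalled in Section~\ref{sec:reducing triangulations}, refined if needed by subdivision. For each boundary vertex $x$ of $T$ carrying anchored vertices $v_1,\ldots,v_k \in A$ in the order prescribed by $A$, augment $G$ with new vertices $v_1^\dagger, \ldots, v_k^\dagger$ and barb edges $b_i$ from $v_i$ to $v_i^\dagger$. Extend $f$ to a simplicial map $\bar f : \bar G \to \bar T^1$ by sending each $b_i$ to a distinct edge $s_i$ of $\bar T$ incident to $x$ and going into the filling, arranged in the rotation at $x$ so that $s_1,\ldots,s_k$ appear in the same cyclic order as $v_1,\ldots,v_k$ between the two boundary edges incident to $x$.

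The untangling hypothesis yields an embedding on $S$ homotopic to the infinitesimal perturbation of $f$ that separates the anchored vertices along $\partial S$; this embedding extends to an embedding of $\bar f$ on $\bar S$ by routing each barb inside the filling. Hence $\bar f$ is homotopic to an embedding on $\bar S$. Apply Theorem~\ref{harmonization theorem} to obtain a simplicial map $\bar g : \bar G \to \bar T^1$, harmonious, homotopic to $\bar f$ in $\bar S$, with no edge longer under $\bar g$ than under $\bar f$. Then Theorem~\ref{tutte theorem} implies that $\bar g$ is a weak embedding on $\bar S$. Set $f' := \bar g|_G$ and observe that the images $\bar g(b_i)$ of the barbs play the role of the degree-one edges of $M^*$, so that the extension of $f'$ by the barbs is exactly the map $(f')^* : G^* \to M^*$ whose being a weak embedding is, by definition, what it means for $f'$ to be a weak embedding relative to $A$.

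The main obstacle is to guarantee that $\bar g$ restricted to $G$ actually takes values in $T^1$ (not in the filling) and that anchored vertices remain fixed on $\partial S$. This will require careful choices: (i) one adds auxiliary barbs at each anchored vertex so that the flip, shortening, and balancing moves cannot dislodge it from $\partial S$, since enough edges at a vertex prevent the required $2_r$-turn from forming across the boundary; and (ii) a local case analysis near $\partial S$ of the three moves shows that any move involving only original $G$-edges produces images still inside $T^1$, because the $G$-edges at a boundary vertex of $\bar f$ never enter the filling and a $2_r$-turn formed by two such edges has its middle edge on the $S$ side of the rotation. Combined with the non-increasing length property of harmonization, this yields that $f'$ maps $G$ into $T^1$, is homotopic to $f$ relatively to $A$, and is not longer than $f$. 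The claimed running time follows directly from applying Theorem~\ref{harmonization theorem} to the drawing $\bar f$, which still has size $O(m+n)$.
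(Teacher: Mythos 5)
Your overall strategy---extend the surface by filling the boundary components, extend the triangulation, add edges at the anchored vertices, harmonize with Theorem~\ref{harmonization theorem}, then restrict back---is indeed the strategy the paper uses. You also correctly identify the crux: one must guarantee that the harmonized drawing maps $G$ back into $T^1$ and that the anchored vertices do not drift off $\partial S$. However, you explicitly defer the resolution of this obstacle (``This will require careful choices: (i) \ldots (ii) \ldots''), and the two items you sketch do not actually close the gap; this is where a genuinely new idea is required.

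The concrete difficulty your sketch does not handle: in your construction, each anchor produces a barb whose far endpoint is a degree-one vertex of the extended graph. A degree-one vertex always admits a shortening move (its edges span one consecutive edge), so the harmonization routine can pull each barb all the way back to $\partial S$ and collapse it, erasing the anchoring data. Your item (i) (``enough edges at a vertex prevent the required $2_r$-turn'') protects the anchored vertex itself from being flipped, but it does nothing to protect the barb's free endpoint. Your item (ii) is circular: you assert that ``the $G$-edges at a boundary vertex of $\bar f$ never enter the filling'' in order to conclude that moves keep $G$ inside $T^1$, but that assertion is precisely the invariant that needs to be established.

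The paper resolves this with a mechanism your proposal lacks. It forms the closed surface $T_\bullet$ by \emph{doubling}: attach a crown to each boundary component of $T$ to get $T_0$, take a mirror copy $T_1$ with orientation reversed and colors swapped, glue along the crowns' outer boundaries, and insert 3-gadgets to restore the degree constraints. The drawing is likewise doubled, and each barb of $G_0$ is identified at its far endpoint with the corresponding barb of the mirror copy $G_1$, so that the barb endpoints have degree two with their two incident edges pointing symmetrically across the gluing locus. This kills the spurious shortenings. Moreover, the barbs are mapped to edges $e_{i+3}$ in the crown chosen so that a buffer of at least three unused edges separates the barbs from the boundary of $T$ on each side (the paper's observation (O)); this makes it a checkable invariant that no flip, shortening, or balancing can move a barb or push any part of $G^\flat$ across the boundary into the crown, which is exactly what your proposal asserts without justification. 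Without the doubling-plus-buffer device (or some concrete substitute), the restriction step and the claim that the resulting homotopy is relative to $A$ are unproved, so the proposal as written has a genuine gap.
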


\begin{figure}
    \centering
    \includegraphics{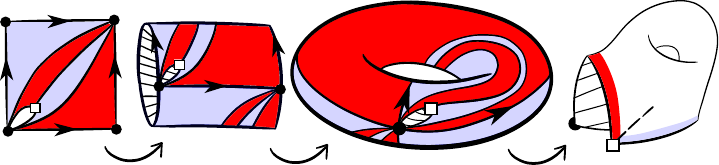}
    \caption{Construction of the $1$-gadget.}
    \label{fig:gadget}
\end{figure}

\begin{cfigure}
    \centering
    \includegraphics{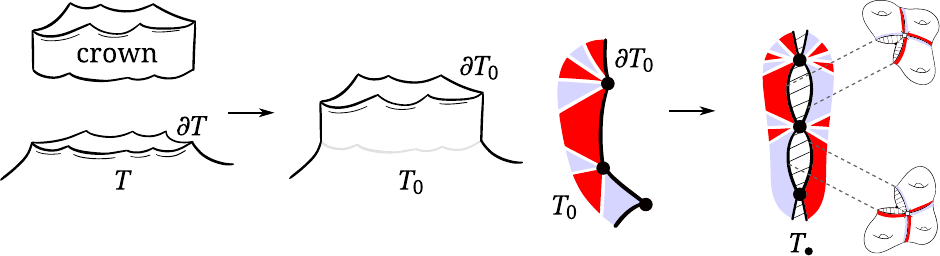}
    \caption{Construction of the triangulation $T_\bullet$ in the proof of \autoref{tutte with boundary}.}
    \label{fig:extension}
\end{cfigure}

The proof of \autoref{tutte with boundary} goes by extending $T$ to a reducing triangulation without boundary (in order to apply \autoref{harmonization theorem}). This extension step uses ad-hoc gadgets that we now define. Let $S$ be the surface of genus one with one boundary component. The \emphdef{1-gadget} is the reducing triangulation $T$ of $S$ depicted in \autoref{fig:gadget}. The boundary of $S$ is the union of two edges of $T$: one edge is incident to a blue face of $T$, and the other to a red face. The \emphdef{$3$-gadget} is then the reducing triangulation $T'$ obtained from disjoint 1-gadgets $T_1, T_2, T_3$ by identifying the blue-incident edge of $T_i$ with the red-incident edge of $T_{i+1}$ for every $i \in \{1,2\}$. The boundary of $T'$ is the union of the red-incident edge of $T_1$ and the blue-incident edge of $T_3$.

We call \emphdef{crown} any reducing triangulation $T$ of the annulus obtained from a circular list of $k \geq 2$ triangles $t_0, \dots, t_{k-1}$ by identifying some side of $t_i$ with some side of $t_{i+1}$ for every $i$, where indices are modulo $k$. 

\begin{proof}[Proof of Theorem~\ref{tutte with boundary}]
We shall extend $T$ to a reducing triangulation \emph{without boundary} $T_\bullet$, and then extend $f$ to a drawing on $T_\bullet$. The construction of $T_\bullet$ is as follows. First extend $T$ to a reducing triangulation $T_0$ by attaching a crown to each boundary component of $T$. Choose the crowns so that if $x$ is a vertex of the boundary of $T$, and if $k \geq 0$ vertices of $A$ are mapped to $x$ by $f$, then $x$ is incident to at least $k+6$ edges in the interior of its crown; Crucially, we make the observation (O) that the first three, and the last three edges incident to $x$ in the interior of the crown will not be used by the extended drawing. Now build $T_\bullet$ from $T_0$ as follows. See \autoref{fig:extension}. Consider a copy $T_1$ of $T_0$. Reverse the direction of $T_1$, and exchange the colors of its faces. Identify every edge of the boundary of $T_0$ with its copy in $T_1$. At this point the faces of $T_\bullet$ are properly colored, but $T_\bullet$ may not be a reducing triangulation since vertices on the boundary of $T_0$ may have low degree. This is fixed by cutting open in $T_\bullet$ every edge of the boundary of $T_0$, and by identifying the two cut paths with the two boundary paths of a 3-gadget (keeping the coloring of the faces proper).

Now extend $G$ and $f$ to a graph $G_\bullet$, and a simplicial map $f_\bullet : G_\bullet \to T^1_\bullet$; First extend $G$ and $f$ to some $G_0$ and $f_0 : G_0 \to T^1_0$, as follows. Consider every vertex $x$ of the boundary of $T$. Let $v_1, \dots, v_k$ for some $k \geq 0$ be the vertices of $A$ mapped to $x$ by $f$, in order. Let $e_1, \dots, e_{k+6}$ be consecutive edges around $x$ in the interior of the crown incident to $x$. For every $1 \leq i \leq k$, add in $G_0$ an edge $s_i$ incident to $v_i$, and let $f_0(s_i) := e_{i+3}$. We say that the newly created end-vertex of $s_i$ is a tip vertex of $G_0$, and that $s_i$ is a tip edge of $G$. Build $G_\bullet$ and $f_\bullet$ from $G_0$ and $f_0$ by considering a copy $G_1$ of $G_0$, and the mirror map $f_1 : G_1 \to T_1^1$, and by identifying in $G_0 \cup G_1$ and $f_0 \cup f_1$ every tip vertex of $G_0$ with its mirror vertex in $G_1$.

To conclude, transform $f_\bullet$ to a map $f'_{\bullet} : G_\bullet \to T_\bullet^1$ with the algorithm of \autoref{harmonization theorem}, and return $f' := f'_\bullet|_G$. Let us show why this is correct. Let $T^\flat$ be the sub-triangulation of $T_\bullet$ that is the union of $T$ and the mirror of $T$. Let $G^\flat$ be the subgraph of $G_\bullet$ that is the union of the two copies of $G$. By the above observation (O), applying the moves described in \autoref{sec:harmonizing} to $f_\bullet$ does not modify the images of the stem edges, and preserves the fact that $f_\bullet(G^\flat) \subset T^\flat$. In particular $f'$ is homotopic to $f$ relative to $A$. Moreover $f_\bullet$ is homotopic to an embedding in $T_\bullet$ since $f$ can be untangled relatively to $A$ in $T$. Thus, $f_\bullet'$ is a weak embedding by \autoref{tutte theorem}, and $f_\bullet'$ is not longer than $f_\bullet$. And so $f'$ is a weak embedding relative to $A$, not longer than $f$.
\end{proof}

\subsection*{Acknowledgments}

We thank the reviewers of this paper and of its conference version for their detailed comments that led to several improvements in the presentation and to a critical fix in the definition of a swap.

\printbibliography

%%%%%%%%%%%%%%%%%%%%%%%%%%%%%%%%%%%%%%%%%%%%%%%%%%%%%%%%%%%%%%%%%%%%%%
\pagebreak
\appendix

%%%%%%%%%%%%%%%%%%%%%%%%%%%%%%%%%%%%%%%%%%%%%%
\section{Proof of Lemmas~\ref{torus problem} and~\ref{higher genus problem}}\label{problem proof}

In this section we prove Lemmas~\ref{torus problem} and~\ref{higher genus problem}, which we restate for convenience.

\torusproblem*
\begin{figure}
    \centering
    \includegraphics[scale=0.4]{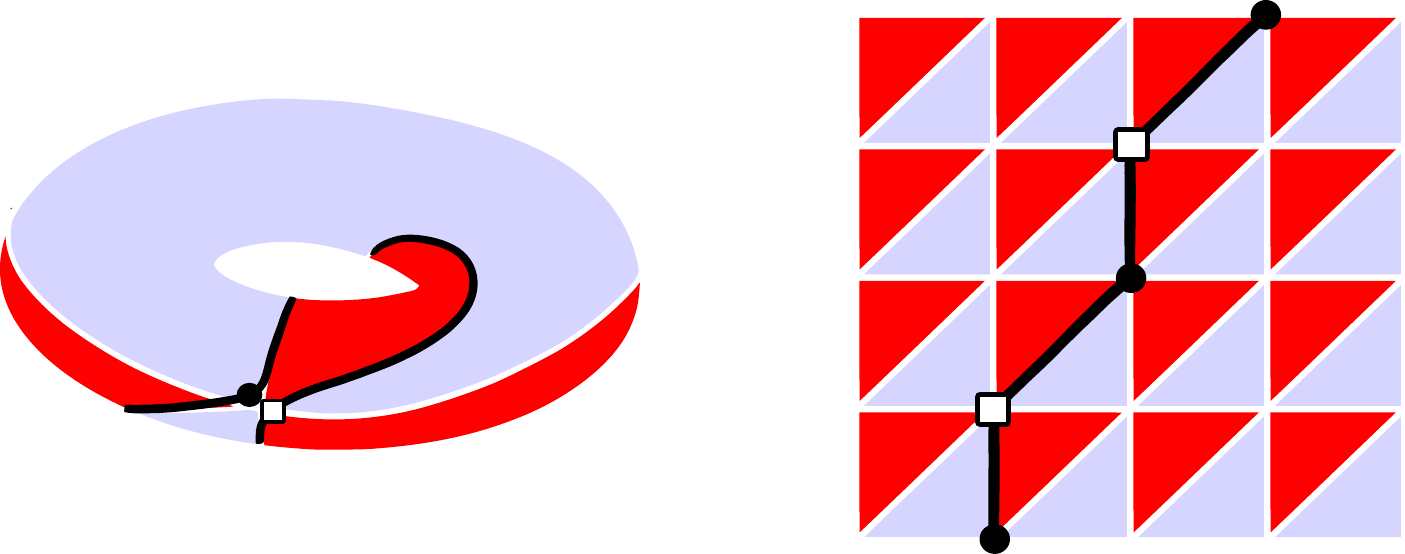}
    \caption{(Left) A reducing triangulation $T$ of the torus, and a closed walk $C$ of length two in $T$. (Right) A portion of the universal covering triangulation of $T$, and a portion of a lift of $C$. In $T$ there is no reduced closed walk freely homotopic to $C$. }
    \label{fig:torus}
\end{figure}

\begin{proof}
Consider the reducing triangulation $T$, and the closed walk $C$ of \autoref{fig:torus}. Assume the existence of a reduced closed walk $C'$ in $T^1$. If at some point $C'$ takes a directed edge of $T$ that sees red on its left, then $C'$ makes only $3_r$-turns. Otherwise, $C'$ makes only $3_b$-turns. In both cases $C'$ is not freely homotopic to $C$.
\end{proof}

\highergenusproblem*

\begin{proof}
There is a reducing triangulation $T$ of $S$ whose vertex degrees are all greater than or equal to eight, see~\cite[Figure~17]{de2024untangling}. There is a closed walk $C_2$ in $T^1$ that makes only $3_r$-turns; indeed every walk that makes only $3_r$-turns will repeat itself since $T$ is finite. Let $C_1$ be the closed walk obtained by pushing $C_2$ to the left in the way depicted in~\cite[Figure~9]{de2024untangling}: $C_2$ makes only $-3_r$-turns. Let $C'$ be a closed walk in $T^1$, freely homotopic to $C_1$ or $C_2$. Assume by contradiction that $C'$ is strongly harmonious. Then $C'$ is reduced, and it follows from the work of É.~Colin de Verdière, Despré, and Dubois~\cite{de2024untangling} that $C'$ is equal to $C_1$ or to $C_2$ (up to cyclic permutation and reversal). Indeed, $T$ fits their more restrictive definition of reducing triangulation. Also, if $C'$ does not make only $3_r$-turns, then $C'$ fits their more restrictive definition of reduced closed walk, and so $C' = C_1$ by~\cite[Proposition~3.4]{de2024untangling}. If $C'$ makes only $3_r$-turns, then reversing the colors of $T$ makes $C'$ reduced under their definition, and gives $C' = C_2$. So either $C_1$ or $C_2$ is strongly harmonious, which is not the case, a contradiction.
\end{proof}

\end{document}